\newtheorem{observation}{Observation}
\newcommand{\ie}{i.e., }
\newcommand{\etal}{et al.}
\newcommand{\perm}{\ensuremath{{\sf perm}}}
\newcommand{\rof}{\ensuremath{{\sf ROF}}}
\newcommand{\roabp}{\ensuremath{{\sf ROABP}}}
\newcommand{\PIT}{\textsf{PIT }}
\newcommand{\var}{\ensuremath{{\sf var}}}
\newcommand{\rank}{\ensuremath{{\sf rank}_{\varphi}}}
\newcommand{\brank}{\ensuremath{{\sf rank}_{\hat{\varphi}}}}
\newcommand{\ry}{{\sf RY}}
\newcommand{\distr}{\ensuremath{{\cal D}_{\cal B}}}
\title{Limitations of Sums of Bounded Read Formulas}
\author{Purnata Ghosal}{Department of Computer Science and Engineering, IIT Madras, Chennai, India}{purnata@cse.iitm.ac.in}{}{}
\author{B. V. Raghavendra Rao}{Department of Computer Science and Engineering, IIT Madras, Chennai, India}{bvrr@cse.iitm.ac.in}{}{}
\authorrunning{P. Ghosal and B.\,V. Raghavendra Rao} %TODO mandatory. First: Use abbreviated first/middle names. Second (only in severe cases): Use first author plus 'et al.'
\keywords{Algebraic Complexity Theory, Arithmetic Circuits, Lower Bounds} %TODO mandatory; please add comma-separated list of keywords
\begin{document}

\maketitle

\begin{CCSXML}
<ccs2012>
   <concept>
       <concept_id>10003752.10003777.10003783</concept_id>
       <concept_desc>Theory of computation~Algebraic complexity theory</concept_desc>
       <concept_significance>500</concept_significance>
       </concept>
 </ccs2012>
\end{CCSXML}

\begin{abstract}
Proving super polynomial size lower bounds for  various classes of arithmetic circuits computing explicit polynomials is a very important and challenging task in algebraic complexity theory.    We study representation of polynomials as  sums of weaker models such as read once formulas (ROFs) and read once oblivious algebraic branching programs (ROABPs). We prove:
\begin{enumerate}
\item[(1)] An exponential separation between sum of ROFs and read-$k$ formulas for  some constant $k$.
\item[(2)] A sub-exponential separation between sum of ROABPs and syntactic multilinear ABPs.
\end{enumerate}

Our results are based on analysis of the partial derivative matrix under different distributions. These results highlight richness of bounded read restrictions in arithmetic formulas and ABPs. 

Finally, we consider a generalization of multilinear ROABPs known as strict-interval ABPs defined in [Ramya-Rao, MFCS2019]. We show that strict-interval ABPs are equivalent to ROABPs upto a polynomial size blow up. In contrast, we show that interval formulas are different from ROFs and also admit depth reduction which is not known in the case of strict-interval ABPs.

\end{abstract}
\section{Introduction}
\label{sec:intro}
Polynomials are one of the fundamental mathematical objects and have wide applications in Computer Science. Algebraic Complexity Theory aims at a classification of polynomials based on their computational complexity. In his seminal work, Valiant~\cite{Val79} laid foundations of Algebraic Complexity Theory and popularized arithmetic circuits as a natural model of computation for polynomials. He proposed the permanent polynomial  $\perm_n$:

$$\perm_n = \sum_{\pi \in S_n} \prod_{i=1}^n x_{i\pi(i)}, $$ 

as the primary  representative of intractability in algebraic computation. In fact, Valiant\cite{Val79} conjectured that the complexity of computing $\perm_n$ by arithmetic circuits is different from that of the determinant function which is now known as Valiant's hypothesis.  

One of the important offshoots of Valiant's hypothesis is the arithmetic circuit lower bound problem: prove a super polynomial lower bound on the size of an arithmetic circuit computing an explicit polynomial of polynomial degree.  Here,  an  explicit polynomial is one whose coefficients are efficiently computable.  Baur and Strassen~\cite{BS83} obtained a super linear lower bound on the size of any arithmetic circuit computing the sum of powers of variables. This  is the best known size lower bound for general classes of arithmetic circuits. 

Lack of improvements in the size lower bounds for general arithmetic circuits lead the community to investigate restrictions on arithmetic circuits. Restrictions considered in the literature  can be broadly classified into two categories: syntactic and semantic. Syntactictic restrictions considered in the literature  include     restriction on fan-out \ie arithmetic formulas, restriction on depth  \ie bounded depth circuits~\cite{GK98,GR00,SW01}, and  the related model of   algebraic branching programs.   Semantic restrictions  include  monotone arithmetic circuits~\cite{JS82,Yeh19,Sri19}, homogeneous circuits~\cite{Bur}, multilinear circuits~\cite{Raz06} and  noncommutative computation~\cite{Nis91}. 

Grigoriev and Razborov~\cite{GR00} obtained an exponential lower bound for the size of  a depth three arithmetic circuit computing the determinant or and permanent over finite fields.  In contrast, only almost cubic lower bound is known over infinite fields~\cite{KST16}. Explaining the lack of progress on proving lower bounds even in the case of depth four circuits, Agrawal and Vinay~\cite{AV08} showed that an exponential lower bound for the size of depth four circuits implies Valiant's hypothesis over any field.   This lead to intense research efforts in proving lower bounds for the size of constant depth circuits, the reader is referred to an excellent survey by Saptharishi \etal ~\cite{RP16} for details.  

Recall that a polynomial $p\in \mathbb{F}[x_1,\ldots,x_n]$ is said to be {\em multilinear} if every monomial in $p$  with non-zero coefficient is square-free. An arithmetic circuit is said to be multilinear if every gate in the circuit computes a multilinear polynomial. Multilinear circuits are natural models for computing multilinear polynomials. Raz~\cite{Raz09} obtained super polynomial lower bounds on the size of multilinear formulas computing the determinant or permanent.  Further, he gave a super polynomial separation between multilinear formulas and circuits~\cite{Raz06}. In fact, Raz~\cite{Raz09}  considered a syntactic  version of multilinear circuits known as {\em syntactic multilinear } circuits.  An arithmetic  circuit $C$ is said to be syntactic multilinear, if for every product gate $g=g_1 \times g_2$  the sub-circuits rooted at $g_1$ and $g_2$ are variable disjoint.   The syntactic version has an advantage that the restriction can be verified by examining the circuit whereas there is no efficient algorithm for testing if a circuit is  multilinear or not.  Following Raz's work, there has been significant interest in proving lower bounds on the size of syntactic multilinear circuits. Exponential separation of constant depth multilinear circuits is known~\cite{CELS18}, while the best known lower bound for unbounded depth syntactic multilinear circuits is only almost quadratic~\cite{AKV20}.%, yet better than known size lower bounds for general arithmetic circuits. 

An {\em Algebraic Branching Program} (ABP)  is a model of computation for polynomials that generalize arithmetic formulas and were studied by Ben-Or and Cleve~\cite{BC92} who showed that ABPs of constant width are equivalent to arithmetic formulas. Nisan~\cite{Nis91}  proved exponential size lower bound for the size of an ABP computing the permanent  when the variables are non-commutative.   It is known that polynomial families computed by  ABPs are   the same as families of polynomials computed by skew circuits, a restriction of arithmetic circuits where every product gate can have at most one non-input gate as a predecessor~\cite{MP08}. Further, skew arithmetic circuits are known to characterize the complexity of determinant~\cite{Tod92}. Despite their simplicity compared to arithmetic circuits, the best known lower bound for size of ABPs is only quadratic~\cite{Kum19, CKSV20}. Even with the restriction of syntactic multilinearity, the best known size lower bound for ABPs is only quadratic~\cite{Jan08}. However, a super polynomial separation between syntactically multilinear formulas and ABPs is known~\cite{DMPY12}.   
 
Proving super quadratic size lower bounds for syntactic multilinear ABPs (smABPs for short) remains a challenging task.  Given that there is no promising approach yet to prove super quadratic size lower bounds for smABPs, it is imperative to consider further structural restrictions on smABPs and formulas to develop finer insights into the difficulty of the problem. Following the works in~\cite{RR18,RR19a,RR19}, we study syntactic multilinear formulas and smABPs with restrictions on the number of reads of variables and the order in which variables appear in a smABP. 

\subparagraph*{Models and Results:}  
%In this article we consider generalizations of ROFs and ROABPs 
\noindent\textbf{(1) Sum of ROFs:} 
A {\em read-once formula} (ROF) is a formula where every variable occurs exactly once as a leaf label. ROFs are syntactic multilinear by definition and have received wide attention in the literature. Volkovich~\cite{Vol16} gave a complete characterization of polynomials computed by ROFs. Further, Minahan and Volkovich~\cite{MV18} obtained a complete derandomization of the polynomial identity testing problem on ROFs. While most of the multilinear polynomials are not computable by ROFs~\cite{Vol16}, sum of ROFs, denoted by $\Sigma\cdot\rof$ is a natural model of computation for multilinear polynomials. Shpilka and Volkovich showed that a restricted form of $\Sigma\cdot\rof$ requires linear summands to compute the monomial $x_1x_2\cdots x_n$. Further, Mahajan and Tawari~\cite{MT18} obtained a tight lower bound on the size of $\Sigma\cdot\rof$ computing an elementary symmetric polynomial.  Ramya and Rao~\cite{RR19a}  obtained an exponential lower bound on the number of ROFs required to compute a polynomial in $\VP$.  In this article, we  improve the lower bound in~\cite{RR19a} to obtain an exponential separation between read-$k$ formulas and $\Sigma\cdot\rof$ for  a sufficiently large constant $k$. Formally, we prove:

\begin{restatable}{theorem}{sumrof}
%\label{thm:sumrof}
There is constant $k>0$ and  a family of multilinear polynomials $f_{\textsf{PRY}}$ computable by  read-$k$ formulas such that if  $ f_{\textsf{PRY}} = f_1+f_2+\dots + f_s$, where $f_1,\ldots, f_s$ are ROFs,  then $s= 2^{\Omega(n)}$.
% for some constant $k$.
\end{restatable}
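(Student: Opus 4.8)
The plan is to argue via the partial derivative matrix (equivalently, the evaluation dimension). For an equipartition $\varphi$ of the variable set $X$, $|X|=n$, into halves $Y\sqcup Z$, let $\calM_\varphi(f)$ be the $2^{n/2}\times 2^{n/2}$ coefficient matrix whose rows and columns are indexed by the multilinear monomials over $Y$ and over $Z$ respectively, and write $\rank(f)$ for its rank. Three standard facts drive the argument: (a) $\rank(f+g)\le \rank(f)+\rank(g)$; (b) if $f,g$ are variable disjoint then $\calM_\varphi(fg)=\calM_\varphi(f)\otimes\calM_\varphi(g)$, so rank multiplies; (c) if $f$ uses only variables from $Y'\sqcup Z'$ then $\rank(f)\le 2^{\min(|Y'|,|Z'|)}$. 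It therefore suffices to produce a distribution $\distr$ on equipartitions under which every \emph{single} \rof\ has rank far below $2^{n/2}$ with overwhelming probability, together with a read-$k$ polynomial $f_{\textsf{PRY}}$ whose rank stays $2^{n/2-o(n)}$ under $\distr$; sub-additivity (a) then forces $2^{\Omega(n)}$ summands.

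\textbf{Step 1: a single ROF has low rank under $\distr$.} Every \rof\ decomposes recursively as $F=F_1\op F_2$ with $\op\in\{+,\times\}$ and $F_1,F_2$ variable disjoint, so the variable sets of the sub-\rof s of $F$ form a laminar family $\mathcal L(F)$ of size $O(n)$. Unwinding this recursion with (a)--(c) --- ranks add at $+$ gates and multiply at $\times$ gates --- gives $\rank(F)\le 2^{m_\varphi(F)}$, where $m_\varphi(F)$ measures how balanced $\varphi$ is across $\mathcal L(F)$; in particular $m_\varphi(F)=n/2$ forces $\varphi$ to be near-balanced on essentially all large members of $\mathcal L(F)$, which is the ``product of balanced blocks'' extremal \rof. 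Building on Ramya--Rao~\cite{RR19a}, I would take $\distr$ to be an equipartition forced to be $\Omega(|S|)$-unbalanced on every sufficiently large set $S$ in a fixed hierarchical family of subsets of $X$. For any fixed $F$, laminarity of $\mathcal L(F)$ together with these constraints yields $m_\varphi(F)\le n/2-\Omega(n)$, and the residual randomness in $\distr$ concentrates this, so
\[
\Pr_{\varphi\sim\distr}\big[\,\rank(F)\ge 2^{n/2-\epsilon n}\,\big]\le 2^{-\Omega(n)}
\]
for some fixed $\epsilon>0$. Hence if $s\le 2^{cn}$ for a small constant $c>0$, a union bound over $F_1,\dots,F_s$ leaves some $\varphi$ in the support of $\distr$ with $\rank(F_i)<2^{n/2-\epsilon n}$ for every $i$ simultaneously.

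\textbf{Step 2: the hard polynomial, and conclusion.} Define $f_{\textsf{PRY}}$ as a bounded-read adaptation of the Raz--Yehudayoff-style full-rank polynomial --- a rewiring of the complete-binary-tree construction in which every variable labels at most $k=O(1)$ leaves. Two properties are required. First, \emph{$f_{\textsf{PRY}}$ is computed by a polynomial-size read-$k$ formula}, exhibited directly from its recursive definition. Second, \emph{$\rank(f_{\textsf{PRY}})\ge 2^{n/2-o(n)}$ for all (or a $1-2^{-\Omega(n)}$ fraction of) $\varphi$ in the support of $\distr$}; this adapts the ``full rank under a random partition'' analysis --- exhibiting a large set of linearly independent rows of $\calM_\varphi$ --- to the structured distribution $\distr$, checking that the constraints imposed on $\distr$ to defeat \rof s are mild enough to preserve near-full rank of $f_{\textsf{PRY}}$. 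Combining with Step 1: if $s\le 2^{cn}$, fix $\varphi$ good for both events; then by (a),
\[
2^{n/2-o(n)}\le \rank(f_{\textsf{PRY}})=\rank\Big(\sum_{i=1}^{s}F_i\Big)\le \sum_{i=1}^{s}\rank(F_i)< s\cdot 2^{n/2-\epsilon n},
\]
so $s>2^{\epsilon n-o(n)}$; together with the remaining case $s>2^{cn}$, this gives $s=2^{\Omega(n)}$, completing the separation.

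\textbf{Main obstacle.} The crux is the tension between Steps 1 and 2: $\distr$ must be skewed enough to push \emph{every} \rof\ below $2^{n/2-\epsilon n}$ --- which requires non-trivial unbalance against an arbitrary laminar family of variable sets, not merely a fixed pairing --- yet tame enough, and $f_{\textsf{PRY}}$ robust enough, that $\rank(f_{\textsf{PRY}})$ stays $2^{n/2-o(n)}$, all while keeping $f_{\textsf{PRY}}$ read-$k$ for a constant $k$. Designing this matched pair $(\distr,f_{\textsf{PRY}})$ --- and in particular producing a bounded-read polynomial that remains ``full-rank-robust'' under the ROF-defeating distribution, which is precisely the improvement over the in-$\VP$ construction of~\cite{RR19a} --- is where the real work lies.
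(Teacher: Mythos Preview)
Your high-level framework is exactly right and matches the paper: partial derivative matrix rank, sub-additivity, a tailored distribution $\distr$ on equipartitions under which every single \rof\ has rank $\le 2^{n/2-\Omega(n)}$ with probability $1-2^{-\Omega(n)}$, and a read-$k$ polynomial that stays (near-)full rank on the support of $\distr$. The gap is entirely in the concrete design of the pair $(\distr,f_{\textsf{PRY}})$, and the directions you sketch for both are off.

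For the hard polynomial, you suggest ``a rewiring of the complete-binary-tree construction in which every variable labels at most $k=O(1)$ leaves.'' The paper does something much simpler and it is worth internalising: fix a constant $r$, partition $X$ into $n/r$ blocks $B_1,\ldots,B_{n/r}$ of size $r$ each, and set $f_{\textsf{PRY}}=\prod_{i} f_{\sf RY}(B_i)$. Each factor is a \emph{constant}-variate polynomial, so its sum-of-monomials formula has constant size; the product is therefore trivially read-$k$ for $k=2^{O(r)}$. No delicate rewiring is needed.

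For the distribution, you propose forcing $\varphi$ to be $\Omega(|S|)$-\emph{unbalanced} on every large set in a fixed hierarchical family. This is backwards, and it is not clear how it would defeat an \rof\ whose laminar family is unrelated to your fixed family, nor how any polynomial could then retain near-full rank. The paper's $\distr$ is the opposite: sample, independently for each block $B_i$, a uniformly random equipartition of $B_i$; concatenate. Under \emph{every} such $\varphi$, each $f_{\sf RY}(B_i)$ has full rank $2^{r/2}$ (this is the Raz--Yehudayoff property), so by multiplicativity $\rank(f_{\textsf{PRY}})=2^{n/2}$ with probability $1$ --- no $o(n)$ loss and no second union bound needed. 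The \rof\ upper bound then comes not from designed unbalance but from residual randomness: even after conditioning on block-balance, a random $\varphi\sim\distr$ makes a constant fraction of the bottom-level variable pairs (the type-A/B gates of an arbitrary \rof) monochromatic with high probability. The paper's analysis splits those pairs into ones lying inside a single block (handled by the randomness of each $\varphi_i$) and ones straddling two blocks (handled via a maximal matching on the auxiliary block-graph), adapting the Ramya--Rao gate-type bound to $\distr$.

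In short, the ``main obstacle'' you identify is real, but the resolution is not a robust binary-tree polynomial plus an unbalancing distribution; it is constant-size blocks plus a block-wise-balanced distribution, which makes Step~2 trivial and leaves only the adaptation of \cite{RR19a} to $\distr$ for Step~1.
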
  
%This establishes limitations of $\Sigma\cdot\rof$

\noindent\textbf{(2) Sum of ROABPs: } 
A natural generalization of ROFs are read-once oblivious branching programs (ROABPs). In an ROABP, a layer reads at most one variable and every variable occurs in exactly one layer. Arguments in~\cite{Nis91} imply that any ROABP computing the permanent and determinant requires exponential size. Kayal \etal~\cite{KNS16} obtain an exponential separation between the size of ROABPs and depth three multilinear formulas. 
In~\cite{RR18}, an exponential lower bound for the sum of ROABPs computing a polynomial in $\VP$ is given. We improve this bound to obtain a super polynomial separation between sum of ROABPs and smABPs:
\begin{restatable}{theorem}{sumroabp}
\label{thm:sumroabp-lb}
There is a multilinear polynomial family $\hat f$ computable by smABPs of polynomial size such that  if $\hat{f}=f_1+\ldots+f_s$, each $f_i \in \mathbb{F}[X]$ being computable by a ROABP of size $\text{poly}(n)$, then $s = \text{exp}(\Omega(n^\epsilon))$ for  some $\epsilon<1/500$.
\end{restatable}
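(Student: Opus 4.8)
The plan is to establish the separation via the partial-derivative (evaluation-dimension) matrix of a multilinear polynomial with respect to a balanced bipartition of its variable set, analysed under a carefully chosen distribution over such bipartitions, following the line of Raz, Raz--Yehudayoff and Ramya--Rao. For a multilinear $f \in \mathbb{F}[X]$ and a bipartition $X = Y \sqcup Z$ with $|Y| = |Z|$, let $M^{(Y,Z)}_f$ denote the matrix whose rows and columns are indexed by the multilinear monomials in $Y$ and in $Z$, with $(m_Y, m_Z)$-entry equal to the coefficient of $m_Y m_Z$ in $f$. Two facts are the engine of the argument: rank is subadditive, $\mathrm{rank}(M^{(Y,Z)}_{f+g}) \le \mathrm{rank}(M^{(Y,Z)}_{f}) + \mathrm{rank}(M^{(Y,Z)}_{g})$; and for a \roabp\ of width $w$ in variable order $\pi$, one has $\mathrm{rank}(M^{(Y,Z)}) \le w^{O(b)}$, where $b$ is the number of maximal $Y/Z$-blocks that the bipartition $(Y,Z)$ induces along the sequence $\pi$ (obtained by inserting a width-$w$ ``cut'' at each block boundary and writing $M^{(Y,Z)}$ as a sum of $w^{O(b)}$ rank-one matrices; in particular the rank is at most $w$ when $Y$ is a prefix of $\pi$). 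The target is a distribution $\distr = \mathcal{D}_{\mathcal{B}}$ over balanced bipartitions that keeps $b$ small for \emph{every} order while keeping $\mathrm{rank}(M^{(Y,Z)}_{\hat f})$ large.

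For the hard polynomial I would take $\hat f$ to be a Raz--Yehudayoff-type multilinear polynomial on $n$ variables arranged into $t = \Theta(n^{c\epsilon})$ disjoint blocks, and I would need two properties. First, $\hat f$ must be computable by a syntactic multilinear ABP of size $\mathrm{poly}(n)$; I would prove this by exhibiting such an smABP directly, scanning the blocks in a canonical order and maintaining in the ABP layers only the $\mathrm{poly}(n)$-size partial state that the Raz--Yehudayoff gadget recursion needs, with syntactic multilinearity following from the variable-disjointness of the gadgets. Second, under $\distr$ the matrix of $\hat f$ must have large rank: $\Pr_{(Y,Z) \sim \distr}\!\left[\mathrm{rank}(M^{(Y,Z)}_{\hat f}) \ge 2^{\Omega(n^\epsilon)}\right] \ge 3/4$, which is a direct rank computation for $\hat f$ specialised to a ``typical'' bipartition in the support of $\distr$; here $\distr$ splits each block between $Y$ and $Z$ according to a fixed local rule, independently across blocks, conditioned on $|Y| = |Z|$.

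The technical core is the claim that $\distr$ is tame for all orders: for every variable order $\pi$, with probability at least $1 - \exp(-\Omega(n^{c'\epsilon}))$ over $(Y,Z) \sim \distr$ the bipartition induces at most $O(n^\epsilon/\log n)$ blocks along $\pi$, so that, combined with the \roabp\ rank bound above, every $\mathrm{poly}(n)$-width \roabp\ in order $\pi$ has $\mathrm{rank}(M^{(Y,Z)}) \le 2^{o(n^\epsilon)}$. Granting this, suppose $\hat f = f_1 + \cdots + f_s$ with each $f_i$ a $\mathrm{poly}(n)$-size \roabp\ in some order $\pi_i$. Taking a union bound over $i \in [s]$ against the tameness events and combining with the high-rank property of $\hat f$, we find that if $s \le \tfrac14 \exp(\Omega(n^{c'\epsilon}))$ then some bipartition $(Y,Z)$ in the support of $\distr$ satisfies $\mathrm{rank}(M^{(Y,Z)}_{\hat f}) \ge 2^{\Omega(n^\epsilon)}$ while $\mathrm{rank}(M^{(Y,Z)}_{f_i}) \le 2^{o(n^\epsilon)}$ for every $i$; subadditivity then forces $s \ge 2^{\Omega(n^\epsilon)} / 2^{o(n^\epsilon)} = 2^{\Omega(n^\epsilon)}$. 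In either case $s = \exp(\Omega(n^\epsilon))$, and pushing the constants from the smABP construction, the high-rank estimate, and the tameness estimate through the argument is what fixes an admissible value $\epsilon < 1/500$.

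I expect the tameness lemma to be the main obstacle. An adversarial order $\pi_i$ may interleave variables drawn from many different blocks, so bounding the number of induced blocks requires a concentration argument that leverages the independence of $\distr$ across the $t$ blocks together with the fact that each block can contribute only a bounded number of $Y/Z$-crossings, with a tail sharp enough to survive the union bound over all $s$ orders (if the naive estimate is too weak here, one would fall back on a random-restriction preprocessing step to simplify the \roabp s before analysing the bipartition). The deeper difficulty is that this demand competes directly with the other two: $\distr$ must still leave $\hat f$ high-rank, and $\hat f$ must still be a $\mathrm{poly}(n)$-size smABP. Reconciling all three — tameness for every order, high rank of $\hat f$ under $\distr$, and a small smABP for $\hat f$ — is the balancing act that costs (at least) one full exponent and leaves only $\epsilon < 1/500$.
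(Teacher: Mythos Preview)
Your proposal has a genuine gap at exactly the point you flag as ``the main obstacle'': the tameness lemma is false for any block-based distribution of the kind you describe. If $\distr$ splits each of $t$ blocks between $Y$ and $Z$ according to some local rule, then an adversarial ROABP order $\pi$ can interleave variables so as to force $\Theta(n)$ maximal $Y/Z$-runs. For instance, if within each block the first half goes to one side and the second half to the other, an order that alternates a first-half variable with a second-half variable produces $\Theta(n/t)$ transitions per block; if the within-block split is uniformly random, any fixed order already has $\Theta(n/t)$ expected transitions per block. Either way $b = \Theta(n)$, not $O(n^{\epsilon}/\log n)$, and the bound $\mathrm{rank}\le w^{O(b)}$ is vacuous. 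The independence across blocks gives you no leverage here, because the number of transitions is governed by how $\pi$ threads through the \emph{within}-block bipartition, over which you have no probabilistic slack once a block's split is fixed. A random restriction does not rescue this: it reduces $n$ and $b$ proportionally.

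The paper proceeds quite differently. It takes $\hat f$ to be the arc-full-rank polynomial of Dvir, Malod, Perifel and Yehudayoff, which is built to be a $\mathrm{poly}(n)$-size smABP and to have $\rank(\hat f)=2^{n/2}$ under every arc partition $\varphi\in\mathcal D$. The rank upper bound for a single ROABP does \emph{not} go through a transition count; instead one cuts the ROABP order into $K\le n^{1/100}$ contiguous colour classes $S_1,\ldots,S_K$ and shows, via the DMPY colouring analysis (their graph $H(\Pi)$ of colours with many ``crossing'' pairs), that with probability $1-n^{-\Omega(K)}$ over $\Pi\sim\mathcal D$ one has $\big||Y\cap S_c|-|S_c|/2\big|>n^{1/5000}$ for $\Omega(K)$ colours. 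Each such imbalanced colour costs a factor $2^{n^{\delta}}$ in rank after paying only $w^{K}=n^{O(K)}$ for the cuts, which for $K=n^{1/1000}$ gives $\Pr_{\Pi}[\rank(f_i)\ge 2^{n/2-n^{\delta}}]\le n^{-\Omega(n^{1/1000})}$; subadditivity and a union bound over $i$ then force $s=\exp(\Omega(n^{\epsilon}))$ with $\epsilon<1/500$. The point is that arc partitions are engineered to be \emph{imbalanced on long contiguous segments of any order}, which is the right invariant for ROABPs; controlling the number of $Y/Z$ alternations along an arbitrary order, as you attempt, is not achievable with a local block-wise distribution.
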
  

\noindent\textbf{(3) Strict-interval ABPs and Interval formulas:} 
It may be noted that any sub-program of a ROABP computes a polynomial in an interval $\{x_i,\ldots, x_j\}$ of variables for some $i<j$. A natural generalization of ROABPs would be to consider smABPs where every sub-program computes a polynomial in some interval of variables, while a variable can occur in multiple layers. These are known as {\em interval} ABPs and were studied by Arvind and Raja~\cite{AR16} who obtained a conditional lower bound for the size of  interval ABPs. Ramya and Rao~\cite{RR19} obtained an exponential lower bound for a special case of interval ABPs known as {\em strict-interval } ABPs.  We show that strict-interval ABPs are equivalent to ROABPs upto polynomial size:

\begin{restatable}{theorem}{intervalabp}
%\label{thm:sint-roabp}
The class of strict-interval ABPs is equivalent to the class of ROABPs.
\end{restatable}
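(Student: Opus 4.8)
The plan is to treat the two inclusions separately, the first being immediate and the second carrying all of the work. For the easy direction, every $\roabp$ is already a strict-interval ABP: in a $\roabp$ with variable order $x_{\sigma(1)},\ldots,x_{\sigma(n)}$, rename $x_{\sigma(i)}$ to $x_i$; then the $k$-th layer reads $x_k$ (or a field constant), every node-to-node sub-program reads a contiguous block of variables, and labelling each node by the interval spanned by the paths through it witnesses the (strict) interval property. So the content of the theorem is the reverse simulation: from a strict-interval ABP $A$ of size $s$ computing $f\in\mathbb{F}[X]$, construct a $\roabp$ for $f$ of size $\mathrm{poly}(s,n)$.

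For that, I would use the classical width characterization of $\roabp$s: $f$ has a $\roabp$ of width $w$ in the order $x_{\pi(1)},\ldots,x_{\pi(n)}$ if and only if for every prefix $P=\{x_{\pi(1)},\ldots,x_{\pi(k)}\}$ the partial-derivative matrix $M_P(f)$ -- rows indexed by monomials over $P$, columns by monomials over $X\setminus P$ -- has rank at most $w$. So it suffices to exhibit a single global order $\pi$ for which all of these ranks are $\mathrm{poly}(s)$. I would read $\pi$ off the interval data of $A$: collecting the interval labels of all nodes and edges, the first step is to show that \emph{strictness} forces this collection to be a laminar family of sub-intervals of $\{1,\ldots,n\}$ -- any two labels attached to nodes that lie on a common source--sink path are comparable, and the strictness condition is exactly what forbids two labels properly crossing -- so that a left-to-right traversal of this laminar family produces an order $\pi$ consistent with every local ordering of variables that $A$ imposes. (For a general interval ABP such a consistent global order need not exist, which matches the fact that only conditional lower bounds are known there; strictness is the hypothesis that removes this obstruction.)

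With $\pi$ fixed, I would bound $\mathrm{rank}(M_P(f))$ for each prefix $P$. In the clean situation, $P$ induces an antichain $S$ of nodes of $A$ separating all $P$-labelled edges (on the source side) from all edges labelled by variables outside $P$ (on the sink side); then $f=\sum_{v\in S} g_v h_v$ with $g_v\in\mathbb{F}[P]$ and $h_v\in\mathbb{F}[X\setminus P]$, so $\mathrm{rank}(M_P(f))\le|S|$, and the interval structure together with syntactic multilinearity bounds $|S|$ by $\mathrm{poly}(s)$. Taking $w=\max_P \mathrm{rank}(M_P(f))=\mathrm{poly}(s)$ and feeding it back into the $\roabp$ characterization yields a $\roabp$ of width $\mathrm{poly}(s)$, hence of size $\mathrm{poly}(s,n)$, computing $f$.

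The main obstacle is the assumption hidden in that last step. In $A$ a single variable may be read in several layers, so a prefix $P$ of $\pi$ need not correspond to a single cut of the underlying DAG -- a path could read a $P$-variable, then a non-$P$ variable, then another $P$-variable -- and the naive separating set has size $\Omega(s\cdot\mathrm{depth})$. The heart of the proof is therefore a normalization step that, using the non-crossing (laminar) structure supplied by strictness, ``sorts'' the layers of $A$ so that along every source--sink path the variables occur in $\pi$-order, at the cost of only polynomial growth in size; equivalently, one argues directly that identifying nodes with the same coefficient vector over $P$ collapses the separating set to polynomial size. Making this sorting/collapsing argument precise -- and checking that strictness is preserved, or is no longer needed, after each step -- is where the real work lies; the remaining details are routine bookkeeping.
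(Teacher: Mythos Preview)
Your framework via the \roabp\ width characterization is sound in principle, but the proposal has a genuine gap at exactly the point you flag as ``where the real work lies,'' and the structural claim you rely on to get there is incorrect.

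First, the laminar claim is false. Even along a single $s$--$t$ path $s\to u\to v\to t$ the intervals $I_{s,v}$ and $I_{u,t}$ both contain $I_{u,v}$ yet neither need contain the other: take $I_{s,u}=[1,2]$, $I_{u,v}=[3,4]$, $I_{v,t}=[5,6]$, so that $I_{s,v}=[1,4]$ and $I_{u,t}=[3,6]$ properly cross. Intervals attached to nodes on \emph{different} paths are under no constraint at all. So there is no laminar family to traverse. Moreover, the order $\pi$ is already part of the data of a strict-interval ABP (every associated interval is a $\pi$-interval for one fixed $\pi$), so there is nothing to extract; without loss of generality $\pi$ is the identity.

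Second, your diagnosis of the obstacle is correct but the proposed cure (``sort the layers'') is the whole theorem, and you do not say how to do it. The phenomenon you worry about really occurs: the same strict-interval ABP can contain one $s$--$t$ path reading $x_1$ then $x_2$ and another reading $x_2$ then $x_1$, so for the prefix $P=\{x_1\}$ no antichain separates $P$-reads from non-$P$-reads. The missing idea is a clean dichotomy at the level of nodes. Call a node $v$ \emph{ascending} if $I_{s,v}\preceq I_{v,t}$ (that is, $\max I_{s,v}\le\min I_{v,t}$) and \emph{descending} if $I_{v,t}\preceq I_{s,v}$; disjointness of $I_{s,v}$ and $I_{v,t}$ makes every node exactly one of the two. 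A short case analysis then shows that no $s$--$t$ path can contain both an ascending and a descending node. Hence the ABP splits as $P=P_1+P_2$ where every path in $P_1$ reads variables in increasing order and every path in $P_2$ in decreasing order. Each $P_i$ is thus a one-ordered smABP, classically convertible to a \roabp\ with an $n$-factor blowup; reversing $P_2$ puts both in the same order, and the sum of two \roabp s in the same order is again a \roabp.

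This is the paper's route, and it sidesteps the rank machinery entirely. Your rank-based plan would in fact go through \emph{after} the ascending/descending split---each $P_i$ now admits the clean prefix-antichain you wanted---but at that point the direct conversion is shorter.
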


Finally, we examine the restriction of intervals in syntactic multilinear formulas. We show that unlike ROFs, interval formulas can be depth reduced (Theorem~\ref{thm:intf-depthred}). 

\subparagraph*{Related Work:} 
To the best of our knowledge, Theorem~\ref{thm:sumrof} is the first exponential separation between bounded read formulas and $\Sigma\cdot\rof$. Prior to this, only a linear separation between bounded read formulas and $\Sigma\cdot\rof$ was known~\cite{AMV11}. 

Ramya and Rao~\cite{RR19a} obtain an exponential separation between $\Sigma\cdot\rof$ and multilinear $\VP$. Our result is an extension of this result for the case of a simpler polynomial computable by bounded read formulas.  Mahajan and Tawari~\cite{MT18} obtain tight linear lower bound for $\Sigma\cdot\rof$ computing an elementary symmetric polynomial.  

Kayal, Nair and Saha~\cite{KNS16} obtain a separation between ROABPs and multilinear depth three circuits. The authors define a polynomial, efficiently computed by set multilinear depth three circuits, that has an exponential size ROABP computing it. This polynomial can be expressed as a sum of three ROFs. Later, Ramya and Rao~\cite{RR18} obtain a sub-exponential lower bound against the model of $\Sigma \cdot \roabp$ computing the polynomial defined by Raz and Yehudayoff~\cite{RY08}. Dvir \etal ~\cite{DMPY12} obtain a super-polynomial lower bound on the size of syntactic multilinear formulas computing a polynomial that can be efficiently computed by smABPs. We use the polynomial defined by \cite{DMPY12} and adapt their techniques to obtain a separation between smABPs and $\Sigma \cdot \roabp$. 
 
% Ramya and Rao~\cite{RR19} obtain an exponential lower bound against strict-interval ABPs computing the Raz-Yehudayoff polynomial~\cite{RY08}. This also implies an exponential lower bound on the size of ROABPs as we show in Theorem~\ref{thm:sint-roabp} that strict-interval ABPs and ROABPs denote the same class of circuits \ie these classes are equivalent.
 
 %Finally, we define the model of interval formulas as a generalisation of ROFs, and show that the depth reduction of arithmetic formulas to $O(\log n)$ depth, given by Brent~\cite{Bre74}, also holds for the model of interval formulas.

\subparagraph*{Organization of the Paper:} Section~\ref{sec:prelims} contains basic definitions of the models of computations, concepts and explicit polynomials used in the rest of the paper. The rest of the sections each describe results with respect to a particular bounded-read model. Section~\ref{sec:sumrof} describes the lower bound on the $\Sigma\cdot \rof$ model and Section~\ref{sec:sumroabp-lb} describes the lower bound on the $\Sigma\cdot \text{ROABP}$ model which follows using the same arguments as in the work of Dvir \etal \cite{DMPY12}. Section~\ref{sec:sintABP} shows that strict-interval ABP is a fresh way to look at ROABPs since the two models are equivalent. In Section~\ref{sec:intformula} we see that Brent's depth reduction result (\cite{Bre74}) holds for the class of interval formulas.

\section{Preliminaries}
\label{sec:prelims}
In this section, we present necessary definitions and notations. For more details, reader is referred to excellent surveys by Shpilka and Yehudayoff~\cite{SY10} and by Saptharishi \etal ~\cite{RP16}. 
% We first define the models of computation used in the subsequent section, followed by the complexity measure used in obtaining the lower bounds, and the explicit polynomials used directly or in constructing explicit polynomials.

\subparagraph*{Arithmetic Circuits: }
%We assume any polynomial $p$ is defined over the field $\mathbb{F}$ and the set of variables $X=\{x_1, \ldots,x_n\}$. In Algebraic Complexity Theory, a polynomial $p \in \mathbb{F}[X]$ is typically represented by an {\em arithmetic circuit}, formally defined as follows.
Let $X=\{x_1,\ldots, x_n\}$ be a set of variables. 
An arithmetic circuit $C$ over a field $\mathbb{F}$ with input $X$ is a directed acyclic graph (DAG) where the nodes have in-degree zero or two. The nodes of in-degree zero are called input gates  and  are labeled by  elements from $X \cup \mathbb{F}$.  Non-input gates of $C$ are called internal gates and are labeled from $\{+, times\}$.  Nodes of out degree zero are called output gates. Typically, a circuit has a single output gate.    Every gate $v$  in $C$ naturally computes a polynomial $f_v \in \mathbb{F}[X]$. The   polynomial computed by $C$ is the polynomial represented at  its output gate.  
The {\em size} of a circuit denoted by $\text{size}(C)$, is the number of gates in it,  and {\em depth} is  the length of the longest root to leaf path in $C$, denoted by $\text{depth}(C)$. An  {\em arithmetic formula} is a  circuit where the underlying undirected graph is a tree. For a gate $v$ in $C$, let $\var(v)$ denote the set of all variables that appear as leaf labels in the sub-circuit rooted at $v$.

Multilinear polynomials are polynomials such that in every monomial, the degree of a variable is either $0$ or $1$. Multilinear circuits, where every gate in the circuit computes a multilinear polynomial, are  natural models of computation for multilinear polynomials. A circuit $C$ is said to be {\em syntactic multilinear} if for every product gate $v= v_1\times v_2$ in $C$, we have $\var(v_1) \cap \var(v_2) =\emptyset$. By definition, a syntactic multilinear circuit is also multilinear and computes  a multilinear polynomial. 

An arithmetic formula $F$ is said to be a  {\em read-once formula} (ROF in short) if every input variable in $X$ labels at most one input gate in $F$.

 {\em Algebraic branching program} (ABP in short) is a model of computation of polynomials defined as analogous to the branching program model of computation for Boolean functions. An ABP $P$ is a layered DAG with layers $L_0,\ldots,L_m$ such  $L_0 = \{s\}$ and $L_m = \{t\}$ where $s$ is the start node and $t$ is the terminal node.
 Each edge is labeled by an element in $X \cup \mathbb{F}$.    The output of the ABP $P$ is the polynomial $p= \sum_{\rho \text{is a $s$ to $t$ path}} {\sf wt}(\rho)$, where ${\sf wt}(\rho)$ is  the product of edge labels in the path $\rho$. Further, for any two nodes $u$ and $v$ let $[u,v]_P$ denote the polynomial computed by the subprogram  $P_{uv}$ of $P$ with $u$ as the start node and $v$ as the terminal node. Let $X_{uv}$ denote the set of variables that appear as edge labels in the subprogram  $P_{uv}$. The size of an ABP $P$, denoted by $\text{size}(P)$ is the number of nodes in $P$.

In a {\em syntactic multilinear} ABP (smABP),  every $s$ to $t$ path   reads any input variables at most once. An ABP is  {\em oblivious} if every layer reads at most one variable.   A {\em read-once oblivious ABP} (ROABP) is an oblivious syntactic multilinear ABP where every variable can appear in at most one variable \ie for every $i$, there is at most one layer $j_i$ such that $x_i$ occurs as a label on the edges from $L_{j_i}$ to $L_{j_i +1 }$. 
%\begin{definition}
%A syntactically multilinear ABP $P$ is known as a ROABP if for an order $x_1, \ldots,x_n$ of the input variables, all edges labeled by $x_i$ occur between the layers $L_{i-1}$ and $L_i$, and along every $s$ to $t$ path in $P$ the input variables are read at most once, in the given order.
%\end{definition}

An interval on the set $\{1, \ldots, n\}$ with end-points $i,j \in [n]$, can be defined as $I= [i,j], ~i<j$, where $I=\{\ell \mid i,j \in [n], i \le \ell \le j\}$. An {\em interval of variables} $X_{ij}$ is defined such that $X_{ij} \subseteq  \{x_\ell \mid \ell \in I, ~I=[i,j]\}$, where $I$ is an interval on the set $\{1, \dots, n\}$. For an ordering $\pi \in S_n$, we define a $\pi$-interval of variables, $X_{ij} \subseteq \{x_{\pi(i)}, x_{\pi(i+1)}, \ldots, x_{\pi(j)}\}$. 
In~\cite{AR16}, Arvind and Raja defined a sub-class of syntactically multilinear ABPs known as {\em interval ABPs} and proved lower bounds against the same. Later, \cite{RR19} defined a further restricted version of interval ABPs, denoted by strict-interval ABPs, defined as follows.

\begin{definition}(\cite{RR19})\label{def:sint-ABP}
A strict interval ABP $P$ is a syntactically multilinear ABP where we have the following:
\begin{enumerate}
\item 
For any pair of nodes $u$ and $v$ in $P$, the indices of variables occurring in the sub-program $[u,v]_P$ is contained in some  $\pi$-interval $I_{uv}$  called the {\em associated interval} of $[u,v]_P$; and  

\item for any pairs of sub-programs of the form $[u,v]_P, ~[v,w ]_P$, the  associated$\pi$-intervals of variables are disjoint, 
\ie $I_{uv} \cap I_{vw} = \emptyset$.
\end{enumerate}
\end{definition}

It may be noted that in a strict interval ABP,   intervals associated with each sub-program need not be unique. We assume that the intervals associated are largest intervals with respect to set inclusion such that condition 2 in the definition above is satisfied.  

\subparagraph*{The Partial Derivative Matrix:}

We need the notion of partial derivative matrices introduced by Raz~\cite{Raz09} and Nisan~\cite{Nis91} as primary measure of complexity for multilinear polynomials. The partial derivative matrix of a polynomial $f\in \mathbb{X}$ defined based on a partition $\varphi:X \to Y \cup Z$ of the $X$ into two parts. We follow the definition in~\cite{Raz09}: 

\begin{definition}(Raz~\cite{Raz09}) Let $\varphi:X \to Y \cup Z$ be a partition of the input variables in two parts. Let $\mathcal{M}_Y$, $\mathcal{M}_Z$ be the sets of all possible multilinear  monomials in the  variables in $Y$ and $Z$ respectively. Then we construct the partial derivative matrix $M_{f^\varphi}$ for a multilinear polynomial $f$ under the partition $\varphi$ such that the rows of the matrix are indexed by monomials $m_i \in \mathcal{M}_Y$, the columns by monomials $s_j \in \mathcal{M}_Z$ and entry $M_{f^\varphi}(i,j)=c_{ij}$, $c_{ij}$ being the coefficient of the monomial $m_i\cdot s_j$ in $f$. We denote by $\rank(f)$ the rank of the matrix $M_{f^\varphi}$.
\end{definition}
We call $\varphi$   an equi-partition when $|X|=n$, $n$ even and $|Y|=|Z|=n/2$.

Raz~\cite{Raz09} showed the following fundamental property of $\rank$: 

\begin{lemma}\label{lem:rank-prop}
Let $g$ and $h$ be multilinear polynomials in $\mathbb{F}[X]$. Then, $\forall \varphi:X \to Y \cup Z$, we have the following.
\begin{description}
\item[Sub-additivity:]  $\rank(g+h) \le \rank(g) + \rank(h)$, and
\item[Sub-multiplicativity:]   $\rank(gh) \le \rank(g)\times \rank(h)$,
\end{description}
In both the cases, equality holds when  $\var(g) \cap \var(h) = \emptyset$. 
\end{lemma}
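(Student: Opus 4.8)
The plan is to derive both statements from the definition of $M_{f^\varphi}$ together with two elementary facts about matrices: $\rank(A+B)\le\rank(A)+\rank(B)$, and $\rank(A\otimes B)=\rank(A)\cdot\rank(B)$ for Kronecker products. Throughout I would identify a rank-one matrix $u\,v^{T}$ (rows indexed by $\mathcal{M}_Y$, columns by $\mathcal{M}_Z$) with the coefficient table of the \emph{split} polynomial $\bigl(\sum_{m} u(m)\,m\bigr)\cdot\bigl(\sum_{s} v(s)\,s\bigr)$, a product of a $Y$-polynomial and a $Z$-polynomial; so $\rank(f)$ is the least $r$ for which $f$ agrees, on every multilinear monomial of $Y\cup Z$, with a sum of $r$ split polynomials. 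This viewpoint is what makes the multiplicative bound transparent.

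For sub-additivity, I would first note that $m\cdot s\mapsto(\text{coefficient of }m\cdot s)$ is $\mathbb{F}$-linear, so $M_{(g+h)^\varphi}=M_{g^\varphi}+M_{h^\varphi}$ and the inequality is immediate. For the equality when $\var(g)\cap\var(h)=\emptyset$, observe that a row $m$ of $M_{g^\varphi}$ is entirely zero unless $\var(m)\subseteq\var(g)\cap Y$, and its nonzero columns lie in $\var(g)\cap Z$; the same holds for $h$. Variable-disjointness then puts the two matrices on disjoint sets of rows and of columns, so after reordering indices $M_{(g+h)^\varphi}$ is block-diagonal with blocks $M_{g^\varphi}$ and $M_{h^\varphi}$ padded by zero rows and columns, and the ranks add.

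For sub-multiplicativity, the key step is a convolution identity obtained by expanding $g\cdot h=\sum_{m,s,m',s'}M_{g^\varphi}(m,s)\,M_{h^\varphi}(m',s')\,(mm')(ss')$ and collecting the coefficient of a multilinear monomial $M\cdot S$: since $m\cdot m'$ is square-free only when $m,m'$ are disjoint with union $M$, one gets
\[
M_{(gh)^\varphi}(M,S)=\sum_{m\subseteq M}\ \sum_{s\subseteq S} M_{g^\varphi}(m,s)\, M_{h^\varphi}(M\setminus m,\,S\setminus s).
\]
Substituting rank decompositions $M_{g^\varphi}=\sum_{k\le r_1}u_k v_k^{T}$ and $M_{h^\varphi}=\sum_{l\le r_2}u'_l (v'_l)^{T}$ and regrouping, the right-hand side becomes $\sum_{k,l}\widetilde u_{k,l}\,\widetilde v_{k,l}^{\,T}$, where $\widetilde u_{k,l}(M)=\sum_{m\subseteq M}u_k(m)u'_l(M\setminus m)$ and $\widetilde v_{k,l}(S)=\sum_{s\subseteq S}v_k(s)v'_l(S\setminus s)$ --- equivalently, the product of two split polynomials is again split, hence has coefficient table of rank at most one. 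So $\rank(gh)\le r_1 r_2$. When $\var(g)\cap\var(h)=\emptyset$ each inner sum degenerates to a single term: $m$ must be the portion of $M$ lying in $\var(g)$ and $M\setminus m$ the portion in $\var(h)$ (and likewise for $s$, $S$), forcing $M\subseteq(\var(g)\cup\var(h))\cap Y$ with a unique such decomposition; restricted to those rows and columns, $M_{(gh)^\varphi}$ is exactly $M_{g^\varphi}\otimes M_{h^\varphi}$ and everything else vanishes, so $\rank(gh)=\rank(g)\cdot\rank(h)$.

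I expect the only delicate point to be the bookkeeping in the multiplicative case: one must check carefully that cross-products $m\cdot m'$ which are \emph{not} square-free contribute nothing to any multilinear monomial, so that the convolution identity is an exact equality and the Kronecker structure under disjointness holds literally. Everything else is routine linear algebra. (A minor caveat in the additive equality case: the constant monomial lies in the support of both $M_{g^\varphi}$ and $M_{h^\varphi}$ when $g$ and $h$ each have a nonzero constant term; this is harmless in the intended applications, e.g.\ when $g,h$ are homogeneous of positive degree.)
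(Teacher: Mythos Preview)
The paper does not supply its own proof of this lemma; it simply attributes the result to Raz and moves on. Your arguments for the two \emph{inequalities} and for the multiplicative equality under disjointness are correct and are the standard ones: linearity of the coefficient map gives $M_{(g+h)^\varphi}=M_{g^\varphi}+M_{h^\varphi}$, and under variable-disjointness the matrix $M_{(gh)^\varphi}$ is (up to zero-padding) the Kronecker product $M_{g^\varphi}\otimes M_{h^\varphi}$, so its rank is the product of the ranks.

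There is, however, a genuine gap in your treatment of the \emph{additive} equality --- and in fact the statement as printed is not quite right. Your block-diagonal argument asserts that when $\var(g)\cap\var(h)=\emptyset$ the matrices $M_{g^\varphi}$ and $M_{h^\varphi}$ live on disjoint sets of rows and of columns. This fails not only at the constant monomial (as your caveat notes) but more broadly: both matrices may have all their nonzero entries concentrated in the constant \emph{column} (or row). For instance, with $\varphi(x_1),\varphi(x_2)\in Y$, the polynomials $g=x_1$ and $h=x_2$ are variable-disjoint with no constant terms, yet $\rank(g)=\rank(h)=1$ while $\rank(g+h)=\rank(x_1+x_2)=1\neq 2$. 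So the additive equality does not hold in general under the hypothesis $\var(g)\cap\var(h)=\emptyset$; only the multiplicative one does, and that is the case the paper actually invokes in its applications.
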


%Now that we have defined the complexity measure of rank and established its properties, we proceed to examine the explicit polynomials in the literature that we use or adapt to construct new explicit polynomials in order to obtain the lower bounds.

\subparagraph*{Two Explicit Polynomials:}
Polynomials that exhibit maximum  rank ofthe partial derivative matrix under all or a large fraction of  equi-partitions   can be thought of as {\em high complexity} or {\em hard} polynomials.  We need two such families found in the literature. 

Raz and Yehudayoff~\cite{RY08} defined a multilinear  in $\VP$. To define this polynomial we denote an interval $\{a\mid i\le a\le j, a \in \mathbb{N}\}, ~i,j\in \mathbb{N}$ by $[i,j]$, and consider the sets of variables $X=\{x_1, \ldots, x_{2n}\}$, $W=\{w_{i,\ell,j}\}_{i,\ell,j \in [2n]}$. We denote it as the Raz-Yehudayoff polynomial and define it as follows.

\begin{definition}[{\bf Raz-Yehudayoff polynomial}, \cite{RY08}]\label{def:RYpoly}
Let us consider $f_{ij} \in \mathbb{F}[X,W]$ defined over the interval $[i,j]$. 
For $i\le j$, the polynomial $f_{ij}$ is defined inductively as follows. 
If $j-i=0$, then $f_{ij}=0$. For $|j-i|>0$,  
$$f_{ij} = (1+x_ix_j)f_{i+1,j-1} + \sum_{\ell \in [i+1, j-2]} w_{i,\ell,j} f_{i,\ell}f_{\ell+1,j},$$
where we assume without loss of generality, lengths of $[i,\ell], ~[\ell+1,j]$ are even and smaller than $[i,j]$. We define $f_{1,2n}$ as the Raz-Yehudayoff polynomial $f_{{\sf RY}}$.
\end{definition}
Note that, $f_{\sf RY}$ can be defined over any subset $X'\subseteq X$ such that $|X'|$ is even, by considering the induced ordering of variables in $X'$ and considering intervals accordingly.  We denote this polynomial as $f_{\sf RY}(X')$ for $X' \subseteq X$.
Raz and Yehudayoff showed: % the polynomial $f_\ry$ to be attaining full rank under an equi-parition $\varphi$: 
\begin{proposition}(\cite{RY08})\label{prop:RYrank}
Let $\mathbb{G}=\mathbb{F}(W)$ be the field of rational functions over the field $\mathbb{F}$ and the set of variables $W$. Then  for every equi-partition $\varphi: X \to Y\cup Z$,  $\rank(f_{{\sf RY}}) = 2^{n/2}$.
\end{proposition}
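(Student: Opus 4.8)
The plan is to follow the argument of Raz and Yehudayoff~\cite{RY08}. One cannot induct directly on equi-partitions, because an equi-partition of the full set $X$ need not restrict to an equi-partition of the sub-intervals $[i+1,j-1]$, $[i,\ell]$, $[\ell+1,j]$ appearing in the recursive definition of $f_{ij}$. So I would prove, by induction on the length of the interval, the stronger statement: for every interval $[i,j]$ of even length and every partition $\varphi\colon X\to Y\cup Z$, if $a$ (respectively $b$) is the number of variables among $x_i,\dots,x_j$ that $\varphi$ sends to $Y$ (respectively to $Z$), then $\rank(f_{ij})=2^{\min(a,b)}$ over $\mathbb{G}=\mathbb{F}(W)$; since $f_{ij}$ involves only $x_i,\dots,x_j$ together with the $W$-variables (which are scalars of $\mathbb{G}$), only the restriction of $\varphi$ to $\{x_i,\dots,x_j\}$ is relevant. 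Proposition~\ref{prop:RYrank} is the case $[i,j]=[1,2n]$ with $\varphi$ an equi-partition, so that $a=b=n$.

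The bound $\rank(f_{ij})\le 2^{\min(a,b)}$ is automatic, as $M_{f_{ij}^{\varphi}}$ has at most $2^{a}$ nonzero rows and at most $2^{b}$ nonzero columns. For the matching lower bound I would induct, the base case being $|j-i|=1$ (only even-length intervals occur in the recursion), where $f_{i,i+1}=1+x_ix_{i+1}$ and the four colourings of $\{x_i,x_{i+1}\}$ are checked directly: $\rank=2$ if $x_i,x_{i+1}$ get different colours, $\rank=1$ otherwise. For the inductive step, put $g:=(1+x_ix_j)\,f_{i+1,j-1}$ and $q_\ell:=f_{i,\ell}\,f_{\ell+1,j}$, so that by definition
$$M_{f_{ij}^{\varphi}} \;=\; M_{g^{\varphi}}\;+\;\sum_{\ell}\,w_{i,\ell,j}\,M_{q_\ell^{\varphi}},$$
the sum running over the admissible split points $\ell$. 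A check of subscripts shows that each transcendental $w_{i,\ell,j}$ occurs in $M_{f_{ij}^{\varphi}}$ only in the term it multiplies, and in none of $g$ or $q_{\ell'}$ with $\ell'\neq\ell$. Hence, setting all $w_{i,\ell',j}$ but one to $0$ and keeping the other $W$-variables generic, a maximal nonvanishing minor of $M_{q_\ell^{\varphi}}$ is, up to a nonzero scalar, the leading coefficient in the variable $w_{i,\ell,j}$ of the corresponding minor of $M_{g^{\varphi}}+w_{i,\ell,j}M_{q_\ell^{\varphi}}$, so that minor is a nonzero rational function, whence the same minor of $M_{f_{ij}^{\varphi}}$ is nonzero; and symmetrically for $g$. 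This gives $\rank(f_{ij})\ge\max\bigl(\rank(g),\ \max_\ell \rank(q_\ell)\bigr)$, so it suffices to exhibit, for every $\varphi$, one of $g$ or $q_\ell$ of rank $2^{\min(a,b)}$.

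Assume without loss of generality $a\le b$, so the target is $2^{a}$. The two factors of $g$ are variable-disjoint, as are those of each $q_\ell$, so Lemma~\ref{lem:rank-prop} yields $\rank(g)=\rank(1+x_ix_j)\cdot\rank(f_{i+1,j-1})$ and $\rank(q_\ell)=\rank(f_{i,\ell})\cdot\rank(f_{\ell+1,j})$, and the induction hypothesis evaluates the factors from the colour counts of the corresponding sub-intervals --- which, since $\{x_i,\dots,x_\ell\}\sqcup\{x_{\ell+1},\dots,x_j\}=\{x_i,\dots,x_j\}$, sum to $(a,b)$. Everything thus reduces to the combinatorial claim: for every $\varphi$ with $a\le b$, either $x_i$ and $x_j$ receive different colours, in which case $f_{i+1,j-1}$ has counts $(a-1,b-1)$ and $\rank(g)=2\cdot 2^{a-1}=2^{a}$; or there is an admissible split point $\ell$ such that both $\{x_i,\dots,x_\ell\}$ and $\{x_{\ell+1},\dots,x_j\}$ are ``$Y$-light'' (at most as many $Y$- as $Z$-variables), which forces $\rank(q_\ell)=2^{a_1}\cdot 2^{a_2}=2^{a_1+a_2}=2^{a}$, where $(a_1,b_1)$ and $(a_2,b_2)$ are the counts of the two halves. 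To produce such an $\ell$ when $x_i,x_j$ have the same colour, I would track the partial imbalance $\beta(\ell)=\#(\{x_i,\dots,x_\ell\}\cap Y)-\#(\{x_i,\dots,x_\ell\}\cap Z)$ over the admissible split points: it changes by $0$ or $\pm 2$ between consecutive ones; at the first split point it records the colours of $x_i$ and $x_{i+1}$ (each contributing $\pm1$), so $\beta\ge 0$ if $x_i\in Y$ and $\beta\le 0$ if $x_i\in Z$; at the last split point it differs from $a-b$ by the contribution of $x_{j-1},x_j$, hence lies within $2$ of $a-b$; and $a-b\le 0$ is even. A discrete intermediate-value argument --- using $a\le b$ to discard the degenerate configurations with only one split point --- then produces $\ell$ with $a-b\le\beta(\ell)\le 0$, i.e., with both halves $Y$-light. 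When $a=b$ one is additionally free to accept ``$Z$-light'' splits, which dispatches the remaining cases.

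The main obstacle is precisely this last combinatorial step: one has to carry out the intermediate-value bookkeeping over all colourings of the end variables $x_i,x_{i+1},x_{j-1},x_j$, treat $a<b$ and $a=b$ separately, and make sure that no partition escapes being covered by some summand of $f_{ij}$. By comparison, the subscript bookkeeping behind the freshness of the $w_{i,\ell,j}$, the rank-semicontinuity estimate, and the uses of Lemma~\ref{lem:rank-prop} are routine.
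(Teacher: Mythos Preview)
The paper does not give its own proof of Proposition~\ref{prop:RYrank}; it is stated as a result quoted from~\cite{RY08} without argument. Your proposal is correct and is essentially the original Raz--Yehudayoff proof: strengthening the induction hypothesis to $\rank(f_{ij})=2^{\min(a,b)}$ for arbitrary (not necessarily balanced) partitions, exploiting the freshness of the $w_{i,\ell,j}$ to pass from $f_{ij}$ to a single summand, and then running a discrete intermediate-value argument to locate a split point $\ell$ at which both halves are light on the minority colour. Your assessment that the combinatorial case analysis on the colours of $x_i,x_{i+1},x_{j-1},x_j$ is the only genuinely delicate part is accurate.

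It is worth noting that your argument is more careful than the closely related sketch the paper does include, inside the proof of Observation~\ref{obs:full-rank}. There the induction is phrased only for equi-partitions, and in the same-colour case the text sets $w_{1,\ell,t}=1$ ``for an arbitrary $\ell$'' and invokes the induction hypothesis on both halves --- which, as you correctly observe at the outset, is not justified, since an equi-partition of $[1,t]$ need not restrict to equi-partitions of $[1,\ell]$ and $[\ell+1,t]$. Your strengthened hypothesis is exactly what repairs this.
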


Dvir \etal~\cite{DMPY12} defined a polynomial that is hard \ie full rank with respect to a special class of partitions called {\em arc-partitions}. Suppose $X=\{x_0,\ldots, x_{n-1}\}$ be identified with the set $V=\{0, \ldots, n-1\}$. For $i,j \in V$,  the set $[i,j]=\{i, (i+1)\mod n, (i+2) \mod n,  \ldots, j\}$  is called the {\em arc} from  $i$ to $j$.  
An arc pairing is a distribution on the set of all  pairings (\ie perfect matchings)  on $V$ obtained  in $n/2$ steps as follows. 
 Assuming a pairing $(P_1,\ldots,P_t)$ constructed in $t<n/2$ steps, where $P_1=(0,1)$, $[L_t,R_t]$ is the interval spanned by $\cup_{i\in [t]} P_i$ and the random pair $P_{t+1}$ is constructed such that 
\[P_{t+1} =\begin{cases} &(L_t-2, L_t-1) \text{ with probability }1/3,\\
&(L_t-1, R_t+1) \text{ with probability }1/3,\\
&(R_t+1, R_t+2) \text{ with probability }1/3,\\
\end{cases} \]  and therefore, $[L_{t+1},R_{t+1}]=[L_t,R_t] \cup P_{t+1}$. 

Given a pairing ${\cal P} = \{P_1,\ldots, P_{n/2}\}$ of $V$, there are exactly $2^{n/2}$ partitions of $X$, by assigning $\varphi(x_i) \in Y $ and $\varphi(x_{j}) \in Z$ or  $\varphi(x_i) \in Z $ and $\varphi(x_{j}) \in Y$ independently for each pair $(i,j) \in { \cal P}$. An {\em arc partition} is a distribution on all partitions obtained by sampling an arc pairing as defined above and sampling a partition corresponding to the pairing uniformly at random. We denote this distribution on partitions by ${\mathcal{D}}$.  For a pairing ${\cal P} = \{P_1,\ldots, P_{n/2}\}$ let $M_{{\cal P}}$ be the degree $n/2$ polynomial $\prod_{i=1}^{n/2}(x_{\ell_i}+x_{r_i})$ where $P_i =(\ell_i, r_i)$.  Dvir \etal~\cite{DMPY12} defined the arc full rank polynomial $\widehat{f} = \sum_{{\cal P} \in {\cal D}} \lambda_{{\cal P}}M_{{\cal P}}$, where $\lambda_{\cal P}$ is a formal variable. Dvir \etal~\cite{DMPY12} showed: 

\begin{proposition}\cite{DMPY12}
\label{prop:DMPYrank}
The polynomial $\widehat{f}$ can be computed by a polynomial size smABP and  for every $\varphi \in {\cal D}$,
$\rank(\widehat{f}) = 2^{n/2}$ over a suitable field extension $\mathbb{G}$ of $\mathbb{F}$.
\end{proposition}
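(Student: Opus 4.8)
I would prove the two assertions in turn: that $\widehat f$ has a polynomial-size smABP, and that $\rank(\widehat f)=2^{n/2}$ for every $\varphi$ in the support of $\cal D$. For the first, the plan is to unfold the arc-pairing process into a layered graph: besides a source $s$ and sink $t$, for each step $\tau\in\{1,\dots,n/2\}$ place a node $v_{\tau,a}$ for each possible left endpoint $a$ of the interval $[L_\tau,R_\tau]$ built after $\tau$ steps. That interval always has exactly $2\tau$ indices, so it is determined by $a$, and a one-line count shows $a$ takes only $O(n)$ values per layer; hence $O(n^2)$ nodes. From $v_{\tau,a}$ there are three edges, one per move, the edge adjoining the pair $(p,q)$ carrying the linear form $x_p+x_q$ (realised in the usual way by single-variable and scalar edges) scaled by a fresh variable $\mu_e$; the edge $s\to v_{1,0}$ carries $x_0+x_1$ for the fixed pair $P_1=(0,1)$, and every layer-$(n/2)$ node joins $t$. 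Working modulo $n$, all three moves stay legal until the interval exhausts $V$, so each $s$--$t$ path $\rho$ is a legitimate run with $\wt(\rho)=\bigl(\prod_{e\in\rho}\mu_e\bigr)M_{{\cal P}(\rho)}$, and since the sequence of move-edges is recoverable from the edge set of $\rho$, distinct paths produce distinct $\mu$-monomials. Summing over paths therefore yields $\sum_{{\cal P}}\lambda_{{\cal P}}M_{{\cal P}}=\widehat f$, where each $\lambda_{{\cal P}}$ is a nonzero $\mathbb F$-combination of such monomials over the runs producing $\cal P$ --- exactly the role of the formal coefficients in the definition of $\widehat f$. The branching program is syntactic multilinear because, in any pairing, each index of $V$ lies in exactly one pair, so each $x_i$ labels exactly one edge of any path.

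For the rank bound, fix $\varphi$ in the support of $\cal D$; by definition $\varphi$ comes from an arc pairing ${\cal P}_0$ together with an assignment that, for each pair $(\ell,r)\in{\cal P}_0$, routes one of $x_\ell,x_r$ into $Y$ and the other into $Z$, so $\varphi$ is an equi-partition. First, $M_{{\cal P}_0}=\prod_{(\ell,r)\in{\cal P}_0}(x_\ell+x_r)$ is a product of pairwise variable-disjoint factors, and each factor $x_\ell+x_r$ has $\rank(x_\ell+x_r)=2$ under $\varphi$ (its partial-derivative matrix has the single nonzero $2\times2$ block $\left(\begin{smallmatrix}0&1\\1&0\end{smallmatrix}\right)$, since $\varphi$ separates $x_\ell$ from $x_r$); by the equality case of Lemma~\ref{lem:rank-prop} this gives $\rank(M_{{\cal P}_0})=2^{n/2}$. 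Next, ${\cal P}_0={\cal P}(\rho_0)$ for some path $\rho_0$ of the branching program, and the entries of $M_{\widehat{f}^{\varphi}}$ are polynomials in the $\mu_e$; specialising $\mu_e\mapsto1$ on the edges of $\rho_0$ and $\mu_e\mapsto0$ elsewhere kills the monomial of every path except $\rho_0$, so $\widehat f$ specialises to $M_{{\cal P}_0}$ and $M_{\widehat{f}^{\varphi}}$ to $M_{M_{{\cal P}_0}^{\varphi}}$. Since specialisation never increases matrix rank, $\rank(\widehat f)\ge 2^{n/2}$ over $\mathbb G=\mathbb F(\{\mu_e\}_e)$ (equivalently $\mathbb F$ adjoined the formal coefficients), while $\rank(\widehat f)\le 2^{n/2}$ because $M_{\widehat{f}^{\varphi}}$ has only $2^{n/2}$ rows; hence equality.

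The one real subtlety is in the construction: the naive map from $s$--$t$ paths to pairings is not injective (extending the interval left-then-right and right-then-left give the same pairing), so one cannot simply attach a distinct formal variable $\lambda_{{\cal P}}$ to a terminal edge of a polynomial-size device. The $O(n^2)$ local edge variables $\mu_e$ repair this --- each path gets its own monomial --- and this is precisely the amount of genericity the specialisation step needs. Everything else (the $O(n)$ bound on intervals per layer, legality of all three moves modulo $n$, syntactic multilinearity, and the fact that specialisation cannot raise the rank) is routine.
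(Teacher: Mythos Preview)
The paper does not prove this proposition at all; it is simply quoted from \cite{DMPY12}. So there is no ``paper's own proof'' to compare against. Your write-up is, in fact, a faithful reconstruction of the argument in \cite{DMPY12}: the state space of the arc-pairing process is encoded as the nodes of a layered graph, fresh auxiliary variables on the edges give distinct paths distinct monomials, and the full-rank claim follows by specialising those auxiliaries to isolate a single path producing the pairing underlying~$\varphi$.

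One point worth flagging is the mismatch you already spotted between the paper's phrasing and what can actually be proved. The paper writes ``$\lambda_{\cal P}$ is a formal variable'', which, read literally, introduces exponentially many independent formal variables and would make the smABP claim false on size grounds alone. Your fix --- replacing the $\lambda_{\cal P}$ by the polynomials $\sum_{\rho:\,{\cal P}(\rho)={\cal P}}\prod_{e\in\rho}\mu_e$ in $O(n^2)$ edge variables --- is exactly what \cite{DMPY12} do, and your observation that distinct $s$--$t$ paths use distinct edge sets (hence distinct $\mu$-monomials) is what makes each such $\lambda_{\cal P}$ nonzero. The rank argument is clean: the factor-by-factor computation of $\rank(M_{{\cal P}_0})$ via the equality case of Lemma~\ref{lem:rank-prop}, followed by specialisation of the $\mu_e$ to pick out a single path, is the standard route and is correct as written.
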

%
%\begin{definition}({\bf DMPY polynomial}, \cite{DMPY12})\label{def:DMPYpoly}
%The polynomial $\widehat{f}$, given an ABP $F$ computing it, is defined as follows:
%\[\widehat{f}= \sum_{p \in \mathcal{P}} \prod_{(u,v)=e\in p} \lambda_e(x_u+x_v),\]
%where $\mathcal{P}$ is the set of all paths from source $s$ to terminal $t$ in $F$ and $\lambda_e \in \mathbb{F}$ are constants that take value $0$ or $1$.
%\end{definition}
%
%The authors show that for any arc-partition $\Pi$ sampled from $\mathcal{D}$ uniformly at random, there is a path $p$ in $F$ such that the polynomial $\widehat{f}_p=\prod_{(u,v)=e\in p'} \lambda_e(x_u+x_v)$. The polynomial $\widehat{f}_p$ where $p$ is the path where the edges correspond to pairs in the arc-partition $\Pi$, will have full arc-rank. As $\hat{f}= \sum_{p \in \mathcal{P}}\hat{f}_p$, therefore, by sub-additivity of rank and because $\prod_{e \in p}\lambda_e$ is distinct for every path $p$, $f$ is also of full arc-rank. We can set $\lambda_e=1$ for all edges $e \in p$, where $p$ corresponds to $\Pi$, $\lambda_e$ is set to zero for all other edges. Let $T$ be the set of all formal variables $\lambda_e, ~e \in F$. We formally state the full arc-rank property as follows.
%
%\begin{proposition}(\cite{DMPY12})
%The DMPY polynomial $\hat{f} \in \mathbb{G}[X]$ has full arc-rank over the rational function field $\mathbb{G}=\mathbb{F}[T]$ under any arc-partition $\Pi$ sampled uniformly at random from the uniform distribution on all arc-partitions, $\mathcal{D}$.
%\end{proposition}

Now that we are familiar with most of the definitions required for an understanding of the results in this paper, we proceed to discuss our results. 

\section{Sum of ROFs}
\label{sec:sumrof}
In~\cite{RR19a}, Ramya and Rao show an exponential lower bound for the sum of ROFs computing a polynomial in $\VP$. While this establishes a super polynomial separation between $\Sigma\cdot \rof$ and syntactic multilinear formulas, it is interesting to see if this separation is exponential.  In this section we obtain such an exponential separation. In fact, we show that there is an exponential separation between syntactic multilinear read-$k$ formula and $\Sigma\cdot\rof$.  We begin with the construction of a hard polynomial computable be a read-$k$ formula for a large enough constant $k$.

\subparagraph*{A full rank Polynomial:}
Let $X = \{x_1,\ldots, x_n\}$ be the set of input variables of the hard polynomial such that $4\mid n$. Let $f_\ry(X')$ to denote the Raz-Yehudayoff polynomial defined on the variable set $X'$ of even size, where $X'$ is an arbitrary subset of $X$.

Let $r=\Theta(1)$ be a sufficiently large integer factor of $n$ such that  $r$ and  $n/r$ are both  even. For $1\le i\le n/r$, let $B_i=\{x_{(i-1)r+1 },  \ldots, x_{ir}\}$ and ${\cal B}$ denote the partition $B_{1} \cup B_2 \cup \dots \cup B_{n/r}$ of $X$. The  polynomial $f_{\textsf{PRY}}$ is defined as follows:
\begin{align}\label{eq:prod-RYpoly}
f_{\textsf{PRY}} &{=} f_\ry(B_1)\cdot f_\ry(B_2)\cdots f_\ry(B_{n/r}).
\end{align}  
By definition of the polynomial $f_{\textsf{PRY}}$, it can be computed by a constant-width ROABP of polynomial size as well as by a read-$k$ formula where $k=2^{O(r)}$.

In order to prove a lower bound against a class of circuits computing the polynomial $f_{\textsf{PRY}}$, we consider the complexity measure of the rank of partial derivative matrix. 
%For constructing the matrix, we sample a random partition function, that divides the input variables into two equal parts, from a distribution of equi-partition functions defined as follows.
Like in \cite{Raz06} and many follow-up results, we analyse the rank of the partial derivative matrix of $f_{\sf PRY}$ under a random partition. The reader might have already noticed that there are equi-partitions under which the $\rank( f_{\sf PRY}) = 1$. Thus, we need a different distribution on the equi-partitions under which $f_{\sf PRY}$ has full rank with probability $1$. In fact, under any partition $\varphi$, which  induces an equi-partition on each of the variable blocks $B_i$, we have $\rank(f_{\sf PRY}) = 2^{n/2}$, \ie full rank. We define $\mathcal{D}_B$ as the uniform distribution on all such partitions. Formally, we have:

\begin{definition}(Distribution ${\cal D}_{\cal B}$) \label{def:distr-Db}
The distribution ${\cal D}_{\cal B}$ is the distribution on the set of all equi-partitions $\hat{\varphi}$ of $X$ obtained by independently sampling an equi-partition $\varphi_i$ of each variable blocks $B_i$, for all $i$ such that $1\le i \le n/r$. We express $\hat{\varphi}$ as $\hat{\varphi}=\varphi_1 \circ \ldots \circ \varphi_{n/r}$.
\end{definition}

For any partition $\varphi$ in the support of $\distr$, we argue that the polynomial $f_{\sf PRY}$ has full rank:
\begin{observation}
\label{obs:full-rank}
For any $\varphi\sim \distr$, $\rank(f_{\textsf{PRY}}) = 2^{n/2}$ with probability $1$.
\end{observation}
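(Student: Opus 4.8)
The plan is to observe that the claim follows almost immediately from the two rank facts already in hand: the multiplicativity of $\rank$ over variable-disjoint factors (Lemma~\ref{lem:rank-prop}) and the full-rank property of the Raz--Yehudayoff polynomial under \emph{every} equi-partition (Proposition~\ref{prop:RYrank}).

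First I would fix an arbitrary $\hat{\varphi}=\varphi_1\circ\cdots\circ\varphi_{n/r}$ in the support of $\distr$ and unwind Definition~\ref{def:distr-Db}: each $\varphi_i$ is an equi-partition of the block $B_i$, so $\hat{\varphi}$ restricted to the $r$ variables of $B_i$ sends exactly $r/2$ of them to $Y$ and $r/2$ to $Z$. I would then pass to the field $\mathbb{G}=\mathbb{F}(W)$ of rational functions in the auxiliary $W$-variables, so that each factor $f_\ry(B_i)$ in~\eqref{eq:prod-RYpoly} is a multilinear polynomial over $\mathbb{G}$ in the variables of $B_i$ only. Since $B_1,\ldots,B_{n/r}$ are pairwise disjoint, the equality case of sub-multiplicativity in Lemma~\ref{lem:rank-prop} gives $\brank(f_{\textsf{PRY}})=\prod_{i=1}^{n/r}\rank(f_\ry(B_i))$, where the $i$-th factor is taken with respect to $\varphi_i$. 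Next I would invoke Proposition~\ref{prop:RYrank} on the $r$-variable instance $f_\ry(B_i)$ (legitimate since $r$ is even): for every equi-partition $\varphi_i$ of $B_i$ one has $\rank(f_\ry(B_i))=2^{r/2}$. Multiplying over the $n/r$ blocks yields $\brank(f_{\textsf{PRY}})=(2^{r/2})^{n/r}=2^{n/2}$, the largest rank attainable by a multilinear polynomial under an equi-partition of $n$ variables. Finally, since the computation used no property of $\hat{\varphi}$ beyond being balanced on each $B_i$ — which holds for all $\hat{\varphi}$ in the support of $\distr$ — the conclusion holds with probability $1$.

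There is no genuine obstacle here; the one point worth stating explicitly is \emph{why} the custom distribution $\distr$ is needed rather than the uniform distribution on all equi-partitions of $X$: an ordinary random equi-partition can be unbalanced on some block $B_i$ (it can even place all of $B_i$ on one side), which collapses $\rank(f_\ry(B_i))$ — and hence the whole product — to $1$. The distribution $\distr$ is designed precisely to preclude this, and this observation is exactly what makes the subsequent lower bound argument against $\Sigma\cdot\rof$ meaningful.
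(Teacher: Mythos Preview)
Your proof is correct and follows essentially the same approach as the paper: decompose $\brank(f_{\textsf{PRY}})$ as the product $\prod_i \brank(f_\ry(B_i))$ via variable-disjointness, show each factor equals $2^{r/2}$, and multiply. The only difference is cosmetic---the paper inlines an inductive argument for $\brank(f_\ry(B_i))=2^{r/2}$ (effectively re-proving Proposition~\ref{prop:RYrank}) whereas you invoke that proposition directly, which is cleaner.
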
 
\begin{proof}
Let us fix an equi-partition function $\hat{\varphi} \sim \distr$, $\hat{\varphi}: X \to Y \cup Z$. Let $t=r$. Considering $f_\ry(X')$ where $|X'|=t$ and $t$ is even, we can prove the partial derivative matrix of $f_\ry(X')$ has rank $2^{n/2}$ under $\hat{\varphi}$ by induction on $t$. By definition of $f_\ry$, for $t=2$ we have $f_\ry=0$. 

So, for the higher values of $t$, we see the term $(1+x_1x_t)$ and $f_{2,t-1}$ are variable disjoint, where $(1+x_1x_t)$ has rank $\le 2$, and by the induction hypothesis, $f_{2,t-1}$ has rank $2^{t/2-1}$. Also, by induction hypothesis, for any $\ell$,   the ranks of partial derivative matrices  of $f_{1,\ell}$ and $f_{\ell+1,t}$ are $2^{\ell/2}$ and $2^{(t-\ell)/2}$ respectively. 

When $\hat{\varphi}(x_1)\in Y$ and $\hat{\varphi}(x_t) \in Z$, we set $w_{1,\ell,t}=0$ for all $\ell \in [2,t-1]$ and $\brank(f_{1,t}) = \brank(1+x_1x_t)\cdot \brank{f_{2,t-1}}= 2\cdot 2^{(t/2-1)}=2^{t/2}$. When $\hat{\varphi}(x_1)\in Y$ and $\hat{\varphi}(x_t) \in Y$, for an arbitrary $\ell \in [t]$ we set $w_{1,\ell,t}=1$ and we have $\brank(f_{1,t})=\brank(f_{1,\ell})\cdot \brank(f_{\ell+1,t})= 2^{t/2}$, since $\hat{\varphi}$ is an equi-partition. 

By sub-additivity of rank, and since $B_i, ~i\in [n/r]$ are disjoint sets of variables, we have $\brank(f_{\textsf{PRY}})=\prod_{i\in [n/r]} \brank(f_\ry(B_i)) = \prod_{i\in [n/r]} 2^{t/2} = 2^{tn/2r} = 2^{n/2}$.
\end{proof}

\subsection{Rank Upper Bound on ROFs}
In the following, we argue that the polynomial $h$ cannot be computed by sum of ROFs of sub-exponential size. More formally,
\begin{theorem}
\label{thm:sumrof}
Let $f_1,\ldots, f_s$ be read-once polynomials such that $f_{\textsf{PRY}} = f_1+f_2+\dots + f_s$, then $s= 2^{\Omega(n)}$.
\end{theorem}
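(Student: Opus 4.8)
The plan is to argue via the standard partial-derivative-matrix lower bound method adapted to the distribution $\distr$. The core idea: a single ROF has low rank with high probability under $\distr$, so by sub-additivity of rank (Lemma~\ref{lem:rank-prop}) a sum of few ROFs cannot match the full rank $2^{n/2}$ that $f_{\sf PRY}$ achieves with probability $1$ (Observation~\ref{obs:full-rank}). Concretely, I would show that for a fixed ROF $g$, $\Pr_{\hat\varphi\sim\distr}[\brank(g)\ge 2^{n/2-\Omega(n)}]$ is exponentially small, say at most $2^{-\Omega(n)}$. Then if $f_{\sf PRY}=f_1+\cdots+f_s$ with $s = 2^{o(n)}$, a union bound over the $s$ summands shows that with positive probability \emph{all} of them simultaneously have rank below $2^{n/2-\Omega(n)}$; summing gives $\brank(f_{\sf PRY}) \le s\cdot 2^{n/2-\Omega(n)} < 2^{n/2}$, contradicting Observation~\ref{obs:full-rank}. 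Choosing constants so that $s\cdot 2^{n/2-cn} < 2^{n/2}$ forces $s = 2^{\Omega(n)}$.

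The heart of the argument is the rank upper bound for a single ROF under $\distr$. Here I would recall the structure theory of ROFs: an ROF $g$ is built from variable-disjoint subtrees combined by $+$ and $\times$ gates, and one tracks a gate $v$ computing $g_v$ over its variable set $\var(v)$. The relevant quantity is the \emph{rank deficiency}: writing $|\var(v)\cap Y| = a_v$ and $|\var(v)\cap Z| = b_v$, a multilinear polynomial on $\var(v)$ has rank at most $2^{\min(a_v,b_v)}$, so its rank is "full" ($2^{|\var(v)|/2}$) only when $\var(v)$ is split evenly. At a $\times$ gate, ranks (and hence deficiencies) add exactly by sub-multiplicativity equality for disjoint variable sets; at a $+$ gate, rank is sub-additive and one loses at most a factor of $2$ per $+$ gate along a path, but crucially a $+$ gate with unbalanced children is where deficiency can be \emph{created}. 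The standard approach (following Raz and the Ramya--Rao argument) is to find many "building blocks" $B_i$ — here the natural blocks are exactly the variable partition $\mathcal{B}$ — on which, with constant probability independently, the ROF $g$ is forced to be unbalanced (i.e. some subtree of $g$ splits $B_i$ unevenly), each such event costing at least one in the rank exponent. Since $\distr$ samples the equi-partition of each $B_i$ independently, I would set up a sub-martingale / independent-trials argument over the $n/r$ blocks: each block independently contributes a deficiency of $\ge 1$ with probability $\ge \delta$ for some constant $\delta>0$, so the total deficiency is $\Omega(n/r) = \Omega(n)$ with probability $1 - 2^{-\Omega(n)}$ by a Chernoff bound.

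The reason a constant fraction of blocks is forced to be unbalanced for any ROF is the key structural claim, and I would prove it using the "partition into central path" decomposition of ROFs: since $g$ is read-once, each variable of $B_i$ appears in a unique leaf, and unless all of $B_i$ is "captured" in a single subtree that itself is read-once on $B_i$ (which can only happen for a bounded number of blocks, since the subtrees are variable-disjoint and $g$ has $n$ leaves total), the variables of $B_i$ get separated across a $+$ gate whose two sides receive unequal numbers of $B_i$-variables with probability bounded below by a constant over the random equi-partition $\varphi_i$ of $B_i$ — here using that $\varphi_i$ is a uniform equi-partition of the constant-size set $B_i$, so any fixed nontrivial sub-split of $B_i$ is "unbalanced under $\varphi_i$" with constant probability. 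One must be careful to account correctly for how deficiencies at different gates combine and not double-count, which is where the read-once property (variable-disjointness of subtrees) is essential.

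\textbf{Main obstacle.} The delicate part is the single-ROF rank bound: precisely formalizing "a constant fraction of the blocks $B_i$ are forced to be unbalanced" and showing these events are (essentially) independent across $i$ so that a Chernoff bound applies. The subtlety is that the combinatorial structure of $g$ (which subtree captures which variables) is fixed in advance, but the balance/imbalance of each $B_i$ depends on the random $\varphi_i$; one needs a clean potential-function or bubble-decomposition argument showing that the rank exponent of $g$ under $\hat\varphi$ is at most $n/2$ minus (number of "separated" blocks), and that each block is separated with constant probability independently. This is exactly the technical core inherited from Ramya--Rao~\cite{RR19a}, now run against the simpler block-product polynomial $f_{\sf PRY}$ and the product distribution $\distr$ rather than the arc distribution.
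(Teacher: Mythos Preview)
Your high-level framework is correct and matches the paper exactly: fix $\hat\varphi\sim\distr$, use Observation~\ref{obs:full-rank} for the lower bound, a single-ROF rank upper bound with exponentially small failure probability, then sub-additivity and a union bound to force $s=2^{\Omega(n)}$.

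Where you diverge from the paper is in the single-ROF argument, and here your sketch has a real conceptual slip. You write that ``a $+$ gate with unbalanced children is where deficiency can be \emph{created}'' and plan to find, for each block $B_i$, a $+$ gate across which the $B_i$-variables split unevenly. But in an ROF the children of \emph{every} gate are variable-disjoint, and rank deficiency in the sense $\min(|Y_v|,|Z_v|)<|{\sf var}(v)|/2$ is \emph{exploited} at $\times$ gates (where rank multiplies), not at $+$ gates (where rank only sub-adds). An uneven $Y/Z$ split across a $+$ gate does not by itself cost you anything in the rank exponent. More fundamentally, your block-by-block potential argument needs the deficiencies from different $B_i$ to \emph{add up} at the root; since the $B_i$-variables can be interleaved arbitrarily in the ROF, it is not clear how to get $\Omega(n/r)$ independent additive losses without a more careful combinatorial structure than ``each block is separated across some gate.''

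The paper sidesteps this by importing the Ramya--Rao machinery directly: classify the bottom gates of the ROF into types A--D (both children variables, one child a variable, etc.), invoke the bound $\rank(f)\le 2^{a''+2a'/3+2b/3+9c/20}$ from \cite{RR19a}, and then focus on the \emph{pairing} $(P_1,\ldots,P_t)$ of variables induced by the type-A and type-B gates. The point is that each such pair, if \emph{monochromatic} under $\hat\varphi$, contributes to $a'$ or $b$ rather than $a''$ and hence shaves a constant off the exponent. The pairs are then split into those lying inside a single block $B_i$ (set $W$) and those crossing blocks (set $A$). For $W$, a Chernoff argument over the $n/r$ blocks shows $\Omega(n)$ blocks contain at least one pair, each monochromatic with probability $1/2$ under the independent $\varphi_i$. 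For $A$, one builds an auxiliary graph on the blocks with maximum degree $r$, extracts a maximal matching of size $\Omega(n)$, and again gets $\Omega(n)$ monochromatic pairs with high probability. Either way the exponent drops by $\Omega(n)$ with probability $1-2^{-\Omega(n)}$. This pairing-at-the-leaves viewpoint is what makes the independence across blocks genuinely usable, and is the missing ingredient in your sketch.
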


We use the method of obtaining an upper bound on the rank of partial derivative matrix for ROFs with respect to a random partition developed by \cite{RR19a}. Though the argument in \cite{RR19a} works for an equi-partition sampled uniformly at random, we show their structural analysis of ROFs can be extended to the case of our distribution $\distr$.  We begin with the notations used in \cite{RR19a} for the categorisation of the gates in a read-once formula $F$. (In this categorisation, the authors have only considered gates with at least one input being a variable.)
% which is argued on the basis of the $4$ types of gates in a read-once formula:
\begin{itemize}
\item Type- A: These are sum gates in $F$ with both inputs variables in $X$.
\item Type- B: Product gates in $F$ with both inputs variables in $X$.
\item Type- C: Sum gates in $F$ where only one input is a variable in $X$.
\item Type- D: Product gates in $F$ where only one input is a variable in $X$.
\end{itemize}

Thus, type-D gates compute polynomials of the form $h\cdot x_i$ where $x_i \in X, h \in \mathbb{F}[X\setminus\{x_i\}]$ are the inputs to the type-D gate. Let $a,b,c,d$ be the number of gates of type-A, B, C and D respectively. Let $a''$ be the number of Type $A$ gates that compute a polynomial of rank $2$ under an equi-partition $\varphi$, and $a'$ be the number of Type-$A$ gates that compute a polynomial of rank $1$ under $\varphi$ such that $a=a'+a''$.

The following lemma is an adaptation, for our distribution $\distr$, of the same lemma for the distribution of all equi-partitions on $n$ variables from \cite{RR19a}.

\begin{lemma}\label{lem:rr16}
Let $f \in \mathbb{F}[X]$ be an ROP, and $\varphi$ be an equi-partition function sampled uniformly at random from the distribution $\distr$. Then with probability at least $1 - 2^{-\Omega(n)}$, $\rank(M_f)\le 2^{n/2-\Omega(n)}$.
\end{lemma}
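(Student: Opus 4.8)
The plan is to follow the structural-combinatorial strategy of Ramya--Rao \cite{RR19a}: classify the gates of a ROF $F$ by types A--D, track how each variable of $X$ gets ``consumed'' by exactly one leaf, and then count, over the random partition $\varphi \sim \distr$, how many of the rank-contributing gates actually attain their maximum possible rank. The key observation I would exploit is that the rank of $M_f$ (for $f$ an ROP computed by $F$) is governed by the product/sum structure of $F$ via Lemma~\ref{lem:rank-prop}: each type-B gate (product of two variables $x_i \cdot x_j$) contributes a factor of at most $2$ but only contributes $2$ when $\varphi$ separates $x_i$ and $x_j$; each type-D gate $h \cdot x_i$ with $x_i$ fresh multiplies the rank by at most $2$; and type-A gates $x_i + x_j$ contribute rank $2$ iff $x_i, x_j$ land in the same side of $\varphi$ (so $a = a' + a''$ with $a''$ the ``good'' ones). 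Summing the variable budget $n$ across the leaves and accounting for which gates are of which type gives a deterministic bound of the shape $\rank(M_f) \le 2^{(\text{number of rank-}2\text{ events})}$, and crucially the total number of such potential doubling events is $n/2 + O(1)$ only if every single one fires; generically many fail.

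First I would re-derive the deterministic skeleton: show that for any ROF $F$ and any equi-partition $\varphi$, $\rank(M_f) \le 2^{n/2 - a' + (\text{correction})}$, i.e.\ every type-A gate that happens to compute a rank-$1$ polynomial under $\varphi$ strictly halves the rank bound relative to the trivial $2^{n/2}$. This is where I'd lean directly on \cite{RR19a}'s counting argument, re-checking that it only uses sub-additivity and sub-multiplicativity of $\rank_{\varphi}$ (Lemma~\ref{lem:rank-prop}) and the read-once structure, not any special property of the uniform distribution on equi-partitions. Second, I would lower-bound, with high probability over $\varphi \sim \distr$, the quantity $a' + (\text{the number of type-B or type-D doubling events that fail})$ by $\Omega(n)$. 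This is the probabilistic heart of the argument.

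The main obstacle is precisely this second step, and it is where $\distr$ differs from the uniform distribution: in $\distr$, the two halves of the partition are sampled blockwise on the $B_i$'s, so the events ``$x_i, x_j$ on the same side'' are no longer independent or uniform across all pairs — pairs inside a common block $B_i$ behave like a uniform equi-partition of a size-$r$ set (so each such pair is ``together'' with probability close to $1/2$), while pairs straddling two different blocks are independent with probability exactly $1/2$. The fix is to argue that regardless of how $F$ distributes its leaves, a constant fraction of the $\Theta(n)$ ``doubling-candidate'' gates (type-A, and the relevant type-B/D gates) involve a pair of variables each of which is individually uniform over its own block's side-assignment, so the corresponding failure event has probability bounded below by a positive constant; then a Chernoff/Azuma bound over the $n/r$ independent block-partitions $\varphi_i$ gives that at least $\Omega(n)$ of these events fail except with probability $2^{-\Omega(n)}$. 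One technical subtlety to handle carefully: when both variables of a pair lie in the \emph{same} block $B_i$, the per-block partition is an equi-partition of $r = \Theta(1)$ variables, so ``same side'' still has probability bounded away from $0$ and $1$, and the $n/r$ blocks supply enough independent randomness; when the pair straddles blocks, independence across blocks is immediate. Assembling these: with probability $\ge 1 - 2^{-\Omega(n)}$ we get $\rank(M_f) \le 2^{n/2 - \Omega(n)}$, which is exactly the statement of Lemma~\ref{lem:rr16}.
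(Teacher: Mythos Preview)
Your overall strategy matches the paper's: invoke the deterministic bound of \cite{RR19a} (Lemma~\ref{lem:rr16-ub}), dispose of the case where type-$C$/$D$ gates are plentiful, and then argue that under $\hat\varphi\sim\distr$ many of the type-$A$ pairs become monochromatic, splitting into within-block versus across-block pairs. Two issues deserve attention.

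\emph{A minor slip.} Your rank characterisation of type-$A$ gates is inverted: $x_i+x_j$ has $\brank=2$ when $\hat\varphi$ \emph{separates} $x_i,x_j$ (bichromatic), and rank $1$ when they land on the same side (monochromatic). So $a'$ counts monochromatic type-$A$ pairs, and the goal is to show $a'=\Omega(n)$ with high probability. (Your description of type-$B$ and type-$D$ contributions is also off: $x_ix_j$ has rank $1$ regardless of $\hat\varphi$, and type-$D$ gates contribute nothing to the exponent in Lemma~\ref{lem:rr16-ub}; the saving from $b,c,d$ is already deterministic.)

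\emph{The concentration step is where your proposal diverges from the paper and is currently a gap.} You write that for cross-block pairs ``independence across blocks is immediate,'' but this is only true for a \emph{single} such pair: two cross-block pairs that share a block endpoint are correlated, so a plain Chernoff bound on the monochromaticity indicators does not apply. The paper resolves this by \emph{extracting genuinely independent events}. For within-block pairs it takes one indicator per block and applies Chernoff across blocks. For cross-block pairs it builds an auxiliary graph on the $n/r$ blocks (with an edge per cross-block pair), observes that the maximum degree is at most $r$, and uses the theorem of~\cite{BDDFK04} to find a matching of size $\Omega(t/r)$; matching edges are block-disjoint, hence their monochromaticity indicators are independent $\mathrm{Bernoulli}(1/2)$. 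Your alternative ``Azuma over the $n/r$ independent block-partitions'' can be made to work, but you must make it precise: the count of monochromatic pairs is a function of the independent coordinates $\varphi_1,\ldots,\varphi_{n/r}$ with Lipschitz constant $r$ (each block meets at most $r$ pairs), so McDiarmid gives deviation $t$ with probability $\exp(-\Omega(t^2/(nr)))$, and you separately need $\mathbb{E}[a']=\Omega(n)$. As written, the proposal asserts the conclusion without supplying either the matching argument or the Lipschitz bound.
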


\begin{proof}
We first argue a rank upper bound for an arbitrary $f_i$. Let $\Phi_i$ be the formula computing $f_i$ with gates of the types described as above. Let $\hat{\varphi} = \varphi_1 \circ \ldots \circ \varphi_{n/r}$ sampled from the distribution ${\cal D}_B$ uniformly at random.

We use the Lemma~3.1 from \cite{RR19a} which concludes that type-$D$ gates do not contribute to the rank of a ROF. 
 
\begin{lemma}\cite[Lemma 3.1]{RR19a} \label{lem:rr16-ub}
Let $F$ be a ROF computing a read-once polynomial $f$ and $\varphi:X \to Y \cup Z$ be an partition function on $n$ variables. Then, $\rank(f) \le 2^{a''+\frac{2a'}{3}+\frac{2b}{3}+\frac{9c}{20}}$.
\end{lemma}
%This lemma holds for our equi-partition $\hat{\varphi} \sim \distr$. 

%The presence of type-C and D gates in majority would let us apply Lemma~\ref{lem:rr16-ub}, thus ensuring a slack of $\Omega(n)$ in the exponent of the rank. Hence, we can assume type-A and B gates to be in majority in $\Phi_i$.
Intuitively,  Lemma~\ref{lem:rr16-ub} can be applied to a ROF $F$ under a distribution $\hat{\varphi} \sim \mathcal{D}_B$ as follows. If there are a large number of type D gates (say $\alpha n$, for some $0\le \alpha <1$),  then for any such equi-partition $\hat{\varphi}$, $\text{rank}_{\hat{\varphi}}(f) \le 2^{(1-\alpha)n/2}$.  A type $C$ gate, too, contributes a small value (at most $2$) to the rank compared to gates of types A and B. Thus, without loss of generality, we assume that the number of type C and D gates is at most $\alpha n$. Now our analysis proceeds as in \cite{RR19a}, only differing in the estimation of $a'', ~a'$ under an equi-partition $\hat{\varphi} \sim \mathcal{D}_B$.

Let $(P_1, \ldots, P_t)$ be a pairing induced by the gates of types A and B (\ie the two inputs to a gate of type A or B form a pair). There can be at most $n/2$ pairs, but since we have $\alpha n$ gates of type C and D for some $0\le \alpha <1$, we assume $(1-\alpha)n$ remaining type A and B gates. Therefore, for $t=(n-\alpha n)/2, ~t\le n/2$, we have the pairs $P_1, \ldots, P_t$ induced by the type A and B gates in $\Phi_i$. 

Now, considering the division of $X$ into $B_1,\ldots,B_{n/r}$, we can divide the pairs into two sets depending on whether a pair lies entirely within a block $B_i, ~i\in [n/r]$ or the pair has its members in two different blocks $B_i$ and $B_j$ for $i,j \in [n/r], ~i \neq j$. We define these two sets as $W=\{P_i~\mid~P_i=(x,y), \exists \ell, ~ x,y\in B_\ell \}$ for pairs lying within blocks and $A=\{P_i~\mid~P_i=(x,y), \exists j,k, ~j\neq k, ~ x\in B_j,y\in B_k\}$ for pairs lying across blocks, where $x,y$ are two arbitrary variables in $X$.

Each pair $P_i$ can be monochromatic or bichromatic under the randomly sampled equi-partition $\hat{\varphi}$ with the probability $\frac{1}{2}$. Presence of monochromatic edges will give us a reduction in the rank of $f_i$ under $\hat{\varphi}$. The analysis on $W$ and $A$ is done separately as follows.

\paragraph*{Analysing $W$, $|W|> t/2$:} Let $B_{i_1}, \ldots, B_{i_\ell}$ be the blocks containing at least one pair from $W$, $\ell \le n/r$. We want to estimate $\ell$ and count how many of these $\ell$ blocks have at least one monochromatic pair under $\hat{\varphi}$ from $W$.

For each $B_i, ~i\in [t]$, we define the Bernoulli random variable $X_i$ such that,
\[X_i = \begin{cases}1, ~\mbox{if }\exists P\in W, ~P=(x,y), ~x,y \in B_i,\\
0, ~\mbox{otherwise}. \end{cases}\] 
Let ${\sf Pr}[X_i=1]= {\sf Pr}[\exists P\in W, ~P=(x,y), ~x,y \in B_i] = \epsilon$, for some $\epsilon>0$. 

Then we have ${\sf E}[X_i]=\epsilon$, and for $\mathcal{X}=X_1+\ldots +X_{n/r}$, ${\sf E}[\mathcal{X}] =\epsilon \cdot n/r$. By the Chernoff's bound defined in \cite{MU}, we have,
\[{\sf Pr}[\mathcal{X} > 2\epsilon n/r] < \text{exp}(\frac{-\epsilon n}{3r}).\]
Now we estimate $\epsilon$ as follows:
\begin{align*}
\epsilon &= {\sf Pr}[X_i=1]= {\sf Pr}[\exists P\in W, ~P=(x,y), ~x,y \in B_i] \\
&= {\sf Pr}[x,y \in B_i | \exists P\in W, ~P=(x,y)]\\
&= \frac{{\sf Pr}[x,y \in B_i]}{{\sf Pr}[\exists P\in W, ~P=(x,y)]} \\
&\ge {\sf Pr}[x,y \in B_i] ~\mbox{ since } {\sf Pr}[\exists P\in W, ~P=(x,y)] \le 1 \\
&= \frac{1}{r^2}.
\end{align*}

Therefore, ${\sf Pr}[\mathcal{X} > 2\epsilon n/r] < \text{exp}(\frac{-\epsilon n}{3r}) \le \text{exp}(-\Omega(n)),$ when $r$ is a constant. This implies that at least $2/r^2$ fraction of the blocks have a pair entirely within them with probability $1 - \text{exp}(-\Omega(n))$ and each of these pairs is monochromatic under $\hat{\varphi}$ with the constant probability $1/2$. This gives an upper bound on the rank of $f_i$, \[\text{rank}_{\hat{\varphi}}(f_i) \le 2^{n/2 - n/r^3} = 2^{n/2 - \Omega(n)}.\]

\paragraph*{Analysing $A$, $|A|> t/2$:} Since each pair of variables in $A$ lies across two blocks, we create a graph $G=(V,E)$ where each $v_i \in V$ represents the block $B_i$ and $E=\{(v_i,v_j) ~\mid~ (x,~y)\in A, ~x\in B_i, ~y\in B_j, i\neq j\}$.

The graph $G$ has maximum degree $r$ since there can be at most $r$ pairs with one member in a fixed block $B_i$. If the edges in $E$ form a perfect matching $M'$ in $G$, then under $\hat{\varphi}$, the edges in $E$ can be either bichromatic or monochromatic. We need to show there will be sufficient number of monochromatic edges to give a tight upper bound for $\text{rank}_{\hat{\varphi}}(f_i)$.

By a result in \cite{BDDFK04}, any graph with maximum degree $r$ has a maximal matching of size $m/(2r -1)$, where $|E|=m$. Since $|A|\ge t/2$, $m \ge t/2$ and hence the maximal matching is of size $t/2(2r-1) = \Omega(n)$ when $r$ is a suitable constant. With probability $1/2$, an edge in the maximal matching 
is bichromatic. Hence, $\le t/2$ number of the edges in the maximal matching are bichromatic with probability $1/2^{t/2} = O(\text{exp}(n^{-1}))$. So, with the high probability of $1 - O(\text{exp}(n^{-1}))$, more than half of the edges in the maximal matching are monochromatic, thus giving us the rank bound,
\[\text{rank}_{\hat{\varphi}}(f_i) \ge 2^{n/2 - t/2} = 2^{n/2 - \Omega(n)}.\]

\end{proof}
Given an upper bound on the rank of ROFs under a random partition from $\distr$, we now proceed to prove the Theorem~\ref{thm:sumrof} by showing a lower bound on the size of ROFs computing our hard polynomial $h$.
\begin{proof}(Proof of Theorem~\ref{thm:sumrof})
By Observation~\ref{obs:full-rank}, the upper bound on the rank of ROFs given by Lemma~\ref{lem:rr16} and the sub-additivity of rank, we have:
\[s\cdot 2^{n/2 - \Omega(n)} \le 2^{n/2} \implies s = 2^{\Omega(n)}.\]
\end{proof}

With this result, the relationship between the classes of polynomials computable by polynomial size ROFs, ROABPs and depth-$3$ multilinear circuits is clear. Since the class of smABPs of polynomial size is strictly smaller than the class of polynomial size multilinear circuits (as in the non-multilinear setting), in the next section we obtain a lower bound on the sum of ROABPs computing the explicit polynomial in \cite{DMPY12}, which is efficiently computable by smABPs.

\section{A separation between Sum of ROABPs and smABPs}
\label{sec:sumroabp-lb}
In this section we prove a sub-exponential lower bound against the size of sum of read-once oblivious ABPs computing the hard polynomial constructed in \cite{DMPY12}. This shows a sub-exponential separation between syntactically multilinear ABPs and sum of ROABPs. 

We prove the following theorem in this section:
\sumroabp*
Our aim is to give an upper bound on the maximum rank of ROABPs under an arc partition.
We refer to the rank of the coefficient matrix of the sum of ROABPs against an arc-partition as the {\em arc-rank}. We analyze the arc-rank of the sum of ROABPs against an arc-partition to give a lower bound on the size of the sum necessary to compute $\widehat{f}$.

Let us assume that $n$ is even. In order to prove the lower bound, we need to estimate an upper bound on the arc-rank computed by a ROABP. We define the notion of $F$-arc-partition, $F$ being a ROABP, as follows:
\begin{definition}
Let us consider an arc partition $Q$ constructed from a ROABP $F$ in the following manner:
Let the order of variables appearing in the ROABP be $x_{\sigma(1)}, x_{\sigma(2)}, \ldots, x_{\sigma(n)}$, where $\sigma \in S_n$ is a permutation on $n$ indices. Then, $Q= \{(x_{\sigma(i)}, x_{\sigma(i+1)})~\mid~ i \in [n], ~i \text{ is odd}\}$ is a $F$-arc-partition.
\end{definition}

%We need to define the following structural decomposition of a ROABP as a crucial ingredient of the proof for Theorem~\ref{thm:roabp-ub}.
%\begin{definition}($K$-block decomposition of ROABPs.) Let the ROABP $F$ compute the polynomial $f$.
%%Here, similar to the $(K,T)$ decomposition on syntactic multilinear formulas given by the authors, 
%$F= F_1\cdot F_2 \cdot \ldots \cdot F_K$, where $\forall \ell \in [K], ~F_\ell$ is a ROABP on the variable set $S_\ell$ such that, $|S_\ell|=n/K, \forall m \neq \ell, S_\ell \cap S_m = \emptyset$.
%\end{definition}

We assume $2K\mid n$. Let $S_1,\ldots,S_K$ be a $K$-coloring of the variable set $X$, where $x_1,\ldots,x_n$ are ordered according to the ROABP and for every $i\in [k]$, $S_i$ contains the variables $x_{(i-1)n/K +1},\ldots, x_{in/K}$ according to that ordering. Then $S_1, \ldots,S_K$ is a $K$-partitioning of the pairs in the $F$-arc-partition $Q$. So pairs in $Q$ are monochromatic, whereas the pairs $(P_1,\ldots,P_{n/2})$ on which a random arc-partition $\Pi$ sampled from ${\cal D}$ is based, might cross between two colors.

Our analysis for the ROABP arc-rank upper bound follows along the lines of the analysis for the arc-rank upper bound given by \cite{DMPY12} for syntactic multilinear formulas. For this analysis we define the set of violating pairs for each color $c$, $V_c(\Pi)$, that is defined as:
$$V_c(\Pi)=\{\Pi_t ~\mid~ |\Pi_t \cup S_c|=1, ~t\in [n/2] \},$$ 
where $\Pi_1, \ldots, \Pi_{n/2}$ are pairs in $\Pi$. The quantity $G(\Pi)=|\{c ~\mid~ |V_c(\Pi)| \ge n^{\frac{1}{1000}}\}|$, representing the number of colors with many violations, is similarly defined. We use the following lemma directly from \cite{DMPY12}:

\begin{lemma} \label{lem:dmpy}
Let $K \le n^{\frac{1}{100}}$, $\Pi$ be the sampled arc-partition, and $G(\Pi)$ be as defined above. Then, we have,
${\sf Pr}_{\Pi \in \mathcal{D}}[G(\Pi) \le K/1000] \le n^{-\Omega(K)}.$
\end{lemma}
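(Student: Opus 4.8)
The statement is imported from~\cite{DMPY12}, so the plan is to reproduce their argument in our notation. The first step is to peel off the irrelevant randomness: a sample $\Pi\sim{\cal D}$ can be generated by drawing i.i.d.\ step types $\theta_1,\ldots,\theta_{n/2-1}$, each uniform on $\{\textsf{left},\textsf{span},\textsf{right}\}$, which already fix the unordered pairing $\{P_1,\ldots,P_{n/2}\}$; the subsequent $\pm$-assignment of the pairs never matters, since the predicate $|\Pi_t\cap S_c|=1$ depends only on $\Pi_t$ as a two-element subset of $V$. If $A_t=\sum_{j<t}\ell_j$ denotes the total leftward displacement before step $t$ (with $\ell_j\in\{2,1,0\}$ the displacement of the \textsf{left}/\textsf{span}/\textsf{right} move), then a \textsf{span}-step at time $t$ outputs $P_{t+1}=\{\,L_t-1,R_t+1\,\}\equiv\{\,n-1-A_t,\;2t-A_t\,\}\pmod n$, whose two endpoints differ by $n-1-2t$.

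With this in hand I would run the combinatorial core. Since $S_1,\ldots,S_K$ are consecutive blocks of $n/K$ elements (in the order on which the process is built) and $0$ is a block boundary, one checks that there is a \emph{fixed} set $T\subseteq\{1,\ldots,n/2-1\}$ of size $|T|\ge n/5$ such that a \textsf{span}-step at any $t\in T$ is guaranteed to produce a \emph{bichromatic} pair: either $n-1-2t\ge n/K$, so its endpoints are too far apart to fit in one block, or $t$ is small and the short arc joining the endpoints runs through the boundary at $0$ (the left endpoint lying in the last block, the right one in the first). By i.i.d.-ness, the number of \textsf{span}-steps falling in $T$ is $\mathrm{Bin}(|T|,1/3)$, so by a Chernoff bound it is at least $n/30$ except with probability $e^{-\Omega(n)}$.

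On that likely event $\Pi$ has at least $n/30$ bichromatic pairs; since $\Pi$ is a perfect matching, each bichromatic pair lies in exactly two of the sets $V_c(\Pi)$ and each monochromatic pair in none, hence $\sum_{c=1}^{K}|V_c(\Pi)|\ge n/15$. For the same reason at most $|S_c|=n/K$ pairs can have exactly one endpoint in $S_c$, so $|V_c(\Pi)|\le n/K$ for every $c$. If $G(\Pi)\le K/1000$ then at most $K/1000$ colours carry $|V_c(\Pi)|\ge n^{1/1000}$, so
\[
\sum_{c=1}^{K}|V_c(\Pi)|\ \le\ \frac{K}{1000}\cdot\frac{n}{K}\ +\ K\cdot n^{1/1000}\ =\ \frac{n}{1000}+K\,n^{1/1000}\ \le\ \frac{n}{1000}+n^{11/1000}\ <\ \frac{n}{15},
\]
for $n$ large, using $K\le n^{1/100}$ --- contradicting the previous bound. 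Thus $\{G(\Pi)\le K/1000\}$ lies inside the Chernoff-failure event, giving ${\sf Pr}_{\Pi\in{\cal D}}[G(\Pi)\le K/1000]\le e^{-\Omega(n)}\le n^{-\Omega(K)}$, the last inequality once more using $K\le n^{1/100}$.

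The step I expect to demand actual care is the ``$T$ has size $\Omega(n)$'' claim: it is exactly here that one uses that the colour classes are \emph{intervals} in the order on which the arc process is built, which in the application means the lemma must be invoked with the arc distribution taken relative to the variable order of the particular ROABP. Carrying out the endpoint arithmetic and the case of the block containing the sweeps' meeting point rigorously is routine but finicky; the wide gap between $n^{1/1000}$ and $K\le n^{1/100}$ is there precisely so that none of this bookkeeping is ever tight, which is also why the bound can be cited verbatim from~\cite{DMPY12}.
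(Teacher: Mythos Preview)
The paper does not prove this lemma; it quotes it verbatim from~\cite{DMPY12} and uses it as a black box. So there is no ``paper's own proof'' to compare against beyond the citation. That said, your sketch has a genuine gap.

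Your reduction from ``$\Omega(n)$ bichromatic pairs'' to ``$G(\Pi)>K/1000$'' is fine: the double-counting $\sum_c|V_c(\Pi)|=2\cdot\#\{\text{bichromatic pairs}\}$ together with the trivial upper bound $|V_c(\Pi)|\le |S_c|=n/K$ does force many colours to have $|V_c|\ge n^{1/1000}$ once $K\le n^{1/100}$. The problem is the step that produces $\Omega(n)$ bichromatic pairs. Your argument for it hinges on the claim that the colour classes $S_1,\ldots,S_K$ are consecutive arcs of the cycle $V=\{0,\ldots,n-1\}$ on which the arc process runs, so that a \textsf{span}-pair of cyclic gap $\ge n/K$ cannot sit inside a single $S_c$. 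But that is not the setting: in this paper the $S_c$'s are consecutive blocks in the \emph{ROABP's} variable order $\sigma$, which is an arbitrary permutation of the cycle order, and in~\cite{DMPY12} the $S_c$'s arise from the structure of a multilinear formula and are likewise arbitrary equal-size subsets of $V$. For such colourings there is no reason a long-gap span pair must be bichromatic, and your ``deterministic set $T$ of guaranteed-bichromatic times'' need not exist at all.

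Your suggested fix---run the arc distribution relative to the ROABP order---does not rescue the argument. The hard polynomial $\hat f$ and Proposition~\ref{prop:DMPYrank} are tied to the \emph{fixed} distribution $\mathcal D$ on the standard cycle; under a re-parametrised distribution $\hat f$ need not be arc-full-rank. Worse, in the $\Sigma\cdot\roabp$ lower bound one must choose a \emph{single} partition $\Pi$ that simultaneously lowers the rank of every summand $f_i$, and each $f_i$ has its own order $\sigma_i$, so there is no single re-parametrisation to appeal to. The lemma really must be proved for an arbitrary balanced $K$-colouring, and the~\cite{DMPY12} proof does this via a more delicate analysis of the arc process than the one you sketched.
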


The following measure is used to compute the arc-rank upper bound for ROABPs.
\begin{definition}(Similarity function)
Let $\varphi$ be a distribution on functions $\mathcal{S} \times \mathcal{S} \to \mathbb{N}$, such that $S$ is the support of the distribution on arc-partitions, $\mathcal{D}$. Let $P,Q$ be arc-partitions sampled independently and uniformly at random from $\mathcal{D}$. Then, $\varphi(Q,P):\mathcal{S} \times \mathcal{S} \to \mathbb{N}$ is the total number of common pairs between two arc-partitions $Q$ and $P$. 
\end{definition}
We assume $Q$ to be the $F$-arc-partition for the ROABP $F$.  For a pair that is not common between $\Pi$ and $Q$, we show both the variables in the pair is in the same partition, $Y$ or $Z$ with high probability. 
\begin{restatable}{theorem}{roabpup}
 \label{thm:roabp-ub}
Under an arc-partition $\Pi$ sampled from $\mathcal{D}$ uniformly at random, if $p \in \mathbb{F}[X]$ is the polynomial computed by a ROABP $P$, then, for the similarity function $\varphi$ and $\delta>0$,
$${\sf Pr}_{\Pi \sim \mathcal{D}}[\varphi(\Pi, Q) \ge n/2 - n^\delta]\le 2^{-o(n)}.$$
\end{restatable}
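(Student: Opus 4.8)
The plan is to feed the combinatorial bound of Lemma~\ref{lem:dmpy} (imported verbatim from~\cite{DMPY12}) into a short counting argument that converts ``many colors carrying many violating pairs'' into ``few pairs shared by $\Pi$ and $Q$''. Fix the ROABP $P$, its variable order $\sigma$, the induced $F$-arc-partition $Q$, and, for a parameter $K$ to be chosen, the contiguous $K$-coloring $S_1,\dots,S_K$ described above, each $S_c$ being a block of $n/K$ consecutive variables in the order $\sigma$. The argument needs $2K\mid n$ and $K\le n^{1/100}$; the divisibility can be arranged by defining the family $\hat f$ along a suitable subsequence of $n$, and affects only constants.

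The one structural fact I would isolate at the outset is that \emph{every pair of $Q$ is monochromatic for the $K$-coloring}: $Q$ joins variables consecutive in $\sigma$, and since $2K\mid n$ the block boundaries of $S_1,\dots,S_K$ fall between consecutive $Q$-pairs rather than splitting one. Hence if a pair $\Pi_t$ of the sampled arc-partition $\Pi$ is \emph{bichromatic}, \ie its two endpoints lie in two different color classes, then $\Pi_t\notin Q$. Writing $\beta(\Pi)$ for the number of bichromatic pairs of $\Pi$, every common pair of $\Pi$ and $Q$ is then a monochromatic pair of $\Pi$, so
\[
\varphi(\Pi,Q)\ \le\ n/2-\beta(\Pi).
\]

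Next I would bound $\beta(\Pi)$ from below on the event that $\Pi$ is well spread out. A bichromatic pair of $\Pi$ lies in exactly two of the sets $V_c(\Pi)$, namely those of the two colors of its endpoints, so $\sum_{c=1}^{K}|V_c(\Pi)|=2\beta(\Pi)$. If $G(\Pi)>K/1000$, \ie more than $K/1000$ colors satisfy $|V_c(\Pi)|\ge n^{1/1000}$, then $\sum_c|V_c(\Pi)|\ge (K/1000)\,n^{1/1000}$, whence $\beta(\Pi)\ge (K/2000)\,n^{1/1000}$; combined with the displayed inequality this gives $\varphi(\Pi,Q)\le n/2-(K/2000)\,n^{1/1000}$ on this event. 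Now pick $K=\Theta(n^{\delta-1/1000})$ large enough that $(K/2000)\,n^{1/1000}>n^{\delta}$; this $K$ still obeys $K\le n^{1/100}$ whenever $\delta<1/100+1/1000$, which holds for every small constant $\delta$ relevant here (say $1/1000<\delta<1/500$, comfortably below the threshold in Theorem~\ref{thm:sumroabp-lb}). On the event $G(\Pi)>K/1000$ we then get $\varphi(\Pi,Q)<n/2-n^{\delta}$.

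Finally I would invoke Lemma~\ref{lem:dmpy}, applicable since $K\le n^{1/100}$: ${\sf Pr}_{\Pi\sim\mathcal{D}}[G(\Pi)\le K/1000]\le n^{-\Omega(K)}$. For $K=\Theta(n^{\delta-1/1000})$ this equals $n^{-\Omega(n^{\delta-1/1000})}=2^{-\Omega(n^{\delta-1/1000}\log n)}=2^{-o(n)}$, and on the complementary event the previous paragraph gives $\varphi(\Pi,Q)<n/2-n^\delta$; hence ${\sf Pr}_{\Pi\sim\mathcal{D}}[\varphi(\Pi,Q)\ge n/2-n^\delta]\le 2^{-o(n)}$, as claimed. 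The genuinely combinatorial content---that a random arc-partition shatters every contiguous block of variables, \ie Lemma~\ref{lem:dmpy}---is inherited from~\cite{DMPY12}, so the only real work is the counting reduction above; I expect the main care to be needed in bookkeeping the exponents $1/1000$, $1/100$, and $\delta$, so that the chosen $K$ simultaneously stays below $n^{1/100}$ and forces an $n^{\delta}$ deficit in $\varphi(\Pi,Q)$.
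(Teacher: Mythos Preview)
Your argument is correct, and it is genuinely different from---and cleaner than---the paper's own proof.

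The paper follows the full Dvir--Malod--Perifel--Yehudayoff machinery: it builds the graph $H(\Pi)$ on the high-violation colours, extracts a large subset of colours with residual degree at least one (Claim~\ref{claim:dmpy12-graph}), and then exploits the \emph{second} stage of randomness in $\mathcal{D}$---the independent $Y/Z$ coin flips within each pair of $\Pi$---to argue that, with high probability, many colour classes $S_c$ are noticeably imbalanced, i.e.\ $|Y_c-|S_c|/2|>n^{1/5000}$. It then asserts (without a very explicit justification) that the event $\{\varphi(\Pi,Q)\ge n/2-n^{\delta}\}$ is contained in the event that every colour class is balanced, and bounds the latter by $n^{-\Omega(K)}$. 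This route is inherited from the original formula lower bound, where the imbalance of $Y_c$ is what directly controls the rank of each sub-formula.

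You bypass all of this by observing that $\varphi(\Pi,Q)$ depends only on the \emph{pairing} underlying $\Pi$, not on the $Y/Z$ coin flips, and that the pairing information alone already forces a deficit: since every $Q$-pair is monochromatic, every bichromatic $\Pi$-pair is automatically absent from $Q$, and a double count $\sum_c|V_c(\Pi)|=2\beta(\Pi)$ plus Lemma~\ref{lem:dmpy} immediately gives $\beta(\Pi)\ge n^{\delta}$ on the high-probability event $G(\Pi)>K/1000$. So you use only the first stage of randomness and only the statement of Lemma~\ref{lem:dmpy}, not the graph $H(\Pi)$ or any hypergeometric anti-concentration. What the paper's longer route buys is a direct handle on $|Y_c|$, which is what one actually needs for the rank bound on each ROABP summand; but for the theorem as stated (about $\varphi$), your counting reduction is both shorter and more transparent. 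One small remark: for $\delta\le 1/1000$ your choice of $K$ does not grow, but this is harmless since the event $\{\varphi(\Pi,Q)\ge n/2-n^{\delta}\}$ is monotone decreasing in $\delta$, so the bound for any fixed $\delta_0\in(1/1000,1/500)$ covers all smaller $\delta$ as well.
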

%\begin{toappendix}
%\roabpup
{\em Proof Outline:} Our argument is the same as \cite{DMPY12}. It is being included here for completeness for the parameters here being somewhat different than \cite{DMPY12}.

In order to analyse the number of common pairs counted by $\varphi$, we consider the $K$-coloring of $F$ and show that under a random arc-partition $\Pi$, the number of crossing pairs are large in number using Lemma \ref{lem:dmpy}. Then, we show, this results in large number of pairs having both elements in $Y$. 
In order to identify the colors with the high number of crossing pairs, a graphical representation of the color sets is used.

\begin{proof} 
\cite{DMPY12} construct the graph $H(\Pi)$, where each vertex is a color $c$ such that $|V_c(\Pi)| \ge n^{\frac{1}{1000}}$, and vertices $c$ and $d$ have an edge connecting them if and only if $|V_c(\Pi) \cap V_d(\Pi)|\ge n^{\frac{1}{1500}}$. We know for any two colors $c,d \in [K]$, $|V_c(\Pi) \cap V_d(\Pi)|\le n^{\frac{1}{1000}}$. So, by definition of $H(\Pi)$, the least degree of a vertex in $H(\Pi)$ is $1$. Using this, \cite{DMPY12} prove the following claim:
\begin{claim}\label{claim:dmpy12-graph}
Let the size of the vertex set of $H(\Pi)$, $V(H(\Pi))$, be $M$. For any subset $U$ of $V(H(\Pi))$ size $N \ge M/2 -1$, there is some color $h_{j+1}, j \in [N-1]$ such that in the graph induced on all vertices except $\{h_1, \ldots,h_j\}$, the degree of $h_{j+1}$ is at least $1$.
\end{claim}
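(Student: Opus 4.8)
The plan is to prove Claim~\ref{claim:dmpy12-graph} by induction on $j$, exploiting the structural fact that every vertex of $H(\Pi)$ has degree at least $1$ together with a counting argument bounding how many edges a single deletion can destroy. First I would recall the setup: $H(\Pi)$ has vertex set of size $M$, every vertex has degree at least $1$ (since for each color $c$ with $|V_c(\Pi)|\ge n^{1/1000}$ the violating pairs of $c$ must intersect the violating set of at least one other such color, because each pair in $V_c(\Pi)$ has its ``other endpoint'' in some block, and the pigeonhole-style bound $|V_c(\Pi)\cap V_d(\Pi)|\le n^{1/1000}$ forces the $\ge n^{1/1000}$ pairs of $V_c(\Pi)$ to be spread over $\ge n^{1/1500}$-sized overlaps with other colors, giving at least one neighbour). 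The key quantitative input is that each vertex $c$ contributes to at most a bounded number of edges — say $D := n^{1/1000}/n^{1/1500}$ — because $|V_c(\Pi)|\le n^{1/1000}$ is partitioned among neighbours each absorbing at least $n^{1/1500}$ pairs, so $\deg(c)\le D$ in the relevant sense. This degree bound is what makes the greedy extraction work.

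Next I would run the induction. Suppose $h_1,\dots,h_j$ have been chosen for some $j\le N-1$, where $N\ge M/2-1$. Let $H_j$ be the subgraph of $H(\Pi)$ induced on $V(H(\Pi))\setminus\{h_1,\dots,h_j\}$, which still has at least $M-j\ge M-(N-1)\ge M/2$ vertices. I want to find a vertex $h_{j+1}$ in $H_j$ (ideally inside $U$) whose degree in $H_j$ is at least $1$. The point is that deleting $j$ vertices removes at most $j\cdot D$ edges from $H(\Pi)$; since $H(\Pi)$ has at least $M/2$ edges (each of the $M$ vertices has degree $\ge 1$, and by handshake the edge count is $\ge M/2$), as long as $j\cdot D < M/2$ there remains an edge in $H_j$, hence a vertex of positive degree. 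I would check that the parameter regime — $K\le n^{1/100}$, $N$ and $M$ on the order of $K$, and $D$ polynomially small in $n$ relative to these — indeed guarantees $j D < M/2$ throughout the range $j\le N-1$; this is a routine comparison of exponents of $n$. To force $h_{j+1}\in U$ one notes $|U|=N\ge M/2-1$ while the complement $V(H(\Pi))\setminus U$ has size at most $M/2+1$, so the surviving positive-degree vertex (or an endpoint of the surviving edge) can be taken inside $U$ after a minor accounting adjustment.

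The step I expect to be the main obstacle is pinning down the exact degree/edge-count bound on $H(\Pi)$ — specifically justifying both that the minimum degree is at least $1$ (which requires carefully tracing how the pairs in $V_c(\Pi)$ distribute across colors and invoking $|V_c(\Pi)\cap V_d(\Pi)|\le n^{1/1000}$ correctly) and that the maximum ``edge-multiplicity'' contribution $D$ of each vertex is small enough. The inductive greedy argument itself is soft once those bounds are in hand; the delicacy is entirely in the constants and the $n$-exponent bookkeeping (the cascade $1/1000, 1/1500, 1/500$ etc.), and in making sure the $N\ge M/2-1$ hypothesis is exactly strong enough to keep a positive-degree vertex inside $U$ for every $j$ up to $N-1$. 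Since this claim is quoted verbatim from \cite{DMPY12}, I would also cross-check that our coloring parameters ($K\le n^{1/100}$, threshold $n^{1/1000}$) match theirs so the proof transfers without change.
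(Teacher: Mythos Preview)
The paper does not give its own proof of this claim; it states it as a fact from \cite{DMPY12} and only records the one structural input that proof needs, namely that every vertex of $H(\Pi)$ has degree at least $1$. So there is no detailed argument in the paper to compare against, but your proposed route diverges from the intended one and has a genuine gap.

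Your argument hinges on a maximum-degree bound $D = n^{1/1000}/n^{1/1500}$, obtained by dividing an upper bound on $|V_c(\Pi)|$ by the edge threshold $n^{1/1500}$. But the defining condition for $c$ to be a vertex of $H(\Pi)$ is $|V_c(\Pi)| \ge n^{1/1000}$, not $\le n^{1/1000}$; the only a priori upper bound on $|V_c(\Pi)|$ is the block size $n/K \ge n^{99/100}$. So the correct degree bound coming from your reasoning is of order $n^{99/100}/n^{1/1500}$, which is astronomically larger than $M \le K \le n^{1/100}$. Your ``routine comparison of exponents'' in fact fails: for $j$ anywhere near $N \approx M/2$ one has $jD \gg M/2$, so the edge-counting induction never gets off the ground. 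The maximum-degree idea is a red herring here.

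The argument from \cite{DMPY12} that the paper is invoking uses only the minimum-degree-$1$ fact. Since no vertex is isolated, every connected component of $H(\Pi)$ has at least two vertices, so there are at most $M/2$ components; a spanning forest therefore has at least $M/2$ edges. Repeatedly picking a leaf $h_{j+1}$ of the current forest and deleting it gives a sequence of length at least $M/2$ (one per forest edge) in which each chosen vertex still has its parent present in $V(H(\Pi))\setminus\{h_1,\dots,h_j\}$, hence degree at least $1$ there. Equivalently, greedily pick any endpoint of a remaining edge. No degree upper bound or exponent bookkeeping is needed; the entire claim is a soft graph fact about graphs without isolated vertices.
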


By Claim~\ref{claim:dmpy12-graph}, we have $U \subseteq V(H(\Pi))$, $U=\{c_1,\ldots, c_{M/2-1}\}$ such that this is the set of colours having high number of crossing pairs common with colors not in $U$. Considering the colors sequentially, given $\Pi$, we first examine the pairs crossing from color $c_1$ to other colors, then $c_2$ and so on. Therefore, to examine the event $E_i$ for color $c_i$, we have to estimate ${\sf Pr}_{\Pi \sim \mathcal{D}}[E_i ~\mid~ E_1, \ldots, E_{i-1}, \Pi]$.

Here, $E_i$ is the event $|Y_{c_i} - |S_{c_i}|/2| \le n^{\frac{1}{5000}}$, equivalently expressed as $|S_{c_i}|/2 - n^{\frac{1}{5000}} \le Y_{c_i} \le |S_{c_i}|/2 - n^{\frac{1}{5000}}$. But for an upper bound, it suffices to analyse the $n^{\frac{1}{1500}}$ crossing pairs from $S_{c_i}$ to $S_{c_j}$ instead of considering the entire set. Let the subset of $Y_{c_i}$ constituted by one end of crossing pairs going to color $c_j$ be $P_{ij}$. Each element $x$ in a crossing pair $P_t=(x,w)$ is a binomial random variable in a universe of size $\ge n^{\frac{1}{1500}=s}$ with probability $1/2$ of being allotted to the subset $Y$ of the universe. This event is independent of how the $c_i$ colored element of other crossing pairs $P_{t'}$ are allotted. So, $|B_{ij}|=b_j$ is a hypergeometric random variable where $B_{ij}$ contains all such $x \in Y$. By the properties of a hypergeometric distribution, ${\sf Pr}_{b_j}[b_j=a] = O(s^{\frac{-1}{2}})=O(n^{\frac{-1}{3000}})$, where $a$ is a specific value taken by the size of $B_{ij}$.

Applying the union bound over all colors $c_j$ for the crossing pairs, and taking $b=\sum_{j\in U \setminus \{i\}} b_j$, we have:
$${\sf Pr}_b[s/2 - n^{\frac{1}{5000}} \le b \le |S_{c_i}|/2 - n^{\frac{1}{5000}}] \le2n^{\frac{1}{5000}}O(n^{\frac{-1}{3000}}) = n^{-\Omega(1)}.$$
Therefore, ${\sf Pr}_{\Pi \sim \mathcal{D}}[E_i ~\mid~ E_1, \ldots, E_{i-1}, \Pi] = n^{-\Omega(\delta)}$.

We want an upper bound for ${\sf Pr}[|Y_c - |S_c|/2| \le n^{\frac{1}{5000}} \forall c \in [K]]$. We have calculated an upper bound for the colors in $[K]$ that were highly connected to each other in $H(\Pi)$. So, we can now estimate the total probability as follows:
\begin{align*}
&{\sf Pr}[|Y_c - |S_c|/2| \le n^{\frac{1}{5000}} \forall c \in [K]] \\
&= {\sf E}[n^{-\Omega(G(P))}~\mid~ G(P)>K/1000] + {\sf E}[n^{-\Omega(G(P))}~\mid~ G(P)\le K/1000] \\
&= {\sf E}[n^{-\Omega(G(P))}~\mid~ G(P)>K/1000] + n^{-\Omega(K)} \text{ by Lemma \ref{lem:dmpy}}\\
&\le n^{-\Omega(K)}.
\end{align*}
If we consider $\delta=1/5000$, then:
\[ {\sf Pr}_{\Pi \sim \mathcal{D}}[\varphi(\Pi, Q) \ge n/2 - n^\delta] \le {\sf Pr}[|Y_c - |S_c|/2| \le n^{\frac{1}{5000}} \forall c \in [K]] \le n^{-\Omega(K)}\]
Now, in Lemma ~\ref{lem:dmpy}, $K\le n^{\frac{1}{1000}}$. 

Hence, ${\sf Pr}_{\Pi \sim \mathcal{D}}[\rank(M(p_\Pi))\ge 2^{n/2-n^\delta}] \le 2^{-cn^{\frac{1}{1000}}\log n}= 2^{-o(n)}$.
\end{proof}

Now, using the above Theorem ~\ref{thm:roabp-ub}, we can prove the lower bound on the size of the sum of ROABP, $s$.

\begin{proof}(of Theorem \ref{thm:sumroabp-lb})
Since the polynomial $f$ is such that each multiplicand is of the form $\lambda_e(x_u+x_v)$, if $x_u,x_v$ are both mapped to the same partition $Y$ or $Z$, it will reduce the rank of the partial derivative matrix by half. Hence, we have the following:
\[{\sf Pr}_{\Pi \sim \mathcal{D}}[\rank(M(f_\Pi))\ge 2^{n/2-n^\delta}] = {\sf Pr}_{\Pi \sim \mathcal{D}}[\varphi(\Pi, Q) \ge n/2 - n^\delta],\]
for some suitable $\delta>0$.
%{\small 
\begin{align*}
{\sf Pr}[{\sf rank}(M(f_\Pi))=2^{n/2}] &\le {\sf Pr}[\exists i\in [s], ~{\sf rank}(M((f_i)_\Pi))\ge 2^{n/2}/s] \\
&\le \sum_{i=1}^s {\sf Pr}[{\sf rank}(M((f_i)_\Pi))\ge 2^{n/2}/s]\\
&\le \sum_{i=1}^s {\sf Pr}[{\sf rank}(M((f_i)_\Pi))\ge 2^{n/2 - n^\delta}] \text{ for some }\delta>0 \\
&\le s \cdot n^{-\Omega(n^{\frac{1}{1000}})}\\
\implies s &= 2^{\Omega(n^{\frac{1}{1000}}\log n)}=2^{\Omega(n^{\frac{1}{500}})}.
\end{align*}
%}
\end{proof}

The difference in computational power of ROABPs and smABPs highlights the power of reads of variables. From their definition, strict-interval ABPs generalise ROABPs by reading an interval of variables in every sub-program instead of reading a subset of variables in a fixed order. However, in the following section, we note that reading in intervals do not lend more computational power, and that ROABPs and Strict-interval ABPs in fact compute the same class of polynomials.  

\section{Strict-Interval ABPs}
\label{sec:sintABP}

A strict-interval ABP, defined in \cite{RR19} (See Definition ~\ref{def:sint-ABP}), is a restriction of the notion of interval ABPs introduced  by \cite{AR16}. In the original definition given by~\cite{RR19},  every sub-program in a strict-interval ABP $P$ is defined on a $\pi$-interval of variables for some order $\pi$, however, without loss of generality,  we assume $\pi$ to be the identity permutation on $n$ variables. Therefore, an interval of variables $[i,j], ~i<j$ here is the set $\{x_i, \ldots,x_j\}$.  In this section we show that strict-interval ABPs are equivalent to  ROABPs upto a polynomial blow-up in size. 

\begin{theorem}
\label{thm:sint-roabp}
The class of strict-interval ABPs is equivalent to the class of ROABPs.
\end{theorem}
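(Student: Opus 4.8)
One direction is immediate: every ROABP is a strict-interval ABP. Indeed, in an ROABP the variables appear in a fixed order $x_1, \ldots, x_n$ along the layers, each in a unique layer, so for any pair of nodes $u, v$ the sub-program $[u,v]_P$ reads variables whose indices form a contiguous interval, and for a concatenation $[u,v]_P, [v,w]_P$ these intervals are disjoint (they lie on opposite sides of the layer containing $v$). So the content of the theorem is the converse: a strict-interval ABP $P$ of size $s$ can be simulated by an ROABP of size $\mathrm{poly}(s)$.

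The plan is to first extract, from the strict-interval structure, a single global order on the variables that is respected by every source-to-sink path. The key observation is condition~2 of Definition~\ref{def:sint-ABP}: for any node $v$ on a path, the interval of variables read before reaching $v$ and the interval read after leaving $v$ are disjoint intervals of $[n]$. I would argue that this forces a consistent left-to-right (or right-to-left) reading of intervals along every path, so that one can define a total order $\pi$ on the \emph{blocks} of variables actually used such that along any $s$-$t$ path the blocks appear in $\pi$-order. Concretely, I would look at the collection of associated intervals $\{I_{uv}\}$ (taken maximal, as the remark after the definition allows) and show they can be linearly arranged: since the interval read up to $v$ and the interval read after $v$ are disjoint for every $v$, the ``prefix interval'' of a path is monotone, which pins down the order.

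Next, given this global order, I would convert $P$ into an oblivious ABP by a layering argument: group the nodes of $P$ by the (unique) interval of variables readable on any $s$-to-that-node path, and within a group order the node by which sub-block is being read. Then insert identity (label-$1$) edges to pad so that each layer reads a single variable, in the global $\pi$-order. Because every variable's occurrences are confined — by the strict-interval / syntactic-multilinearity condition — to a single contiguous stretch of layers, and in that stretch it occurs along disjoint path-segments, one can further collapse so that each variable labels edges between exactly one pair of consecutive layers; this is precisely the ROABP condition. The size blow-up is polynomial because we only add padding layers (at most $n$ of them) and the width never increases beyond $s$.

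The main obstacle I anticipate is the step establishing a \emph{single} consistent global order: a priori different $s$-$t$ paths could traverse the intervals in different relative orders, and condition~2 only constrains concatenations at a common middle node $v$, not arbitrary pairs of intervals. I would handle this by a careful induction on path length, showing that the set of variables read along any prefix of any $s$-$t$ path is always an interval with one endpoint ``anchored'' (growing only on one side as the path extends), using condition~2 at each node to rule out a prefix-interval that grows on both sides. Once that monotonicity is in hand, the order on blocks is forced and the oblivious/read-once reconstruction is routine. A secondary technical point is ensuring the padding does not destroy the polynomial size bound when the same variable could appear in many layers of the original $P$; the strict-interval condition is exactly what guarantees those occurrences are ``parallel'' (along variable-disjoint segments of different paths) rather than ``serial'', so they merge into one ROABP layer without blow-up.
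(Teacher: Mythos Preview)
Your proposal has a genuine gap at exactly the point you flag as the ``main obstacle,'' and your proposed fix does not close it. You aim to extract a \emph{single} global order $\pi$ respected by every $s$--$t$ path, and you plan to secure this by an induction showing that the prefix-interval along any path grows on only one side. That induction is essentially correct and does yield per-path monotonicity: each individual $s$--$t$ path reads the variables in increasing or in decreasing index order. What it does \emph{not} give is that all paths share the \emph{same} direction. A strict-interval ABP can simultaneously contain an ``ascending'' path (reading $x_1,\ldots,x_n$) and a ``descending'' path (reading $x_n,\ldots,x_1$); for instance, from $s$ send one edge labelled $x_1$ and another labelled $x_n$, and continue each to $t$ in the obvious monotone way. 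Both branches satisfy condition~2 of Definition~\ref{def:sint-ABP}, yet no single $\pi$ is compatible with both. Hence the step ``once that monotonicity is in hand, the order on blocks is forced'' fails.

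The paper's proof handles precisely this by classifying every internal node $v$ as \emph{ascending} ($I_{s,v}\preceq I_{v,t}$) or \emph{descending} ($I_{v,t}\preceq I_{s,v}$), proving that no $s$--$t$ path can mix the two types, and then splitting $P$ into two sub-ABPs $P_1$ (ascending nodes only) and $P_2$ (descending nodes only) with $P=P_1+P_2$. Each $P_i$ is then a one-ordered ABP and becomes an ROABP after staggering; reversing one of them puts both in the same variable order, and their parallel sum is an ROABP of size $O(nS)$. Your per-path monotonicity argument is essentially Lemma~\ref{lem:path-order} in disguise; the piece you are missing is the ascending/descending node partition (Lemma~\ref{lem:interval-two-order}) and the reversal trick to reconcile the two directions.
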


The proof of Theorem~\ref{thm:sint-roabp} involves a crucial observation that in a strict-interval ABP, variables are read in at most two orders and the nodes that correspond to paths that read in different orders can be isolated. We start with some observations on intervals in $[1,n]$ and the intervals involved in a strict interval ABP. 

Let $P$ be a strict-interval ABP over the variables $X=\{x_1,\ldots, x_n\}$. For any two nodes $u$ and $v$ in $P$, let $I_{u,v}$ be the interval of variables associated with the sub-program of $P$ with $u$ as the start node and $v$ as the terminal node. For two intervals $I =[a,b], J =[c,d]$ in $[1,n]$, we say  $I \preceq J$, if $b\le c$. Note that any two intervals $I$ and $J$ in $[1,n]$ are comparable under $\preceq$ if and only if  either they are disjoint or the largest element in one of the intervals is the smallest element in the other. This defines a natural transitive relation on the set of all intervals in $[1,n]$. 
% where any two disjoint intervals are comparable to each other.
The following is a useful property of  $\preceq$:
\begin{observation}
\label{obs:int}
Let $I, J$ and $J'$ be intervals over $[1,n]$ such that $I \preceq J$ and $J' \subseteq J$. Then $I \preceq J'$. 
\end{observation}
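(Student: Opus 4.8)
The plan is to prove this purely from the definition of $\preceq$ given just above the statement, namely that for intervals $I=[a,b]$ and $J=[c,d]$ in $[1,n]$ we have $I\preceq J$ iff $b\le c$. First I would write $I=[a,b]$, $J=[c,d]$, and $J'=[c',d']$ with $J'\subseteq J$, which by the definition of containment of intervals means $c\le c'\le d'\le d$. The hypothesis $I\preceq J$ gives $b\le c$. Chaining these two inequalities yields $b\le c\le c'$, i.e. $b\le c'$, which is exactly the condition $I\preceq J'$. That is the whole argument; it is a one-line transitivity-style computation on the left endpoints.

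The only genuine subtlety — and the step I expect to need a sentence of care — is the degenerate case where $J'$ is a single point or where $J'$ shares its left endpoint with $J$ (so $c'=c$): here $b\le c=c'$ still holds, so the conclusion is unaffected, but one should note that $J'\subseteq J$ is being used only through the inequality $c\le c'$ and not through the full nesting. Similarly one should observe that $I$ and $J'$ being comparable under $\preceq$ in the ``disjoint or sharing one endpoint'' sense described in the text is automatic once $b\le c'$, so no extra verification of comparability is required.

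Thus the proof is essentially immediate once the definitions are unwound, and no appeal to the ABP structure or to Definition~\ref{def:sint-ABP} is needed; the observation is a statement about intervals in $[1,n]$ alone. I would present it in two or three lines: unwind $J'\subseteq J$ to $c\le c'$, combine with $b\le c$ from $I\preceq J$ to get $b\le c'$, and conclude $I\preceq J'$.
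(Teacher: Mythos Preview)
Your proposal is correct and follows exactly the same approach as the paper's own proof: write $I=[a,b]$, $J=[c,d]$, $J'=[c',d']$, use $J'\subseteq J$ to get $c\le c'$, combine with $b\le c$ from $I\preceq J$ to obtain $b\le c'$, and conclude $I\preceq J'$. The additional remarks about degenerate cases are fine but unnecessary; the paper omits them.
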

\begin{proof}
Let $I = [a,b], J =[c,d]$ and $J' = [c',d']$. As $I \preceq J$, we have $b \le c$. Further, since $J' \subseteq J$, we have $c\le c'$ and $d'\le d$. Therefore, $b\le c'$ and hence $I \preceq J'$.
\end{proof}
We begin with an observation on the structure of intervals of the sub-programs of $P$.
Let $v$ be a node in $P$. We say $v$ is an {\em ascending} node, if $I_{s,v}\preceq I_{v,t}$ and a {\em descending} node if $I_{v,t}\preceq I_{s,v}$.
\begin{observation}
Let $P$ be a strict-interval ABP and $v$ any node in $P$. Then, $v$ is either ascending or descending and not both.
\end{observation}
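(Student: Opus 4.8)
The claim is that in a strict-interval ABP $P$, every node $v$ satisfies exactly one of $I_{s,v} \preceq I_{v,t}$ or $I_{v,t} \preceq I_{s,v}$. The plan is to use the defining property of strict-interval ABPs (Definition~\ref{def:sint-ABP}) applied to the two sub-programs $[s,v]_P$ and $[v,t]_P$, together with the elementary fact that every $s$-$t$ path through $v$ is syntactic multilinear, so $I_{s,v}$ and $I_{v,t}$ contain disjoint index sets. The key point is that condition~2 of Definition~\ref{def:sint-ABP}, applied to the pair of consecutive sub-programs $[s,v]_P$ and $[v,t]_P$, forces $I_{s,v} \cap I_{v,t} = \emptyset$ as intervals of $[1,n]$. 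Two disjoint intervals are always comparable under $\preceq$: if $I = [a,b]$ and $J = [c,d]$ are disjoint then either $b < c$ (giving $I \preceq J$) or $d < a$ (giving $J \preceq I$). This establishes that $v$ is ascending or descending.

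For the "not both" part, I would argue that if $v$ were both ascending and descending, then $I_{s,v} \preceq I_{v,t}$ and $I_{v,t} \preceq I_{s,v}$ simultaneously. Writing $I_{s,v} = [a,b]$ and $I_{v,t} = [c,d]$, the first gives $b \le c$ and the second gives $d \le a$, so $a \le b \le c \le d \le a$, forcing $a = b = c = d$, i.e. both intervals are the singleton $\{a\}$. But a singleton interval of variables means at most one variable appears in each of $[s,v]_P$ and $[v,t]_P$; more to the point, $I_{s,v} \cap I_{v,t} = \{a\} \neq \emptyset$, contradicting condition~2 of the definition (which requires the associated intervals of the consecutive sub-programs $[s,v]_P, [v,t]_P$ to be disjoint). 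Hence $v$ cannot be both ascending and descending.

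One subtlety I would need to address is the degenerate case where one of the sub-programs $[s,v]_P$ or $[v,t]_P$ reads no variables at all (its polynomial is a scalar, and the associated interval is empty). In that case the empty interval is trivially disjoint from and comparable under $\preceq$ to any interval, so $v$ is still ascending or descending; and whether it could be counted as "both" is a vacuous boundary case which I would handle by convention (say, declaring such a node ascending) or by noting that the empty interval is $\preceq$-related in both directions but this does not affect the subsequent use of the observation. I expect the main obstacle to be precisely this bookkeeping around empty/singleton intervals and making sure the convention "largest intervals with respect to set inclusion satisfying condition~2" (stated just after Definition~\ref{def:sint-ABP}) is invoked correctly so that the intervals $I_{s,v}$, $I_{v,t}$ are well-defined and the disjointness is genuinely forced. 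Beyond that, the argument is a short case analysis on interval endpoints of the kind already carried out in Observation~\ref{obs:int}.
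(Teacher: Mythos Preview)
Your proposal is correct and follows essentially the same approach as the paper: invoke condition~2 of Definition~\ref{def:sint-ABP} on the consecutive sub-programs $[s,v]_P$ and $[v,t]_P$ to obtain $I_{s,v}\cap I_{v,t}=\emptyset$, and then note that two disjoint intervals in $[1,n]$ are $\preceq$-comparable. You are in fact more thorough than the paper, whose proof is a single line and does not explicitly argue the ``not both'' clause or the degenerate empty/singleton cases you flag.
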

\label{obs:ascdesc}
\begin{proof}
Let $I= I_{s,v}$ and $J = I_{v,t}$. Since $P$ is a strict-interval ABP, the intervals $I$ and $J$ are disjoint and hence either $I \preceq J$ or $J \preceq I$ as required.
\end{proof}
Consider any $s$ to $t$ path $\rho$ in $P$. We say that $\rho$ is {\em ascending} if every node in $\rho$ except $s$ and $t$ is ascending. Similarly, $\rho$ is called descending if every node in $\rho$ except $s$ and $t$ is descending.

\begin{lemma}
\label{lem:path-order}
Let $P$ be a strict interval ABP and let $\rho$ any $s$ to $t$ path in $P$. Then either $\rho$ is ascending or descending.
\end{lemma}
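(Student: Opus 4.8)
The plan is to prove Lemma~\ref{lem:path-order} by contradiction: suppose some $s$ to $t$ path $\rho = (s = v_0, v_1, \ldots, v_k = t)$ contains both an ascending internal node and a descending internal node. Since $\rho$ is a path, I can look at consecutive internal nodes along it and find a place where the type switches, so without loss of generality there exist adjacent nodes $u = v_i$ and $w = v_{i+1}$ on $\rho$ (both internal, or with appropriate care at the endpoints) such that one is ascending and the other descending. The key structural fact I would exploit is that for adjacent nodes $u, w$ on $\rho$, the sub-programs satisfy $I_{s,u} \preceq I_{s,w}$ or a containment-type relation coming from the fact that $[s,w]_P$ decomposes through $u$, together with the defining disjointness condition of strict-interval ABPs (Definition~\ref{def:sint-ABP}, condition~2) applied to the consecutive sub-programs $[s,u]_P$ and $[u,w]_P$, and to $[u,w]_P$ and $[w,t]_P$.

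Concretely, here is the chain of reasoning I would carry out. Write $I = I_{s,u}$, $J = I_{u,t}$, $I' = I_{s,w}$, $J' = I_{w,t}$, and let $K = I_{u,w}$ be the interval of the single-edge (or short) sub-program between the consecutive nodes $u$ and $w$. The variables of $[s,w]_P$ are contained in those of $[s,u]_P$ together with those of $[u,w]_P$, so modulo the "largest intervals" convention I would argue $I' \subseteq I \cup K$ with $I \preceq K$ (by condition 2 applied to $[s,u]_P, [u,w]_P$), and symmetrically $J \subseteq K \cup J'$ with $K \preceq J'$. Now suppose $u$ is ascending, i.e.\ $I \preceq J$, and $w$ is descending, i.e.\ $J' \preceq I'$. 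Combining $I \preceq K \preceq J'$ and $J' \preceq I'$ with $I' \subseteq I \cup K$, and using Observation~\ref{obs:int} to push $\preceq$ through the containments, I should derive that the intervals $I, K, J'$ are forced to collapse — for instance $I \preceq J' \preceq I' \preceq$ (something inside $I \cup K$), which by transitivity of $\preceq$ and the fact that comparable intervals are either disjoint or share exactly one endpoint forces one of $K$, $I'$, or $J'$ to be empty or degenerate, contradicting that these are genuine sub-programs reading variables (or contradicting that $u \neq w$ are distinct non-trivial nodes). The symmetric case ($u$ descending, $w$ ascending) is handled identically by reversing the roles of start and terminal.

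The main obstacle I anticipate is being careful about the "largest interval" convention and about degenerate sub-programs: the associated intervals are not unique, a single-edge sub-program $[u,w]_P$ might read no variable at all (so $K$ could be empty), and the endpoints $s, t$ of the path are excluded from the ascending/descending classification. I would need to handle the edge-label-free segments by noting that if $[u,w]_P$ reads no variables then $u$ and $w$ have the same associated intervals on both sides, so the type cannot actually switch there, and then reduce to the case where the switch happens across a variable-reading edge. I would also want to invoke Observation~\ref{obs:ascdesc} (every internal node is exactly one of ascending/descending) to make the "first switch" argument clean. Once the switch is localized to a variable-carrying transition, the transitivity of $\preceq$ on intervals in $[1,n]$ together with disjointness of $I_{s,v}$ and $I_{v,t}$ gives the contradiction, so the lemma follows.

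\begin{proof}[Proof sketch]
Suppose for contradiction that $\rho = (s = v_0, v_1, \ldots, v_k = t)$ is an $s$ to $t$ path containing both an ascending internal node and a descending internal node. By Observation~\ref{obs:ascdesc} every internal node is exactly one of the two types, so walking along $\rho$ we find consecutive internal nodes $u = v_i$, $w = v_{i+1}$ of opposite type; if the single edge from $u$ to $w$ carries no variable then $I_{s,u} = I_{s,w}$ and $I_{u,t} = I_{w,t}$, so $u$ and $w$ have the same type, a contradiction — hence that edge carries a variable, and $K := I_{u,w}$ is a nonempty interval. Condition~2 of Definition~\ref{def:sint-ABP} applied to $[s,u]_P, [u,w]_P$ gives disjointness, hence $I_{s,u} \preceq K$ or $K \preceq I_{s,u}$, and likewise for $[u,w]_P, [w,t]_P$. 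Assume $u$ is ascending and $w$ descending (the other case is symmetric under reversing $s$ and $t$). Then $I_{s,u} \preceq I_{u,t}$ and $I_{w,t} \preceq I_{s,w}$. Using that the variables of $[s,w]_P$ lie in those of $[s,u]_P$ and $[u,w]_P$, so $I_{s,w} \subseteq I_{s,u} \cup K$, and symmetrically $I_{u,t} \subseteq K \cup I_{w,t}$, and pushing $\preceq$ through these containments via Observation~\ref{obs:int}, one obtains a chain $I_{s,u} \preceq K \preceq I_{w,t} \preceq I_{s,w} \preceq I_{s,u} \cup K$. By transitivity of $\preceq$ and the fact that two comparable intervals in $[1,n]$ are either disjoint or meet in a single point, this forces $K$ to be empty, contradicting that the edge from $u$ to $w$ carries a variable. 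Hence no such path exists, and every $s$ to $t$ path in $P$ is ascending or descending.
\end{proof}
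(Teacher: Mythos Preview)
Your overall strategy — find two consecutive internal nodes on $\rho$ of opposite type and derive an interval contradiction via Observation~\ref{obs:int} and transitivity of $\preceq$ — is exactly the paper's. The execution, however, has a genuine gap, and the detour through the edge-interval $K = I_{u,w}$ is where it goes wrong.

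The chain you assert, $I_{s,u} \preceq K \preceq I_{w,t} \preceq I_{s,w} \preceq I_{s,u}\cup K$, is not justified. The first step $I_{s,u}\preceq K$ is fine (from $I_{s,u}\preceq I_{u,t}$ and $K\subseteq I_{u,t}$ via Observation~\ref{obs:int}). But the second step $K \preceq I_{w,t}$ is simply asserted; all you have from Definition~\ref{def:sint-ABP} is disjointness of $K$ and $I_{w,t}$, which gives one of the two directions. In fact the \emph{opposite} direction is what actually follows: since $w$ is descending, $I_{w,t}\preceq I_{s,w}$, and $K\subseteq I_{s,w}$ (as $[u,w]_P$ is a sub-program of $[s,w]_P$), so Observation~\ref{obs:int} yields $I_{w,t}\preceq K$, not $K\preceq I_{w,t}$. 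Your chain then collapses. The final term $I_{s,u}\cup K$ is also problematic: $\preceq$ is only defined between intervals, and a union of two disjoint intervals need not be one; and under the ``largest interval'' convention of the paper, the containment $I_{s,w}\subseteq I_{s,u}\cup K$ you rely on is not guaranteed anyway.

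The paper avoids $K$ entirely. It uses only the two sub-program containments $I_{s,u}\subseteq I_{s,w}$ and $I_{w,t}\subseteq I_{u,t}$. From $I_{s,u}\preceq I_{u,t}$ and $I_{w,t}\subseteq I_{u,t}$ one gets $I_{s,u}\preceq I_{w,t}$; then $I_{w,t}\preceq I_{s,w}$ and transitivity give $I_{s,u}\preceq I_{s,w}$, which is incompatible with $I_{s,u}\subseteq I_{s,w}$ (an interval cannot be $\preceq$ a superset of itself except degenerately). That is the whole argument — no edge interval, no case split on whether the edge carries a variable. Dropping $K$ from your sketch and running this direct containment argument fixes the proof.
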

\begin{proof}
We prove that no $s$ to $t$ path in $P$ can have both ascending and descending nodes. For the sake of contradiction, suppose that $\rho$ has both ascending and descending nodes. There are two  cases.  In the first, there is an edge  $(u,v)$ in $\rho$ such that $u$ is an ascending node and $v$ is a descending node.  Let $I = I_{s,u}, J = I_{u,t}, I' = I_{s,v}$ and $J'=I_{v,t}$.
Since $P_{s,u}$ is a sub-program of $P_{s,v}$, we have $I \subseteq I'$, similarly $J' \subseteq J$.  By the assumption,  we have $I\preceq J$ and $J' \preceq I'$. By Observation~\ref{obs:int}, we have $I \preceq J'$ and $J' \preceq I'$. By transitivity, we have $I \preceq I'$. However, by the definition of $\preceq$, $I$ and $I'$ are incomparable, which is a contradiction. The second  possibility is $u$ being a descending node and $v$ being an ascending node. In this case, $J \preceq I$ and $I' \preceq J'$. Then, by Observation~\ref{obs:int}, we have $J' \preceq I$ as $J' \subseteq J$. Therefore, $J \preceq J'$ by the transitivity of $\preceq$, a contradiction.  This completes the proof.
\end{proof}
Lemma~\ref{lem:path-order} implies that the set of all non-terminal nodes of $P$ can be partitioned into two sets such that there is no edge from one set to the other. Formally:

\begin{lemma}
\label{lem:interval-two-order} Let $P$ be an interval ABP. There exist two  strict-interval ABPs $P_1$ and $P_2$ such that 
\begin{enumerate}
\item All non-terminal nodes of $P_1$ are ascending nodes and all non-terminal nodes of $P_2$ are descending nodes; and
\item $P = P_1 + P_2$. 
\end{enumerate}
\end{lemma}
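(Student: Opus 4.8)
The plan is to build $P_1$ and $P_2$ by classifying each non-terminal node of $P$ according to Observation~\ref{obs:ascdesc} as either ascending or descending, and then using Lemma~\ref{lem:path-order} to argue that these two classes are "closed" in the right sense. Concretely, I would let $A$ be the set of ascending non-terminal nodes, $D$ the set of descending non-terminal nodes; by Observation~\ref{obs:ascdesc} these partition the non-terminal nodes, and $s,t$ are handled separately (they lie in both $P_1$ and $P_2$). The key structural claim is that every edge of $P$ that lies on some $s$-to-$t$ path either has both endpoints in $A\cup\{s,t\}$ or both endpoints in $D\cup\{s,t\}$: this is exactly what Lemma~\ref{lem:path-order} gives, since any edge lies on some $s$-to-$t$ path (we may discard nodes not on any such path, as they contribute nothing to the polynomial), and an ascending path cannot contain a descending internal node and vice versa. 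Hence there is no edge between an internal node of $A$ and an internal node of $D$.

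Given this, I would define $P_1$ to be the sub-ABP induced on $A\cup\{s,t\}$ (keeping all edges of $P$ among these nodes) and $P_2$ the sub-ABP induced on $D\cup\{s,t\}$. First I would check that $P_1$ and $P_2$ are themselves strict-interval ABPs: every sub-program $[u,v]_{P_i}$ is also a sub-program of $P$ (the edge-closure argument ensures no path is "cut"), so the associated-interval and disjointness conditions of Definition~\ref{def:sint-ABP} are inherited directly from $P$. Then I would verify $P = P_1 + P_2$ as polynomials: the output of $P$ is $\sum_{\rho} \wt(\rho)$ over $s$-to-$t$ paths $\rho$; by Lemma~\ref{lem:path-order} each such $\rho$ is either entirely ascending (hence a path in $P_1$) or entirely descending (hence a path in $P_2$), and these two path-sets are disjoint and exhaust all $s$-to-$t$ paths, so the weighted sum splits as claimed. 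One should also note that a single edge $(s,t)$ directly, or a path whose only internal structure is trivial, can be assigned to either side (say $P_1$) without affecting correctness, so the partition of paths is well-defined up to this harmless choice.

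The main obstacle, and the only place needing care, is the edge-closure claim: showing that no edge of $P$ joins an internal ascending node to an internal descending node. This does not follow from Observation~\ref{obs:ascdesc} alone (which only classifies individual nodes), but from Lemma~\ref{lem:path-order}; I would make explicit that if $(u,v)$ were an edge with $u$ internal ascending and $v$ internal descending, then extending it to any full $s$-to-$t$ path $\rho$ (which exists after pruning dead nodes) yields a path containing both an ascending and a descending internal node, contradicting Lemma~\ref{lem:path-order}. A secondary, more bookkeeping-level point is the treatment of $s$ and $t$: they are not classified as ascending or descending, so they must be included in both $P_1$ and $P_2$, and one must check no sub-program constraint is violated by this duplication — but since every sub-program of $P_i$ is literally a sub-program of $P$, nothing new arises. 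Everything else is routine verification against Definition~\ref{def:sint-ABP} and the definition of the polynomial computed by an ABP.
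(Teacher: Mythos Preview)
Your proposal is correct and follows essentially the same approach as the paper: define $P_1$ (respectively $P_2$) as the sub-program induced on the ascending (respectively descending) non-terminal nodes together with $s$ and $t$, then invoke Lemma~\ref{lem:path-order} to conclude that every $s$-to-$t$ path lies entirely in one of the two, yielding $P = P_1 + P_2$. The paper's version is terser and omits the explicit edge-closure argument and the treatment of direct $(s,t)$ edges that you spell out, but the construction and the key use of Lemma~\ref{lem:path-order} are identical.
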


\begin{proof}
Let $P_1$ be the sub-program of $P$ obtained by removing all descending nodes from $P$ and $P_2$ be the sub-program of $P$ obtained by removing all ascending nodes in $P$. By Lemma~\ref{lem:path-order}, the non-terminal nodes in $P_1$ and $P_2$ are disjoint and every $s$ to $t$  path $\rho$ in $P$ is either a $s$ to $t$ path in $P_1$ or a $s$ to $t$ path in $P_2$ but not both. Thus $P = P_1+ P_2$.
\end{proof}

Next we show that any strict-interval ABP consisting only of ascending or only of descending nodes can in fact be converted into an ROABP. 
\begin{lemma}
\label{lem:asc-roabp}
Let $P$ be a strict-interval ABP consisting only of ascending nodes or only of descending nodes. Then the polynomial computed by $P$ can also be computed by a ROABP $P'$ of size polynomial in ${\sf size}(P)$. The order of variables in $P'$ is $x_1,\dots, x_n$ if $P$ has only ascending nodes and $x_n, \ldots, x_1$ if $P$ has only descending nodes.   
\end{lemma}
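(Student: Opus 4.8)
The plan is to handle the ascending case; the descending case follows by reversing the order of variables (equivalently, by relabelling $x_i \mapsto x_{n+1-i}$ and applying the ascending argument). So assume every non-terminal node of $P$ is ascending, i.e. for every node $v$, every variable in $I_{s,v}$ has index strictly smaller than every variable in $I_{v,t}$. The key structural fact I would extract is a \emph{monotonicity of the separating index along any $s$-$t$ path}: if I define, for each node $v$, the threshold $\tau(v) = \max\{\,i : x_i \in I_{s,v}\,\}$ (with $\tau(s) = 0$), then along any edge $(u,v)$ we have $\tau(u) \le \tau(v)$, because $P_{s,u}$ is a sub-program of $P_{s,v}$ so $I_{s,u} \subseteq I_{s,v}$, and moreover the variable (if any) read on the edge $(u,v)$ lies in $I_{u,t}$, hence has index $> \tau(u)$, and also lies in $I_{s,v}$, hence has index $\le \tau(v)$. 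Thus reads along a path happen in non-decreasing order of index: this is exactly the ``oblivious-like'' behaviour we want, except that $P$ is not yet layered by index and a given variable $x_i$ may still be read on many different edges scattered across $P$.

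Next I would turn this into a genuine ROABP by a \emph{layering/relabelling construction}. Introduce $n+1$ index-levels $0, 1, \dots, n$. For each node $v$ of $P$ and each level $j$ with $\tau(v) \le j \le \rho(v)$, where $\rho(v) = \min\{\,i : x_i \in I_{v,t}\,\} - 1$ (the largest index that could still be ``current'' at $v$), create a copy $(v, j)$. Edges of $P$ are simulated as follows: an edge $(u,v)$ of $P$ labelled by a field constant becomes edges $(u,j) \to (v,j)$ for the appropriate $j$; an edge labelled $x_i$ becomes an edge $(u, i-1) \to (v, i)$; and within a fixed node $v$ we add ``free'' $\epsilon$-type edges $(v,j) \to (v, j+1)$ labelled $1$ that let the level counter advance without reading. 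One checks that (i) every $s$-$t$ path of $P$ lifts to a unique $s$-$t$ path in the new program with the same weight, and conversely, so the polynomial is preserved; (ii) by the monotonicity above, every lifted path reads $x_i$ only on an edge going from level $i-1$ to level $i$, so the new program is oblivious with variable order $x_1, \dots, x_n$ and each variable appears on exactly one level-transition — i.e.\ it is a ROABP; and (iii) the number of nodes is at most $(n+1)\cdot{\sf size}(P)$ and the number of edges is polynomially bounded, after which the standard conversion of a multi-edge layered DAG into a layered ABP (subdividing parallel edges, absorbing $1$-labelled edges) keeps the size polynomial in ${\sf size}(P)$.

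I would then finish by verifying syntactic multilinearity and the read-once property explicitly: along any $s$-$t$ path of $P$, the strict-interval condition forces the sets of variable-indices read in disjoint ``stretches'' to be disjoint (indeed $I_{s,v}$ and $I_{v,t}$ are disjoint at every $v$), so no variable is read twice on a path; combined with all paths reading in increasing index order, this is precisely the ROABP definition. The main obstacle I anticipate is (iii) — controlling the size blow-up while making the construction genuinely \emph{oblivious} in the strict sense that each \emph{layer} reads at most one variable: a priori two different edges at the same level $j$ could read different variables, or a constant and a variable, so I will need to route the $x_i$-reads through a canonical ``reading gadget'' at level $i$ (a small constant-size sub-ABP inserted between level $i-1$ and level $i$ whose only variable-labelled edges are labelled $x_i$), and argue this gadget-insertion only multiplies the size by a constant. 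Once the gadget is in place, obliviousness and the read-once property are immediate, and the polynomial-size bound follows by the level count $n+1$ times the per-level node count, which is at most ${\sf size}(P)$.
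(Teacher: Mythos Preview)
Your proposal is correct and follows essentially the same route as the paper: both arguments first establish that along any $s$--$t$ path the variable labels appear in increasing index order (you via the threshold function $\tau$ and monotonicity, the paper via a two-line contradiction), and then convert the resulting ``one-ordered'' smABP into a ROABP. The only difference is that the paper outsources this second step by citing the known equivalence between one-ordered smABPs and ROABPs \cite{Jan08,JQS10}, whereas you spell out the standard staggering/layering construction explicitly; your gadget worry in (iii) is unnecessary, since by your own construction every variable-labelled edge crossing from level $i-1$ to level $i$ is already labelled $x_i$.
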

\begin{proof}
We consider the case when all non-terminal nodes of $P$ are ascending nodes.   Let $\rho$ be any $s$ to $t$ path in $P$.  We claim that the edge labels in $\rho$ are according to the order $x_1,\ldots, x_n.$ Suppose that there are edges $(u,v)$ and $(u',v')$ occurring in that order in $\rho$ such that $(u,v)$ is labelled by $x_i$ and $(u',v')$ is labelled by $x_j$ with $j<i$. Let $I' = I_{s,u'} $ and $J' = I_{u',t}$.  Since $i \in I'$, $j \in J'$ and $I' \cap J' = \emptyset$, it must be the case that $J'\preceq I'$ and hence $u'$ must be a descending node, a contradiction. This establishes that $P$ is an one ordered ABP. By the equivalence between one ordered ABPs and ROABPs (\cite{Jan08}, \cite{JQS10}), we conclude that the polynomial computed by $P$ can also be computed by a ROABP of size polynomial in the size of $P$. 

The argument is similar when all non-terminal nodes of $P$ are descending. In this case, we  have $i<j$ in the above argument and hence $I' \preceq J'$, making $u'$ an ascending node leading to a contradiction. This concludes the proof.
\end{proof}
A permutation $\pi$ of $[1,n]$ naturally induces the order $x_{\pi(1)}, \ldots, x_{\pi(n)}$. The {\em reverse} of $\pi$ is the order $x_{\pi(n)}, x_{\pi(n-1)}, \ldots, x_{\pi(1)}$.  Since branching programs are layered, any multilinear polynomial computed by a ROABP where variables occur in the order given by $\pi$ can also be computed by a ROABP where variables occur in the reverse of $\pi$.
\begin{observation}
\label{obs:reverse}
Let $P$ be a ROABP where variables occur in the order induced by a permutation $\pi$. The polynomial computed by $P$ can also be computed by a ROABP of same size as $P$ that reads variables in the reverse order corresponding to $\pi$.
\end{observation}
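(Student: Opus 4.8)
The plan is to obtain the desired program by literally reversing the branching program $P$. Recall that a \roabp{} $P$ with layers $L_0=\{s\},L_1,\dots,L_m=\{t\}$ computes $\sum_{\rho}\wt(\rho)$, where the sum ranges over all $s$-to-$t$ paths $\rho$ and $\wt(\rho)$ is the product of the edge labels along $\rho$. I would define $P'$ to be the layered DAG whose underlying graph is that of $P$ with every edge reversed (each edge retaining its label), and whose layers are reindexed by $L'_i := L_{m-i}$, so that $t$ is the source of $P'$ and $s$ is its sink.

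First I would verify correctness: the map $\rho \mapsto \rho^{\mathrm{rev}}$ is a bijection between $s$-to-$t$ paths of $P$ and source-to-sink paths of $P'$, and it preserves the multiset of edge labels traversed; since $\mathbb{F}[X]$ is commutative, $\wt(\rho^{\mathrm{rev}})=\wt(\rho)$, so $P'$ computes the same polynomial as $P$. Next I would check that $P'$ is again a \roabp{}: the set of edges between consecutive layers $L'_i$ and $L'_{i+1}$ of $P'$ is exactly the set of edges between $L_{m-i-1}$ and $L_{m-i}$ of $P$ (with orientation flipped), so each layer of $P'$ still carries at most one variable and each variable still labels the edges of exactly one layer. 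Moreover, if $x_{\pi(j)}$ labels the edges out of level $j-1$ in $P$, then in $P'$ it labels the edges out of level $m-j$; hence $P'$ reads variables in the order $x_{\pi(n)},x_{\pi(n-1)},\dots,x_{\pi(1)}$, the reverse of $\pi$. Finally, $P'$ has the same node set as $P$, so $\mathrm{size}(P')=\mathrm{size}(P)$.

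I do not anticipate a genuine obstacle here: the only points needing care are that the oblivious and read-once properties survive edge reversal (they do, since these are statements about which single variable appears on which layer of edges, and reversal permutes the layers without merging them) and that commutativity is used when equating path weights. Equivalently, one may phrase the whole argument via layer matrices: writing $f = u^{\mathsf T} M_1(x_{\pi(1)})\cdots M_n(x_{\pi(n)}) v$, where $M_i$ is the (width$\times$width) matrix of edge labels of layer $i$, we get $f = f^{\mathsf T} = v^{\mathsf T} M_n(x_{\pi(n)})^{\mathsf T}\cdots M_1(x_{\pi(1)})^{\mathsf T} u$, which is precisely a \roabp{} reading the variables in the reversed order and of the same width, hence of the same size.
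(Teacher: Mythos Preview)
Your proposal is correct and takes essentially the same approach as the paper: reverse all edges of $P$, swap the roles of $s$ and $t$, and use the obvious bijection between $s$-to-$t$ paths in $P$ and in the reversed program to conclude that the same polynomial is computed. You supply more detail than the paper (explicitly checking that obliviousness and read-once-ness survive, and giving the transpose-of-layer-matrices reformulation), but the underlying idea is identical.
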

\begin{proof}
Let $P'$ be the ROABP obtained by reversing the edges of $P$ and swapping the start and terminal nodes. Since $P$ is a layered DAG, there is a bijection between the set of all $s$ to $t$ paths in $P$ and the set of all $s$ to $t$ paths in $P'$, where the order of occurrence of nodes and hence the edge labels are reversed. This completes the proof.  
\end{proof}

The above observations immediately establish Theorem~\ref{thm:sint-roabp}.
\begin{proof}[Proof of Theorem~\ref{thm:sint-roabp}]
Let $P$ be a strict-interval ABP of size $S$ computing a multilinear polynomial $f$. By Lemma~\ref{lem:interval-two-order} there are strict interval ABPs $P_1$ and $P_2$ such that $P_1$ has only ascending non-terminal nodes and $P_2$ has only descending non-terminal nodes such that $f = f_1 + f_2$ where $f_i$ is the polynomial computed by $P_i$, $i\in \{1,2\}$. By  Lemma~\ref{lem:asc-roabp} and Observation~\ref{obs:reverse}, $f_1$ and $f_2$  can be computed by a ROABPs    that read the variables in the order $x_1,\ldots, x_n$. Then $f_1 + f_2$ can also be computed by an ROABP. 
% of size at most the sum of the sizes of  ROAB$f_1$ and $P_2$.
It remains to bound the size of the resulting ROABP. Note that ${\sf size }(P_i) \le S$. A ROABP for $f_i$ can be obtained by staggering the reads of $P_i$ which blows up the size of the ABP by a factor of $n$ (\cite{Jan08}, \cite{JQS10}). Therefore size of the resulting ROABP is at most $2nS \le O(S^2)$.
\end{proof}

Using Theorem~\ref{thm:sint-roabp}, we can design the following white-box PIT for strict-interval ABPs.
\begin{corollary}
\label{cor:sintABP-wbPIT} Given a strict-interval ABP $P$ of size $s$, we can check whether the polynomial computed by $P$ is identically zero in time $O(\text{poly}(S))$. 
\end{corollary}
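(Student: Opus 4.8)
The statement to prove is Corollary~\ref{cor:sintABP-wbPIT}: given a strict-interval ABP $P$ of size $s$, we can test whether it computes the identically zero polynomial in time polynomial in $s$.

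The plan is to reduce the problem to white-box PIT for ROABPs, for which a deterministic polynomial-time algorithm is already known in the literature (Raz--Shpilka). First I would invoke Theorem~\ref{thm:sint-roabp}: given the strict-interval ABP $P$ of size $s$, the constructive proof of that theorem (via Lemma~\ref{lem:interval-two-order}, Lemma~\ref{lem:asc-roabp}, and Observation~\ref{obs:reverse}) produces an equivalent ROABP $P'$ of size $O(s^2)$ computing the same polynomial. Crucially, each step of that construction is algorithmic: decomposing $P$ into its ascending part $P_1$ and descending part $P_2$ just requires, for each node $v$, computing the associated intervals $I_{s,v}$ and $I_{v,t}$ (which can be read off by a linear scan of which variables appear in the relevant sub-programs) and checking whether $v$ is ascending or descending; converting a one-ordered ABP to an ROABP by staggering reads is a syntactic transformation; and reversing an ROABP is trivial. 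So the whole reduction runs in $\text{poly}(s)$ time and outputs $P'$ of size $\text{poly}(s)$.

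Next I would apply the known deterministic white-box PIT algorithm for ROABPs to $P'$: $P$ computes the zero polynomial if and only if $P'$ does, and whether an ROABP computes zero can be decided in time polynomial in its size (this is classical — one can, e.g., evaluate the layer-by-layer matrix product symbolically, since an ROABP of width $w$ and $n$ layers computes a polynomial whose coefficient vectors live in a space of dimension at most $w$ per layer, so identity testing reduces to linear algebra over $\mathbb{F}$, or one can use the Raz--Shpilka deterministic PIT for non-commutative / read-once oblivious models). Since $P'$ has size $O(s^2)$, this runs in $\text{poly}(s)$ total time, giving the claimed bound.

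I do not anticipate a real obstacle here — the content of the corollary is essentially a corollary of Theorem~\ref{thm:sint-roabp} together with an off-the-shelf ROABP PIT result. The only point requiring a modicum of care is to confirm that the reduction in Theorem~\ref{thm:sint-roabp} is genuinely \emph{constructive} and efficient rather than merely an existence statement; I would state explicitly that identifying ascending versus descending nodes and staggering reads are polynomial-time operations on the ABP, so that the pipeline ``strict-interval ABP $\to$ ROABP $\to$ PIT'' is polynomial-time end to end.
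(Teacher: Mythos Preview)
Your proposal is correct and follows essentially the same route as the paper: convert the strict-interval ABP to an ROABP via Theorem~\ref{thm:sint-roabp} and then invoke the Raz--Shpilka deterministic white-box PIT for non-commutative/ROABP models. Your additional remark that the reduction in Theorem~\ref{thm:sint-roabp} is constructive and runs in polynomial time is a welcome clarification that the paper leaves implicit.
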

\begin{proof}
The proof follows from Theorem~\ref{thm:sint-roabp} and the polynomial time white-box \PIT algorithm given by \cite{RS05} for non-commutative ABPs, since the variables in $X$ are read only once, in a fixed order, in a ROABP.
\end{proof}

%Now, since we observed that reading once in a particular order in an ABP is the same as reading intervals of variables along every path and in every sub-program, it is interesting to inspect if we can obtain good lower bounds against ROABPs. It is known that the class of read-once formulas is contained in the class of ROABPs. It is interesting to investigate whether ROABPs have more computational power than a sum of read-once formulas. 

The notion of intervals of variables corresponding to every sub-program can be applied to formulas in the form of Interval Formulas, where every sub-formula corresponds to an interval. In the following section we explore how such a model can be used to generalize the model of ROFs, and in what ways it differs from ROFs.

\section{Interval Formulas}
\label{sec:intformula}
We saw that  strict-interval ABPs have the same computational power as ROABPs despite being seemingly a non-trivial generalization. It is naturally tempting to guess that a similar generalization of ROFs might yield a similar result. 
However, we observe that such a generalization of ROFs yields a class different from  ROFs.  

We introduce interval formulas as a generalization of read-once formulas. An interval on variable indices, $[i,j],~i<j$, is an interval corresponding to the set of variables $X_{ij} \subseteq X =\{x_1, \ldots, x_n\}$, where $X_{ij}= \{x_p \mid x_p \in X, ~i\le p\le j\}$. Polynomials are said to be defined on the interval $[i,j]$ when the input variables are from the set $X_{ij}$. When there is no ambiguity, we refer to $X_{ij}$ as an interval of variables $[i,j]$. 
Gates in a read-once formula $F$ can also be viewed as reading an interval of variables according to an order $\pi$ on the variables \ie there is a permutation $\pi \in S_n$ such that every gate $v$ in $F$ is a sub-formula computing a polynomial on a $\pi$-interval of variables. Thus, interval formulas are a different generalization of read-once formulas where every gate $v$ in the formula $F$ reads an interval of variables in a fixed order.
 
%Thus, a generalization of read-once formulas can be obtained by insisting that in a formulsa $F$, every gate $v$ computes a polynomial on an interval of variables. We state, more formally, that: 
%Hence polynomials can be defined to be taking inputs from an interval $[i,j]$ when the variable support of the polynomial is the set $S$. We define interval formulas as follows:  
We formally define interval formulas 
as follows:  

\begin{definition}(Interval Formulas)
An arithmetic formula $F$ is an {\em interval formula} if for every gate $g$ in $F$, there is an interval $[i,j], ~i<j$ such that $g$ computes a polynomial in $X_{ij}$ and for every product gate $g=h_1 \times h_2$, the intervals corresponding to $h_1$ and $h_2$ must be non-overlapping. 
\end{definition}
Thus, if a product gate $g$ in $F$ defined on an interval $I=[i,j]$ takes inputs from gates $g_1,\ldots,g_t$, then the gates $g_1,\dots,g_t$ compute polynomials on disjoint intervals $[i,j_1], [j_1+1,j_2], \ldots, [j_{t-1}+1,j]$ respectively, where $\forall p, ~j_p<j_{p+1}$ and $i \le j_p \le j$. If $g_1,g_2$, defined on intervals $I_1,~I_2$ are input gates to a sum gate $g'$, then the interval $I$ associated with $g'$ is $I=I_1 \cup I_2$.

A quick observation is that interval formulas are different from ROFs:
\begin{proposition}
\label{prop:interval-rof}
The set of all polynomials computable by interval  formulas is different from that of ROFs
\end{proposition}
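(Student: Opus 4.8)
The plan is to prove Proposition~\ref{prop:interval-rof} by exhibiting a separation in \emph{both} directions: first a polynomial computable by an interval formula but not by any ROF, and second a polynomial computable by an ROF but not by any interval formula. For the first direction, the natural candidate is a polynomial whose monomial structure forces a variable to be read more than once in \emph{any} formula, yet whose variables can be grouped into contiguous intervals. A clean choice is something like $f = x_1 x_2 + x_2 x_3$ (so $x_2$ must be read twice in any formula, hence $f\notin\rof$), while $f$ is trivially an interval formula: the sum gate has children on intervals $[1,2]$ and $[2,3]$, which is allowed since the non-overlap condition is only imposed at \emph{product} gates. More robustly, to make the separation genuinely about expressive power rather than a single gadget, I would take $f = \sum_{i=1}^{n-1} x_i x_{i+1}$, invoke Volkovich's characterization of ROPs (\cite{Vol16}) to certify $f\notin\rof$, and observe that $f$ is an interval formula by summing the intervals $[i,i+1]$.

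For the second (and more interesting) direction, I would pick an ROF $F$ whose underlying read-once order is not realizable as a contiguous interval order after \emph{any} relabelling of the variables: concretely, a product gate $g = g_1 \times g_2$ where $\var(g_1)$ and $\var(g_2)$ interleave in index order, e.g. $g = (x_1+x_3)(x_2+x_4)$, so that no permutation $\pi$ of $[1,n]$ makes both $\var(g_1)$ and $\var(g_2)$ into $\pi$-intervals simultaneously. Since our definition fixes $\pi$ to the identity (intervals are literally $\{x_i,\dots,x_j\}$), it suffices to note that $\{x_1,x_3\}$ and $\{x_2,x_4\}$ are not intervals; but for a relabelling-independent statement one needs the stronger combinatorial fact that no single linear order of the four variables has both pairs consecutive — indeed any order of $\{1,2,3,4\}$ with $1,3$ adjacent puts $2$ either outside $\{1,3\}$ or inside, and checking the few cases shows $2,4$ cannot then also be adjacent. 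This is the step I expect to require the most care: one must argue that the \emph{subformula-interval structure} of an ROF computing such a polynomial is forced (the ROF is essentially unique by Volkovich's uniqueness of ROF representations up to the structure of the polynomial), so that no interval formula — not even one using a clever $\pi$ — can reproduce it.

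The skeleton of the argument is therefore: (1) state the two witness polynomials $g_1 = \sum_i x_ix_{i+1}$ and $g_2 = (x_1+x_3)(x_2+x_4)$ (padded with fresh variables if one wants them on the same variable set); (2) show $g_1$ is an interval formula by explicitly giving the formula, and $g_1\notin\rof$ via \cite{Vol16}; (3) show $g_2\in\rof$ trivially; (4) show $g_2$ is not computed by any interval formula by arguing that any formula for $g_2$ must, at its top product gate, split the variable set into the two ``color classes'' $\{x_1,x_3\}$ and $\{x_2,x_4\}$ (this uses that the partial-derivative / factorization structure of $g_2$ is rigid), and these classes cannot both be intervals under any fixed order. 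The main obstacle is making step (4) airtight: I would lean on the fact that $g_2$ is an irreducible-up-to-its-two-factors multilinear polynomial whose unique factorization into variable-disjoint factors is $(x_1+x_3)$ and $(x_2+x_4)$, so any interval formula computing it has a product gate realizing exactly this partition, contradicting the interval requirement. A lighter-weight alternative, if one only wants the statement for the identity order as literally defined, is to skip the relabelling discussion entirely and just observe $\{x_1,x_3\}$ is not an interval — but I would state the stronger version for robustness and relegate the all-orders check to a short combinatorial remark.
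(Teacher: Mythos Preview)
Your proposal contains two genuine errors.

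First, your witness for ``interval but not ROF'', namely $x_1x_2+x_2x_3$, \emph{is} an ROF: it factors as $x_2(x_1+x_3)$, so $x_2$ need not be read twice. The paper instead uses $x_1x_2+x_2x_3+x_1x_3$, which is known not to be a read-once polynomial by~\cite{Vol16}, and is trivially an interval formula as a sum of monomials. Your ``more robust'' candidate $\sum_{i=1}^{n-1}x_ix_{i+1}$ may well fail to be an ROF for $n\ge 4$, but your stated reason (``$x_2$ must be read twice'') is not a proof, and certainly fails for $n=3$.

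Second, and more seriously, your entire second direction is false. Interval formulas are \emph{universal} for multilinear polynomials: any sum-of-monomials representation is an interval formula, because each monomial $x_{i_1}\cdots x_{i_k}$ (with $i_1<\cdots<i_k$) is a product of singleton intervals $[i_j,i_j]$, which are pairwise disjoint, and the top sum gate carries no disjointness constraint. In particular $(x_1+x_3)(x_2+x_4)=x_1x_2+x_1x_4+x_2x_3+x_3x_4$ is computed by an interval formula. Your step~(4) breaks precisely here: you assume any interval formula for $g_2$ must realise the factorisation at a top product gate, but nothing in the definition forces a product gate at the root. The paper in fact remarks on universality immediately after the proposition, and accordingly proves only the one (correct) containment $\rof\subsetneq\text{interval}$, which already suffices for ``different''.
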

\begin{proof}
By~\cite{Vol16}, the polynomial $x_1x_2+x_2x_3 + x_1x_3$ is not an ROF. However, the expression $x_1x_2+x_2x_3 + x_1x_3$ is itself an interval formula.
\end{proof}

 In fact, interval formulas are universal, since any sum of monomials can be represented by an interval formula. 
 
 Our next observation is that  the polynomial $f_{\sf PRY}$ defined in Section~\ref{sec:sumrof} can be computed  by an interval formula.
\begin{restatable}{proposition}{pryinterval}
\label{prop:PRY-int} The polynomial family $f_{\sf PRY}$ is computable by  an interval formula of polynomial size.
\end{restatable}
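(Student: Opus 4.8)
The plan is to construct the interval formula directly from the factorization $f_{\sf PRY} = f_\ry(B_1)\cdot f_\ry(B_2)\cdots f_\ry(B_{n/r})$, using two features of the construction: the blocks $B_i = \{x_{(i-1)r+1},\dots,x_{ir}\}$ partition $\{x_1,\dots,x_n\}$ into \emph{consecutive} intervals of the ordering, and $r=\Theta(1)$ is a constant. Throughout I work over the field $\mathbb{G}=\mathbb{F}(W)$ as in Proposition~\ref{prop:RYrank}, so the $W$-variables are scalars and play no role in the interval structure.

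First I would show that each block contributes a constant-size interval formula. Since $|B_i|=r=O(1)$, the polynomial $f_\ry(B_i)\in\mathbb{G}[B_i]$ has at most $2^r=O(1)$ monomials, each square-free in the variables of $B_i$. A single monomial $x_{p_1}x_{p_2}\cdots x_{p_k}$ with $p_1<p_2<\cdots<p_k$, written as the left-associated product $(\cdots((x_{p_1}\cdot x_{p_2})\cdot x_{p_3})\cdots)$, is an interval formula: the gate computing $x_{p_1}\cdots x_{p_m}$ is associated with the interval $[p_1,p_m]$, and its product with $x_{p_{m+1}}$ pairs the intervals $[p_1,p_m]$ and $[p_{m+1},p_{m+1}]$, which are non-overlapping because $p_m<p_{m+1}$. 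Summing the $O(1)$ monomials (a sum gate carries no disjointness requirement) gives an interval formula of constant size for $f_\ry(B_i)$; alternatively one may simply invoke the universality of interval formulas for sums of monomials noted after Proposition~\ref{prop:interval-rof}.

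Next I would glue the blocks together: express $f_{\sf PRY}$ as the left-associated product of the $n/r$ block formulas. The gate computing $f_\ry(B_1)\cdots f_\ry(B_m)$ is associated with an interval contained in $[1,mr]$, while the block formula for $f_\ry(B_{m+1})$ is associated with an interval contained in $[mr+1,(m+1)r]$; these are non-overlapping since $mr<mr+1$, so the combined object is an interval formula. Its size is $\tfrac{n}{r}\cdot 2^{O(r)}+O(\tfrac{n}{r}) = O(n)$ because $r$ is a constant, which is polynomial in $n$, completing the argument.

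I expect the construction to be routine; the one thing to get right conceptually is that the \emph{natural} recursive formula coming from Definition~\ref{def:RYpoly} is \emph{not} an interval formula, since the product $(1+x_ix_j)\cdot f_{i+1,j-1}$ has one factor on $[i,j]$ and the other on the overlapping sub-interval $[i+1,j-1]$. This is precisely why I expand each block into monomials (or, alternatively, rewrite $(1+x_ix_j)f_{i+1,j-1}$ as $f_{i+1,j-1}+x_i\cdot(x_j\cdot f_{i+1,j-1})$, which does respect the interval condition) — a move whose cost is hidden in the $O(1)$ per block only because $r$ is constant.
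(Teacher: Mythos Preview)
Your proposal is correct and follows essentially the same approach as the paper: expand each constant-variate block $f_{\sf RY}(B_i)$ into its sum-of-monomials form (which is an interval formula of constant size), then multiply the blocks, whose associated intervals are pairwise disjoint. Your write-up is in fact more careful than the paper's terse version, and your remark that the native recursion $(1+x_ix_j)\cdot f_{i+1,j-1}$ violates the interval condition is a useful clarification not made explicit there.
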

%\begin{toappendix}
%\pryinterval
\begin{proof}
Recall that $f_{\sf PRY}(X) = f_{\sf RY}(B_1) \cdot f_{\sf RY}(B_2)\cdots f_{\sf RY}(B_{n/r})$. Since each of the $f_{\sf RY}(B_i)$ is a constant variate polynomial and the sum of product representation of any multilinear polynomial is an interval formula by definition, we have that $f_{\sf RY}(B_i)$ is computable by an interval formula of constant size. This $f_{\sf PRY}(X)$ has a polynomial size interval formula.
\end{proof}
%\end{toappendix}
It is not known if every ROF can be converted to a ROF of logarithmic depth. However, we argue, in the following that interval formulas can be depth-reduced efficiently.

%\subsection{Depth Reduction}
We have the following depth reduction result for general arithmetic formulas given by \cite{Bre74}, who showed that depth of any arithmetic formula can be reduced by allowing its size to be increased by a  polynomial factor.
\begin{theorem}\label{thm:brent74}
\cite{Bre74} Any polynomial $p$ computed by an arithmetic formula of size $s$ and depth $d$, can also be computed by a formula of size $\text{poly}(s)$ and depth $O(\log s)$.
\end{theorem}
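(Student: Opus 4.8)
The plan is to follow Brent's classical divide-and-conquer strategy. Let $F$ be a formula of size $s$ computing $p$. First I would locate a \emph{balanced gate}: a gate $v$ in $F$ whose rooted sub-formula $F_v$ has size between $s/3$ and $2s/3$. Such a gate is found by walking from the root, at each step descending into the child carrying the larger sub-formula; the sub-formula size starts at $s$, never drops by more than (essentially) a factor of two in one step, so the first gate at which it becomes at most $2s/3$ is still at least (about) $s/3$. Because $F$ is a tree, the subtree $F_v$ contributes to $p$ through a single occurrence, so replacing $F_v$ by a fresh variable $y$ yields a formula computing a polynomial affine in $y$; writing $q$ for the polynomial computed by $F_v$ we get $p = A \cdot q + B$, where $B = p|_{y=0}$ is computed by $F$ with $F_v$ replaced by $0$ (size at most $s - |F_v| + O(1) \le 2s/3 + O(1)$), and $A$ is the product of the sibling sub-formulas hanging off the multiplication gates on the root-to-$v$ path (total size at most $s - |F_v| \le 2s/3$), while $|F_v| \le 2s/3$.

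Next I would recurse: depth-reduce the three formulas for $A$, $B$, and $q$ — each of size at most $2s/3 + O(1)$ — and reassemble them with one multiplication gate and one addition gate. If $d(s)$ denotes the depth of the output on an input formula of size $s$, this gives $d(s) \le d(2s/3 + O(1)) + O(1)$ with base case $d(O(1)) = O(1)$, hence $d(s) = O(\log s)$. For the size, $\Sigma(s) \le 3\,\Sigma(2s/3 + O(1)) + O(1)$, which solves to $\Sigma(s) = s^{O(1)}$, i.e. polynomial in $s$, as required; note the bound does not depend on the original depth $d$.

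I expect the main obstacle to be the careful bookkeeping that makes this recursion close. One has to guarantee that all three recursive subproblems have size bounded by a fixed constant fraction of $s$ — this is precisely why one needs the size of $F_v$ sandwiched both below by $s/3$ (so that the contexts $A$ and $B$ shrink) and above by $2s/3$ (so that $q$ shrinks) — and one has to ensure that the $O(1)$ additive slack introduced whenever a sub-formula is contracted to a single leaf does not accumulate over the $\Theta(\log s)$ levels of recursion (a slightly generous constant in the recursion, or rounding $s$ up by a fixed amount, handles this). A secondary point is to pin down the gate model: with the fan-in-two convention of Section~\ref{sec:prelims} the balanced-gate walk is immediate, and the affine-in-$y$ decomposition and the whole argument go through verbatim, but one should remark that it extends unchanged to unbounded-fan-in $+$ and $\times$ gates as well.
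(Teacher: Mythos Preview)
Your proposal is correct and is the standard Brent argument. Note, however, that the paper does not actually supply its own proof of this statement: Theorem~\ref{thm:brent74} is stated with a citation to \cite{Bre74} and left unproved. The paper does sketch Brent's construction inside the proof of Theorem~\ref{thm:intf-depthred}, and there the phrasing differs from yours in two cosmetic ways: it appeals to the tree-separator lemma rather than to the explicit greedy walk from the root into the larger child, and it writes the linear coefficient as $f_1 = f'\!\mid_{y=1} - f'\!\mid_{y=0}$ rather than directly as the product of the sibling sub-formulas hanging off the multiplication gates on the root-to-$v$ path. Your formulation is in fact the tidier one for closing the size recursion, since it makes the bound on $A$ immediate without the extra factor of two coming from the subtraction.
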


We know that  this reduction preserves multilinearity. However, we don't know if Theorem~\ref{thm:brent74} can be modified to preserve the read-$k$ property. We show that the depth reduction algorithm given by Theorem~\ref{thm:brent74} preserves the interval property.

\begin{restatable}{theorem}{depthred}
\label{thm:intf-depthred}
Let $f \in \mathbb{F}[X]$ be a polynomial computed by an interval formula $F$ of size $s$ and depth $d$. Then $f$ can also be computed by an interval formula of size $\text{poly}(s)$ and depth $O(\log s)$.
\end{restatable}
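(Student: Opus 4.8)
The plan is to follow Brent's depth-reduction argument (Theorem~\ref{thm:brent74}) essentially verbatim, but to track the interval structure through the recursion and verify that every formula produced along the way is still an interval formula. Recall that Brent's proof works by: given a formula $F$ of size $s$ computing $f$, one finds a gate $g$ in $F$ of moderate subtree size (between $s/3$ and $2s/3$, say), writes $f = A \cdot f_g + B$ where $f_g$ is the polynomial at $g$ and $A, B$ are polynomials not involving the subtree at $g$ (here $A$ and $B$ are obtained from the ``context'' of $g$ in $F$), and then recurses on the three formulas for $f_g$, $A$, and $B$, each of which has size at most $2s/3 + O(1)$. One then reassembles $f$ with $O(1)$ extra depth at the top, giving the recurrence $D(s) \le D(2s/3) + O(1)$, hence $D(s) = O(\log s)$, with a polynomial blow-up in size.

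First I would recall the standard fact that in Brent's decomposition, the subformulas for $A$, $B$, and $f_g$ are all \emph{subformulas or "pruned" subformulas} of the original $F$: the formula for $f_g$ is literally the subtree rooted at $g$, and the formulas for $A$ and $B$ are obtained from $F$ by deleting the subtree at $g$ and, along the path from the root to $g$, either replacing $g$'s contribution by a formal placeholder or by a constant. The key structural point to establish is that each of these operations preserves the interval property. For the subtree at $g$: if $F$ is an interval formula then every gate of $F$, in particular $g$ and all its descendants, already satisfies the interval conditions, so the subtree rooted at $g$ is itself an interval formula on the interval $I_g$ associated to $g$. For $A$ and $B$: these are built from the gates lying on or hanging off the root-to-$g$ path; at each sum gate on the path the interval is just a union of the children's intervals (no constraint to check), and at each product gate $h = h_1 \times h_2$ on the path, one child's subtree is replaced (by the placeholder/constant which occupies the interval $I_g$ or an empty interval) while the other child is untouched, so the disjointness of the two children's intervals is inherited from $F$. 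Thus $A$ and $B$ are interval formulas on subintervals of the root interval, and the placeholder variable, when it is later substituted by the recursively-reduced interval formula for $f_g$, sits exactly in the interval $I_g$ which is disjoint from the interval of the "other side" at every product gate along the path --- precisely what is needed for the substitution to yield an interval formula again.

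The main obstacle, and the step that requires care, is the reassembly: after recursively depth-reducing the interval formulas for $A$, $B$, and $f_g$, we form $f = A \cdot f_g + B$, and we must check that this top-level combination respects the interval conditions. The product $A \cdot f_g$ is fine because $\mathrm{var}(A)$ lies in the union of intervals disjoint from $I_g$ (inherited from the product gates along the path, as noted above), so the two operands occupy disjoint intervals; and the outer sum $(A \cdot f_g) + B$ imposes no constraint. One subtlety is that $A$ may be a sum of several pieces each multiplied into $f_g$ at different product gates along the path --- in the cleanest version one handles this by induction on the path, so that at each step one multiplies by a single factor and adds a single summand, each step being a single product (disjoint intervals) or single sum (no constraint). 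I would therefore phrase the induction so that the recurrence unwinds one path-gate at a time, carrying the invariant "the current formula is an interval formula whose associated interval is a subinterval of $[1,n]$ and contains (or is disjoint from, as appropriate) $I_g$." Once this invariant is maintained, the size recurrence $\mathrm{size}(s) \le O(1) + \mathrm{size}(2s/3) + \mathrm{size}(2s/3) + \mathrm{size}(2s/3)$ (counting the three recursive calls, exactly as in Brent's original analysis) gives $\mathrm{poly}(s)$ size, and the depth recurrence $\mathrm{depth}(s) \le O(1) + \mathrm{depth}(2s/3)$ gives $O(\log s)$ depth, completing the proof.
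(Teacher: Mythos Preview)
Your proposal has a real gap at the reassembly step. You claim that the product $A \cdot f_g$ is a valid interval-formula product because ``the two operands occupy disjoint intervals,'' but $\var(A)$ need not be an interval at all: the multiplicative siblings along the root-to-$g$ path can lie on \emph{either side} of $I_g=[i,j]$, so in general $\var(A)\subseteq [1,i-1]\cup[j+1,n]$ with both parts nonempty. The smallest interval containing $\var(A)$ then strictly contains $I_g$, and the product gate $A\times f_g$ violates the non-overlap condition in the definition of interval formulas. Your suggested patch, ``unwind one path-gate at a time,'' does keep every intermediate gate on a genuine interval (since the intervals $I_{v_0}\supseteq I_{v_1}\supseteq\cdots\supseteq I_g$ along the path are nested), but it reassembles the formula with depth equal to the length of the root-to-$g$ path, which can be $\Theta(s)$; so you lose the $O(\log s)$ depth bound that the whole argument is supposed to deliver.

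The paper fixes exactly this point: it splits the coefficient $A=f_1$ as a product $f^{(1)}\cdot f^{(2)}$, where $f^{(1)}$ collects the multiplicative siblings whose intervals lie in $[1,i-1]$ and $f^{(2)}$ those in $[j+1,n]$. Then $f^{(1)}\cdot f_g$ is a product over the disjoint intervals $[1,i-1]$ and $[i,j]$, and $(f^{(1)}\cdot f_g)\cdot f^{(2)}$ is a product over the disjoint intervals $[1,j]$ and $[j+1,n]$, so the assembled formula is again an interval formula with only $O(1)$ extra depth. An analogous left/right split is applied to $B=f_0$. You should therefore recurse on (up to) five pieces rather than three; the size and depth recurrences are the same up to constants, and the rest of your outline goes through.
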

%\begin{toappendix}
%\depthred
\begin{proof}
We know that the underlying structure of any arithmetic formula is a tree. The proof by Brent crucially uses the fact that by the {\em tree-separator lemma} \cite{Chu89}, we are guaranteed that there exists a tree-separator node $g$ such that the sub-tree $\Phi$ of a formula $\Phi'$ of total size $s$, rooted at the node $g$, has size $\le 2s/3$.

The construction by \cite{Bre74} proceeds as follows. We replace the gate $g$ by a new formal variable $y$. Let the resulting polynomial computed by $F$ be $f'(x_1,\ldots,x_n,y)$, where $f(x_1,\ldots,x_n)=f'(x_1,\ldots,x_n,g)$ under the new substitution $y=g$. As $f'$ is linear in $y$, we have 
$$ f'(x_1,\ldots,x_n,y) = yf_1(x_1,\ldots,x_n)+ f_0(x_1,\ldots,x_n),$$ where $f_0 = f'\mid_{y=0}$ and $f_1= f'\mid_{y=1} - f'\mid_{y=0}$. Thus, $f_0, f_1$ can be computed by multilinear formulas of size less than $\text{size}(F)$. Now, recursively obtaining small-depth formulas for $f_1,f_0$, we obtain a $O(\log s)$ depth formula computing $f$.

However, the above construction does not necessarily preserve the interval property, since the intervals of variables on which $f_0,f_1$ and $g$ are defined, can be overlapping. We overcome this problem by expressing $f_0,f_1$ as products of polynomials over disjoint intervals, each of the intervals being disjoint to the interval corresponding to $g$.

We assume, without loss of generality, that the interval formula $F$ corresponds to the interval $[1,n]$. Let the interval corresponding to $g$ be $I_g=[i,j], ~i<j$. Now, by definition of $f_1$ and $f_0$, they are defined on the same interval of variables. We consider the intervals $I_0, I_1$ such that $I_0 \cup I_1 = [1,n]\setminus [i,j]$, $I_0=[j+1,n]$ and $I_1=[1,i-1]$. We express both $f_0,f_1$ as products of two polynomials on $I_0$ and $I_1$ respectively. As $f_1$ and $g$ are multiplicatively related in $F$, we show that $f_1= f_{1,1}\times f_{1,0}$ where $f_{1,1}$ is a polynomial on the interval $I_1$ and $f_{1,0}$ is a polynomial on the interval $I_0$.

%If we replace the polynomial computed by $\Phi$, $g(x_1,\dots,x_n)$ by a variable $y$ that takes the value of the output of $\Phi$ on inputs $x_1,\dots,x_n$, we can define a polynomial $p'$ such that $p'(x_1,\dots,x_n,y)=p(x_1,\dots,x_n)$. In Brent's reduction, we see that $p'=p_1y + p_0$, where $p_1= p'|_{y=1} - p'|_{y=0}$, $p_0 = p'|_{y=0}$, and the tree computing $p'$ thus has depth bound by $3 + \mbox{depth}(y)$. Replacing $p$ with $p'$ inductively, we obtain a $O(\log s)$ depth arithmetic formula.
We consider the root to leaf ($g$) path $\rho$ in the original formula $F$ containing the node $g$. All the paths meeting $\rho$ at a sum gate represent polynomials additively related to $y$ \ie contributing towards the computation of $f_0$ and not $f_1$. For $f_1$, we will analyze only the paths meeting $\rho$ at product gates. Let us consider a product gate on $\rho$ computing $h_1 \times h_2$, such that $h_2$ lies on $\rho$. Since $I$ is contained in the interval corresponding to $h_2$, the interval corresponding to $h_1$, $I_{h_1}$ must be either fully contained in $I_1$ or $I_0$. 
\subparagraph*{Constructing an interval formula for $f_1$:}
We ignore all sum gates on $\rho$ computing $p_1+p_2$, with $p_2$ on $\rho$, by substituting $p_1$ to zero. The resulting formula is $F'$.
%, by definition of $f_1, g$
In any product gate computing $h_1 \times h_2$, where $h_2$ is on $\rho$, if $I_{h_1} \subset I_0$, we substitute $h_1$ by $1$. We also substitute $g$ by $1$. The remaining formula $F'_1$ computes the polynomial $f^{(1)}$. 

We repeat this process above, but this time, we substitute $h_1$ by $1$ only when $I_{h_1}\subset I_1$. This remaining formula $F_2'$ computes $f^{(2)}$. By definition of $f_1$, $f_1=f^{(1)}\cdot f^{(2)}$. The interval corresponding to $F_1'$ is contained in $I_1$, the interval corresponding to $F_2'$ is contained in $I_0$.   
\subparagraph*{Constructing an interval formula for $f_0$:} We ignore all product gates on $\rho$ computing $h_1 \times h_2$, with $h_2$ on $\rho$, by substituting $h_1$ by $1$. The resulting formula is $\hat{F}$.

In any sum gate computing $p_1 + p_2$, where $p_2$ is on $\rho$, if $I_{p_1} \subset I_0$, we substitute $p_1$ by $0$. We also substitute $g$ by $0$. The remaining formula $\hat{F}_1$ computes the polynomial $p^{(1)}$. 

We repeat this process from the beginning, but substitute $p_1$ by $0$ only when $I_{p_1}\subset I_1$. This remaining formula $\hat{F}_2$ computes $p^{(2)}$. By definition of $f_0$, $f_0=p^{(1)}+ p^{(2)}$. The interval corresponding to $\hat{F}_1$ is contained in $I_1$, the interval corresponding to $\hat{F}_2$ is contained in $I_0$.  

Hence, we obtain $f=f^{(1)}f^{(2)}g+p^{(1)}+p^{(2)}$. The recursive relation for calculating depth is as follows:  $\text{depth}(F)= \text{depth}(g)+2 \implies \text{depth}(s)= \text{depth}(2s/3) + 2$, which yields a total depth of $O(\log s)$ for $F$.
\end{proof}

\bibliographystyle{plainurl}% the mandatory bibstyle
\bibliography{allref}
\end{document}